\newcommand{\img}{\text{im}}
\newcommand{\wt}{\text{wt}}
\newtheorem{thm}{Theorem}
\newtheorem{dfn}[thm]{Definition}
\newtheorem{lemma}[thm]{Lemma}
\newtheorem{conjecture}[thm]{Conjecture}
\newtheorem{proposition}[thm]{Proposition}
\begin{document}
\title{Improved energy barrier in higher-dimensional hypergraph product codes}

\author{Guangqi Zhao}
\affiliation{Centre for Engineered Quantum Systems, School of Physics,
University of Sydney, Sydney, NSW 2006, Australia}
\email{guangqi\_zhao@outlook.com}

\begin{abstract}
Single-shot error correction outperforms conventional approaches by requiring only one round of stabilizer measurements for decoding, even in the presence of measurement errors. This capability relates to the soundness and confinement property of codes, which provides an energy barrier lower bound. Earlier research established a confinement property for 3D hypergraph product codes [Quintavalle et al. 2021 PRX Quantum], yielding an energy barrier lower bound for these codes. In this work, by analyzing the structure of logical operators, we show an improved energy barrier lower bound for higher-dimensional hypergraph product (HHGP) codes with low-density parity check (LDPC) property. Our bound exceeds results derived from soundness and confinement, and unlike standard hypergraph product codes, these higher dimensional variants can possess macroscopic energy barriers even when the underlying classical codes lack this property. Specifically, our results show that the energy barrier of LDPC HHGP codes is lower bounded by the distance of the underlying classical codes. This bound is tight if the underlying classical codes exhibit system size-dependent distances but constant energy barriers, like 3D and 4D toric codes.
\end{abstract}

\maketitle

\section{Introduction}

Quantum error correction is crucial for fault tolerant quantum computing, yet it introduces significant overhead. Within the stabilizer codes \cite{gottesman1997stabilizer} implemented through the quantum circuit model~\cite{nielsen2001quantum}, a significant portion of overhead stems from the stabilizer measurements required to detect and identify errors. Specifically, along the quantum circuit, stabilizer measurements and decoding must be performed at regular intervals to maintain the integrity of the code state. Moreover, to overcome measurement errors, conventional protocols require multiple rounds of stabilizer measurements for each decoding process, with the number of rounds typically scaling with the code distance~\cite{shor1996fault,divincenzo1996fault}.

In response to this overhead, single-shot error correction emerges as a promising protocol by requiring only one round of stabilizer measurements per decoding process~\cite{bombin2015single}, even in the presence of measurement errors. Although this approach still leaves some residual error after each decoding, the residual error remains bounded. Such controlled error propagation is sufficient for fault-tolerant quantum computation.

Single-shot error correction requires quantum codes with specialized structures. Bomb\'{i}n first demonstrated that codes with confinement property enable single-shot correction~\cite{bombin2015single}. Campbell later proposed that good soundness is also sufficient for single-shot error correction~\cite{campbell2019theory}. He also showed that single-shot error correction can be realized for any code by measuring a carefully chosen set of stabilizers. Further advancing the theory, Quintavalle et al. provided a more general definition of confinement, and proved that confinement is sufficient for single-shot decoding in the case of adversarial errors~\cite{Quintavalle2021single}.

So far, codes shown to support single-shot error correction include the 3D gauge color code~\cite{bombin2015single,Ben2016nc}, 3D hypergraph product codes (only for either X or Z noise, depending on code structure)\cite{kubica2019cellular,vasmer2021cellular,Quintavalle2021single,Higgott2023improved}, 4D hypergraph product codes\cite{niko2017local,campbell2019theory,zeng2019higher}, 4D hyperbolic codes~\cite{Breuckmann_2022}, quantum expander codes~\cite{fawzi2020constant}, and quantum Tanner codes~\cite{gu2024single}.

Among these codes, higher-dimensional hypergraph product (HHGP) codes~\cite{zeng2019higher} are an important family of quantum codes that extend the original hypergraph product construction~\cite{2014-Tillich} into higher dimensions. While the standard hypergraph product combines only two classical codes to form a quantum code, this higher-dimensional approach incorporates additional classical codes. These additional structure provide meta-checks for X and Z stabilizers, effectively validating the stabilizers themselves, thereby providing an intuitive foundation for the single-shot error correction capability these codes exhibit.

Single-shot error correction shares fundamental connections with self-correcting memory~~\cite{ben2016rmp,terhal2015quantum}, which requires the system to preferentially stay within the code space at finite temperature. A thorough analysis of self-correction requires evaluation of free energy dynamics. While entropic effects play a crucial role in determining the system's stability~\cite{plischke1994equilibrium,bravyi2013,Bravyi.2011}, a macroscopic energy barrier constitutes a necessary condition for self-correction. The term ``macroscopic'' means that energy barrier scales with system size.

Previously, the author established a lower bound for the energy barrier of hypergraph product codes with low-density parity check (LDPC) property~\cite{zhao2024on}. In that case, the quantum energy barrier is lower bounded by the energy barriers of the underlying classical codes. Thus, constructing quantum codes with macroscopic energy barriers necessitates classical codes with the same property. However, finding the classical codes with macroscopic energy barriers remains challenging, raising the question of how to construct quantum codes that exhibit macroscopic energy barriers when the classical codes do not.

The HHGP codes indeed meet this criterion. In this work, we will show that the energy barrier of LDPC HHGP codes is lower bounded not only by the energy barriers of the underlying classical codes (as in standard hypergraph products), but also by the distances of these underlying classical codes. Thus to construct quantum codes with macroscopic energy barrier, one can just use classical codes that have large distances.

Notably, previous research has demonstrated that HHGP codes exhibit good soundness and confinement, making them single-shot correctable against adversarial errors~\cite{campbell2019theory,Quintavalle2021single}. These soundness and confinement properties induce a lower bound on the energy barrier. However, we demonstrate that this bound is not optimal. Our approach, based on analyzing logical operator structures, provides a tighter characterization of the energy landscape for HHGP codes.

To guide the reader, our paper is structured as follows. In Sec.~\ref{sec:preliminaries}, we review key concepts including soundness, confinement, energy barrier and HHGP codes. In Sec.~\ref{sec:confinement_and_energy_barrier}, we formally demonstrate how confinement induces a lower bound on energy barriers and use the confinement results from Ref.~\cite{Quintavalle2021single} to establish an energy barrier lower bound for 3D hypergraph product codes. Then, in Sec.~\ref{sec:tensor_product_code_and_its_energy_barrier}, we provide an energy barrier lower bound for tensor product codes, as formally stated in Lemma~\ref{lemma:energy_barrier_of_tensor_product_codes}. This lower bound serves as a key insight for proving the energy barrier lower bound of LDPC HHGP codes. Finally, by analyzing the structure of logical operators of HHGP codes and applying Lemma~\ref{lemma:energy_barrier_of_tensor_product_codes}, we derive energy barrier lower bounds for 3D and 4D hypergraph product codes in Sec.~\ref{sec:energy_barrier_of_3d_hgp} and Sec.~\ref{sec:energy_barrier_of_4d_hgp}, respectively. These bounds, presented in Theorem~\ref{thm:energy_barrier_3d_hgp} and Theorem~\ref{thm:energy_barrier_4d_hgp}, relate directly to the distances of the underlying classical codes.

\section{Preliminaries}
\label{sec:preliminaries}

\subsection{Quantum LDPC codes}

Stabilizer codes, conceptualized by Daniel Gottesman in 1996~\cite{gottesman1997stabilizer}, encode quantum information in a protected subspace of a larger Hilbert space. Specifically, these codes store quantum states within the $+1$ co-eigenspace of a set of $m$ commuting Pauli operators, $\mathcal{S} = \{S_1, S_2, \ldots, S_m\}$, defined on an $n$-qubit Hilbert space $\mathcal{H}_2^{\otimes n}$. Each operator $S_i \in \{I, X, Y, Z\}^{\otimes n}$ consists of a tensor product of the Pauli operators, with the requirement that all these operators commute with each other. $\mathcal{S}$ excludes the operator $-I$ to ensure a non-trivial code space. Logical operators are defined as the elements of $\mathcal{C}(S) \backslash S$.

Calderbank-Shor-Steane (CSS) codes \cite{calderbank1996good,steane1996error} are an important family of stabilizer codes with special structural properties. Their stabilizer generators are exclusively either X-type or Z-type Pauli operators. This separation allows all X-type stabilizers to be represented by a parity check matrix $H_X$, while all Z-type stabilizers are represented by a parity check matrix $H_Z$. Consequently, the parity check matrix for the quantum CSS code takes the form
\begin{eqnarray}
H = \left(\begin{array}{cc}
0 & H_Z \\
H_X & 0
\end{array}\right),
\end{eqnarray}
with the condition $H_XH_Z^T = 0$ ensuring that all stabilizers commute with each other.

Quantum LDPC codes are stabilizer codes with sparse parity check matrices (see \cite{zunaira2015,Breuckmann2021ldpc} for good reviews). The sparsity is defined by two parameters: $w_c$, the maximum weight amongst all stabilizer generators, and $w_q$, the maximum number of stabilizer generators associated with any single qubit. Formally, a code is classified as LDPC if both $w_c$ and $w_q$ remain constant as the code size increases, specifically $w_c, w_q=O(1)$.

Quantum LDPC codes are promising because they often provide lower overhead and can be designed for fault-tolerance~\cite{gottesman2014faulttolerant}. Recent research has led to several families of quantum LDPC codes with enhanced parameters~\cite{freedman2002z2,evra2022decodable,kaufman2021new,hastings2021fiber,panteleev2021quantum,Breuckmann_2021,bacon2017sparse}, with some efforts even achieving good quantum LDPC codes~\cite{Panteleev.2021,leverrier2022quantum,dinur2022good}.

Various methods exist for constructing quantum LDPC codes. From a mathematical perspective, the hypergraph product construction~\cite{2014-Tillich,zeng2019higher} represents a well-established approach, providing a systematic way to construct quantum codes from any classical codes. Alternative methods such as two-block group-algebra (2BGA) codes~\cite{wang2023abelian,lin2024twoblock} offer different advantages,  like suitability for two-dimensional layouts~\cite{bravyi2024bbcode}, making them useful for practical quantum computing architectures.

\subsection{Energy barrier of codes}

Following Ref.~\cite{Bravyi2009nogo},  we present a formal definition of the energy barrier for stabilizer codes. Let $\mathcal{C}$ be the code subspace of a stabilizer code, defined in terms of the stabilizer group $S$. The code subspace can be viewed as the ground state subspace of a Hamiltonian of the form $\hat{H} = \sum_{i=1}^{m}(I-s_i)/2$, where $\{s_1, \ldots, s_m\} \subset S$ is a set of stabilizer generators.

For an operator $P$ in the Pauli group $\mathcal{P}$, the energy of the state $P|\psi\rangle$ is given by $\bra{\psi}P^\dag \hat{H} P\ket{\psi} = \epsilon(P)$. Here $\ket{\psi}$ is any ground state with $\bra{\psi}\hat{H}\ket{\psi} = 0$, and $\epsilon(P)$ is the energy cost of $P$. This is the number of $s_i$s that anti-commute with $P$, i.e., $\epsilon(P) = | \{i: s_iP = -Ps_i\}|$. Equivalently, one may define the energy barrier in terms of the parity check matrix $H$ of the stabilizer code. Let $v(P)$ be the binary (bit-string) representation of a Pauli $P$, then
\begin{equation}
    \epsilon(P) = \text{wt}(Hv(P)).
\end{equation}

A sequence $P_0, P_1, \ldots, P_t$ from the Pauli group $\mathcal{P}$ forms a \emph{path} from $P_0$ to $P_t$ if for each index $i$, the operators $P_i$ and $P_{i+1}$ differ at no more than one qubit, we call it local error condition. The notation $w(P_0, P_t)$ represents the collection of all such paths from $P_0$ to $P_t$. For $ r \in w(P_0, P_t)$, $\epsilon_{\max }(r)$ denotes the highest energy along path $r$, i.e., $\epsilon_{\max}(r) = \max_{P_i \in r}\epsilon(P_i)$, as the energy barrier of $E$ along the path $r$.

The minimum energy associated with a Pauli $P$ is the smallest value of $\epsilon_{\text{max}}$ across all possible paths from $I$ to $P$. This is the energy barrier of $P$, denoted as $\Delta(P)$,
\begin{eqnarray}
  \Delta(P) = \min_{r\in w(I,P)}\epsilon_{\max}(r).
\end{eqnarray}

The energy barrier of the quantum code is the minimum energy barrier over the set of nontrivial logical operators.
\begin{dfn}
   Let $S$ be a stabilizer group and $L(S)$ be the set of nontrivial logical operators. The energy barrier is  
   \begin{eqnarray}
      \Delta(H) := \min \{\Delta(\ell): \ell \in L(S)\}.
   \end{eqnarray}
   \label{dfn:quantum_energy_barrier}
\end{dfn}
This is the smallest energy the environment has to overcome to enact a logical operation on the encoded qubit. We can similarly define the energy barrier of classical codes by only considering the path formed by Pauli-Xs. We shall denote this energy barrier also as $\Delta(H)$, where $H$ in this case is the parity check matrix of the classical code.

There are two main obstacles for computing the energy barrier of quantum codes. First, numerous paths exist for implementing any logical operator, creating a vast solution space to explore. Second, in quantum codes, any stabilizer can be multiplied with a logical operator to produce an equivalent logical operator, while for those equivalent logical operators, the energy barriers may differ. That is to say, for a logical operator $L$ and stabilizer $S$, usually  $\Delta(L) \neq \Delta(LS)$. This complicates the analysis as one must consider the full equivalence class of logical operators.

Quantum LDPC codes offer significant advantages for energy barrier computation by reducing analytical complexity. As shown in Ref.~\cite{zhao2024on}, in LDPC codes, two logical operators that are equivalent under stabilizers have identical energy barriers, provided at least one of their energy barriers equals or exceeds $w_c w_q$, the product of sparsity parameters. With $w_c w_q=O(1)$, this property yields a powerful simplification: if any non-trivial logical operator has an energy barrier of $\Omega(1)$, all equivalent logical operators share the same energy barrier.

Therefore, to determine the energy barriers of quantum LDPC codes, one needs only analyze a fixed complete set of logical operators. Once the energy barrier is established for this complete set, the energy barrier of the code follows directly from these results.

\subsection{Soundness}

Soundness originated in the study of locally testable codes (LTCs)~\cite{blum1990self,arora1994probabilistic,rubinfeld1996robust,friedl1995some}. LTCs are a class of error-correcting codes that allow efficient verification of whether a string is a valid codeword or is far from the code space. The soundness of a code, denoted by $R(\delta)$, measures the likelihood that a randomly selected local constraint is violated by a string that is at Hamming distance at least $\delta n$ from the code space, where $n$ is the codeword length. The concept of locally testability has deeply influenced fields like PCP, combinatorial optimization, property testing, program verification, and cryptography.

Aharonov and Eldar introduced quantum locally testable codes (QLTCs)~\cite{aharonov2015quantum}, with interest surging after Eldar and Harrow showed~\cite{eldar2017local} that QLTCs with constant soundness, locality, and relative distance could construct Hamiltonians lacking low-energy trivial states—addressing the NLTS conjecture~\cite{freedman2014quantum}. QLTCs are closely related to the quantum PCP conjecture \cite{aharonov2013guest}, a fundamental problem in quantum complexity theory related to the quantum analog of the classical PCP theorem.

Roughly speaking, the soundness property ensures that high-weight errors produce high-weight syndromes. For stabilizer codes, soundness can be characterized as follows. Consider a code $C$ defined on $n$ qubits, determined by a generating set $\mathcal{G}$ where each generator has support over $w$ qubits (representing the stabilizer size). For any error $E$ acting on the code with a syndrome weight of at least $\delta n$, a randomly chosen generator $g \in \mathcal{G}$ fails to commute with $E$ with probability at least $R(\delta)$. This property ensures that errors can be detected probabilistically by examining only a limited number of stabilizers. Additionally, this soundness property induces a macroscopic energy barrier~\cite{rakovszky2023physics}.

To analyze single-shot property of quantum codes, Campbell provided the following alternative definition of soundness~\cite{campbell2019theory}.

\begin{dfn}[Soundness, Definition 3 in~\cite{campbell2019theory}]
    Let $t$ be an integer and $f: \mathbb{Z} \rightarrow \mathbb{R}$ be a function termed the soundness function, where $f(0)=$ 0. A set of Pauli stabilizers $\mathcal{M}$ is called $(t, f)$-sound if, for every Pauli error $E$ with syndrome weight $|\sigma(E)|=x< t$, there exists an $E^{\star}$ such that $\sigma\left(E^{\star}\right)=\sigma(E)$ and the weight wt $\left(E^{\star}\right) \leqslant f(x)$.
    \label{soundness:ss}
\end{dfn}

In essence, this definition requires that low-weight syndromes are caused by low-weight errors. This formulation represents the contrapositive of definition of locally testable codes, with the key distinction being its focus on scenarios where $|\sigma(E)|=x< t$. Under this definition, good soundness is characterized by function $f(x)$, which is a monotonically increasing polynomial function of $x$ that remains independent of the size of the check set.

Campbell demonstrated that codes possessing good soundness naturally exhibit single-shot error correction capabilities, with the specific performance determined by the soundness parameters $t, f$ and the code parameters $[n, k, d]$~\cite{campbell2019theory}. Moreover, good soundness in LDPC codes implies the existence of a macroscopic energy barrier~\cite{aharonov2015quantum,campbell2019theory}, as formalized in the following lemma:

\begin{lemma}[Lemma 3 in~\cite{campbell2019theory}]
\label{soundnesstoenergbarrier}
    Consider a $\left[\left[n, k, d\right]\right]$ quantum code with checks $\mathcal{M}$ that is $(t, f)$-sound and where all qubits are involved in no more than $w_c$ checks. It follows that the energy barrier is at least $f^{-1}(c)$ where $c=\min \left[(t-1) / w_c,\left(d-1\right) / 2\right]$ and $f^{-1}$ is the inverse of the soundness function.
\end{lemma}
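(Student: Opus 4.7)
The plan is by contradiction along a realizing path. Fix a nontrivial logical $\ell \in L(S)$ and a path $P_0 = I, P_1, \ldots, P_T = \ell$ with consecutive operators differing on at most one qubit; let $B := \max_i \epsilon(P_i) = \max_i |\sigma(P_i)|$. The goal is to show $B \geq f^{-1}(c)$, equivalently $f(B) \geq c$.

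The interesting regime is $B \leq t - 1$, where Definition~\ref{soundness:ss} applies at every step. First I would invoke soundness at each $i$ to produce a companion $P_i^\star$ with $\sigma(P_i^\star) = \sigma(P_i)$ and $\wt(P_i^\star) \leq f(|\sigma(P_i)|) \leq f(B)$. The corrected operators $F_i := P_i P_i^\star$ then have trivial syndrome and thus lie in the centralizer $\mathcal{C}(S)$. Since $\sigma(P_0) = \sigma(P_T) = 0$ and $f(0) = 0$, I may take $P_0^\star = P_T^\star = I$, giving $F_0 = I$ and $F_T = \ell$. Consequently the logical class $[F_i] \in \mathcal{C}(S)/S$ must transition from trivial to $[\ell] \neq [I]$ at some step of the path.

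The core step is to rule out such a transition whenever $f(B) < (d-1)/2$. Passing to binary representations, consecutive corrected operators satisfy
\begin{equation}
\wt(F_i F_{i+1}) \leq \wt(P_i P_{i+1}) + \wt(P_i^\star P_{i+1}^\star) \leq 1 + 2 f(B) < d,
\end{equation}
so $F_i F_{i+1}$ is a sub-distance element of $\mathcal{C}(S)$ and therefore a stabilizer. This forces $[F_i] = [F_{i+1}]$, and telescoping contradicts $[F_0] \neq [F_T]$. Hence $f(B) \geq (d-1)/2 \geq c$ in this regime, yielding $B \geq f^{-1}(c)$.

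The complementary regime $B \geq t$ is absorbed by the other branch $(t-1)/w_c$ of $c$: since flipping a single qubit disturbs at most $w_c$ syndrome bits, any operator with syndrome weight $t-1$ has error weight at least $(t-1)/w_c$, forcing $f(t-1) \geq (t-1)/w_c$ and hence $f^{-1}((t-1)/w_c) \leq t-1 < B$. I anticipate the only real subtlety to be the bookkeeping of logical equivalence classes under companion-operator products; once that is set up, the weight bound $1 + 2 f(B) < d$ is the heart of the proof, and the LDPC constant $w_c$ is precisely what stitches the two branches of $c$ into a single clean inequality.
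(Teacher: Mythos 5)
Your Case~1 machinery --- forming low-weight companions $P_i^\star$ with matching syndromes, passing to corrected operators $F_i = P_i P_i^\star \in \mathcal{C}(S)$, observing that $[F_i]$ must jump from $[I]$ to $[\ell]$ somewhere, and killing that jump via $\mathrm{wt}(F_i F_{i+1}) \leq 1 + 2f(B) < d$ --- is the same engine the paper uses to prove the sister Lemma~\ref{lemma:confinement_energybarrier}. There your $F_i$ is written $V_i$, your $\mathrm{wt}(P_i^\star)$ is the reduced weight $\mathrm{wt}_{\mathrm{red}}(P_i)$, and the bound $d \leq \mathrm{wt}_{\mathrm{red}}(P_{j^*}) + \mathrm{wt}_{\mathrm{red}}(P_{j^*+1}) + 1$ is the same triangle inequality. (Note that the paper does not reprove the present soundness lemma; it is quoted from Campbell. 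The confinement lemma is the analogue actually proved.)

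Your Case~2, however, has a gap. To conclude $f^{-1}\bigl((t-1)/w_c\bigr) \leq t-1$ you assert $f(t-1) \geq (t-1)/w_c$, but the soundness hypothesis only constrains $f(x)$ from below when some error actually attains syndrome weight $x$; if no error has syndrome weight exactly $t-1$ (and the path's syndrome weight can jump by up to $w_c$ per step, so it can skip $t-1$ entirely), nothing forces this inequality. The fix --- and the reason the paper's proof has no case split at all --- is the intermediate-value argument on reduced weights that you almost set up in Case~1: $\mathrm{wt}(P_i^\star)$ starts and ends at $0$, changes by at most $1$ per step, and by the triangle inequality must climb to at least $(d-1)/2 \geq c$, so some step $P_c$ has $\mathrm{wt}(P_c^\star)$ equal to $c$ (up to rounding). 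At that step $|\sigma(P_c)| \leq w_c\,\mathrm{wt}(P_c^\star) \leq w_c c \leq t-1 < t$, so soundness applies there \emph{automatically}, giving $f(|\sigma(P_c)|) \geq \mathrm{wt}(P_c^\star) = c$ and hence $B \geq |\sigma(P_c)| \geq f^{-1}(c)$, covering both branches of $c$ in one stroke without any existence assumption about syndrome weight $t-1$.
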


It is important to note that the converse statement, any LDPC check family with a macroscopic energy barrier necessarily possesses good soundness, does not hold. The expander code provides a clear counterexample: despite exhibiting bad soundness due to its lack of check redundancy~\cite{ben2010locally}, it has been proven to support single-shot error correction with the small-set flip decoder~\cite{fawzi2020constant,fawzi2018efficient}. In this context, check redundancy refers to the ratio between the check set size (encompassing all checks, including dependent ones) and the number of independent checks.

The soundness of a stabilizer code depends specifically on the selection of the stabilizer set, not merely on the stabilizer generators. Campbell also demonstrated that for any stabilizer code, it is possible to strategically choose a stabilizer set that endows the code with good soundness properties~\cite{campbell2019theory}. The price is that the size of some checks could be excessively large. LDPC property may be lost due to the existence of high-weight checks.

To maintain the LDPC characteristic, Campbell discovered an elegant solution for achieving single-shot properties: applying the hypergraph product construction twice~\cite{campbell2019theory}. The resulting structures naturally possess meta-checks, a direct consequence of the iterative hypergraph product process. Through this approach, Campbell effectively combined soundness and LDPC properties by strategically leveraging check redundancy. The following lemma shows this formally.

\begin{lemma}[Lemma 5 and Lemma 6 in~\cite{campbell2019theory}]
    Let $C_0 \xleftarrow{\partial_0} C_1$ be a chain complex. Applying the hypergraph product, we obtain a new chain complex $\tilde{C}_{-1} \xleftarrow{\tilde{\delta}_{-1}} \tilde{C}_0 \xleftarrow{\tilde{\delta}_0} \tilde{C}_1$, where the maps $\tilde{\delta}_0^T$ and $\tilde{\delta}_{-1}$ are $(t, f)$-sound with $f(x)=x^2 / 4$ and $t=\min \left[d_0, d_0^T\right]$, where $d_0, d_0^T$ are the distances of codes defined by $\delta_0, \delta_0^T$, respectively. Applying the hypergraph products again, we obtain a new chain complex $\breve{C}_{-2} \xleftarrow{\breve{\delta}_{-2}} \breve{C}_{-1} \xleftarrow{\breve{\delta}_{-1}} \breve{C}_0 \xleftarrow{\breve{\delta}_0} \breve{C}_1 \xleftarrow{\breve{\delta}_1} \breve{C}_2$, where the maps $\breve{\delta}_0$ and $\breve{\delta}_{-1}^T$ are $(t, g)$-sound with soundness function $g(x)=x^3 / 4$, and $t=\min \left[d_0, d_0^T\right]$.
\end{lemma}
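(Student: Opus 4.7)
The plan is to use the explicit tensor structure of the hypergraph product to reduce every low-syndrome-weight error to a small ``rectangular'' support whose weight is then bounded by AM--GM. I would prove the single-application case in full and describe how the same template extends to the iterated construction.

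For the first application, I would unpack the chain complex explicitly as $\tilde{C}_{-1}=C_0\otimes C_0$, $\tilde{C}_0=(C_0\otimes C_1)\oplus(C_1\otimes C_0)$, $\tilde{C}_1=C_1\otimes C_1$, with $\tilde{\delta}_0$ given in block form by $(\partial_0\otimes I,\, I\otimes\partial_0)$, so that $\tilde{\delta}_0^T$ sends a matrix $M\in C_1\otimes C_1$ to the pair $(\partial_0^T M,\, M\partial_0)$ of ``column-check'' and ``row-check'' matrices. Let $x$ be the total syndrome weight and let $I$ (respectively $J$) denote the set of rows (columns) of $M$ whose row-check (column-check) is nonzero. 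Then $|I|+|J|\leq x<t=\min(d_0,d_0^T)$, and in particular $|I|,|J|<d_0^T$. The heart of the argument is to construct a kernel element $K\in\ker\tilde{\delta}_0^T$---i.e.\ a matrix every row and column of which lies in the underlying classical code $C=\ker\partial_0^T$---such that $E^{\star}:=E+K$ is supported inside the rectangle $I\times J$. The distance hypothesis makes this possible: since any codeword of $C$ supported in fewer than $d_0^T$ positions vanishes, the codeword extension of any ``clean'' partial row (for $i\notin I$, restricted to columns outside $J$) or clean partial column (for $j\notin J$, restricted to rows outside $I$) is uniquely forced, so the row- and column-prescriptions for $K$ automatically agree on the bulk $I^c\times J^c$; the automatic syndrome consistency $\tilde{\delta}_{-1}^T\tilde{\delta}_0^T=0$ (which follows from $\tilde{\delta}_0\tilde{\delta}_{-1}=0$) then ensures the remaining corner system on $I\times J$ is solvable. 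Once $E^{\star}$ is supported in $I\times J$, $\wt(E^{\star})\leq|I|\cdot|J|\leq(|I|+|J|)^2/4\leq x^2/4$ by AM--GM, giving $f(x)=x^2/4$. The soundness of $\tilde{\delta}_{-1}$ follows by the dual argument with the two tensor factors interchanged.

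For the iterated HGP the same template extends; the errors relevant to $\breve{\delta}_0$-soundness carry a three-direction syndrome structure, and an analogous rectangular-support reduction, combined with AM--GM, yields $g(x)=x^3/4$. The statement for $\breve{\delta}_{-1}^T$ is again obtained by duality.

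The main obstacle is the construction of the kernel element $K$ in the single-step case: one must verify that the codeword extensions prescribed by the clean rows and clean columns of $M$ agree on the shared $I^c\times J^c$ bulk, and that the remaining freedom in the $I\times J$ corner suffices to make the $I$-rows and $J$-columns of $K$ into codewords. Both facts reduce to the distance hypothesis (via uniqueness of codeword extensions when $|I|,|J|<d_0^T$) combined with the automatic syndrome consistency, but this bookkeeping is the delicate step; the weight count and the symmetric cases are then routine.
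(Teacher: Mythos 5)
The paper does not supply a proof of this lemma; it is stated as a citation of Campbell's Lemmas 5 and 6, so there is no in-paper argument to compare against. Evaluating your proposal on its own terms: for the first application (Campbell's Lemma~5) your outline is essentially the standard one and is sound. Writing $M\in\tilde C_1=C_1\otimes C_1$ and $C=\ker\partial_0^T$, you correctly reduce $M$ modulo $\ker\tilde\delta_0^T=C\otimes C$ to a representative supported on the rectangle $I\times J$ of dirty rows and columns, and AM--GM gives $|I||J|\le(|I|+|J|)^2/4\le x^2/4$. One attribution is slightly off: the metacheck relation $\tilde\delta_{-1}^T\tilde\delta_0^T=0$ is automatically satisfied and is not what makes the corner system solvable. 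The actual mechanism is the distance hypothesis applied to the consistency of the row constraints: for any $\alpha\in C^\perp$ supported on $I^c$, the row vector $\alpha^T M=\sum_{i\notin I}\alpha_iM_{i\cdot}$ is a codeword of $C$ (a combination of the clean rows) that also vanishes on $J^c$ (since $\alpha\perp M_{\cdot j}$ for clean columns), hence a codeword of weight $<d_0^T$, hence zero. This shows the row constraints $G_{\cdot i}^TM'=v_i$ ($i\notin I$) are solvable for the coefficient matrix $M'$ of $K=G^TM'G\in C\otimes C$, and the column constraints then follow automatically from bulk agreement; the ingredient is the distance bound $|I|,|J|<d_0^T\ge t$, not the metacheck.

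The iterated case (Campbell's Lemma~6) is where your proposal has a genuine gap. Your claim that ``an analogous rectangular-support reduction, combined with AM--GM, yields $g(x)=x^3/4$'' does not survive inspection: if the error lived in a single three-fold tensor product with dirty-slice count $|I|+|J|+|K|\le x$, AM--GM would give the box volume $|I||J||K|\le(x/3)^3=x^3/27$, a strictly \emph{better} bound than $x^3/4$ — this mismatch signals that the structure is not a 3D box. In the length-5 complex, $\breve C_0$ is a direct sum of several tensor blocks (schematically $\tilde C_{-1}\otimes\tilde C_1\,\oplus\,\tilde C_0\otimes\tilde C_0\,\oplus\,\tilde C_1\otimes\tilde C_{-1}$), not a three-fold tensor power, and the soundness of $\breve\delta_0$ is obtained by combining the Lemma~5 soundness of the metacheck codes with additional cleaning across blocks, not by re-running the rectangle argument one dimension higher. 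As written, the cube bound $x^3/4$ is asserted without an argument that produces it, and the ``same template extends'' sentence cannot be accepted as a proof of Lemma~6.
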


For a 4D hypergraph product code that constructed with four identical classical codes defined by the same parity check matrix $\delta$ with distance $d$ ($\delta^T$ with distance $d^T$), according to Lemma~\ref{soundnesstoenergbarrier}, the energy barrier for logical operators of the resulting quantum code then has a lower bound of $\Omega\left({(\min[d, d^T])}^{\frac13}\right)$. Later, we will demonstrate that the optimal lower bound for this energy barrier scales as $\Omega(\min[d,d^T])$ in the given scenario.

\subsection{Confinement}

Confinement is another robustness measure of quantum codes. Bomb\'{i}n originally introduced the concept of confinement as a mechanism that allows quantum codes to achieve single-shot error correction~\cite{bombin2015single}. However, in this discussion, we use a more developed definition provided in~\cite{Quintavalle2021single}.

\begin{dfn}[Confinement, Definition 1 in~\cite{Quintavalle2021single}]
   Let $t$ be an integer and $f: \mathbb{Z} \rightarrow$ $\mathbb{Z}$ some increasing function with $f(0)=0$. We say that a stabilizer code has $(t, f)$-confinement if, for all errors $e$ with $\mathrm{wt}_{\mathrm{red}}(e) \leq t$, it holds
   \begin{eqnarray}
       f(|\sigma(e)|) \geq \mathrm{wt}_{\mathrm{red}}(e).
   \end{eqnarray} 
\end{dfn}
In this definition, $\wt_{\mathrm{red}}(e)$ denotes the reduced weight of the error $e$, which represents the smallest weight of any error $e'$ that produces the same error syndrome as $e$. 

Unlike Bomb\'{i}n's more restrictive formulation (Definition 16 in~\cite{bombin2015single}), the above definition permits nonlinear functions $f(x)$. 2D repetition code and 3D hypergraph product codes both exhibit superlinear confinement functions while supporting single-shot correction. These important cases would be excluded under Bomb\'{i}n's original framework.

A code family is said to have good confinement if each code within the family satisfies $(t, f)$-confinement with the following criteria: (1) $t$ increases with the length of the code $n$, specifically $t \in \Omega\left(n^b\right)$ for some $b>0$, and (2) the confinement function $f(\cdot)$ increases monotonically and does not depend on $n$. Additionally, a code family is said to exhibit good X-confinement if this property applies only to Pauli Z errors~\cite{Quintavalle2021single}.

Confinement is a weaker requirement than soundness. While good soundness requires low-weight syndromes to correspond to low-weight errors, good confinement demands that low-weight errors produce low-weight syndromes. For an LDPC code that is $(t,f)$-sound, if the maximum qubit degree is $\omega_q$ (the sparsity parameter), then the code has $({t}/{\omega_q}, f)$ confinement (see Lemma 2 in ~\cite{Quintavalle2021single}). This follows from a direct analysis: if $e$ is an error set with $\wt_{\mathrm{red}}(e) \leq {t}/{\omega_q}$, then its syndrome satisfies $|\sigma(e)| \leq {t}/{\omega_q} \cdot \omega_q = t$. By the soundness of the code, we have $f(|\sigma(e)|) \geq \wt_{\mathrm{red}}(e)$.

Importantly, one can show that codes with good confinement possess single-shot error correction capabilities~\cite{Quintavalle2021single}.

\begin{lemma}[Theorem 1 in \cite{Quintavalle2021single}]
Consider a family of $[[n, k, d]]$ quantum-LDPC codes with good confinement such that $d \geq an^b$ with $a,b>0$. This code family is single-shot for the adversarial noise model. If the code family only has good X-confinement then it is single-shot with respect to Pauli Z noise.
\end{lemma}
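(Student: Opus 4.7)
The plan is to construct an explicit two-stage decoder and then use the confinement hypothesis to bound the reduced weight of the residual error solely in terms of the measurement-error weight; this bounded-residual property is exactly what defines single-shot correctability. In the adversarial model, the decoder is given only the noisy syndrome $\tilde{\sigma} = \sigma(e) + s$, where $e$ is the unknown data error and $s$ is the unknown measurement error, and it must output a correction $\tilde e$ such that the residual error $r = e + \tilde e$ has $|r|^{\mathrm{red}}$ controlled by $|s|$ alone (independent of $|e|$ beyond a threshold) and does not implement a logical operator.

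First I would introduce a syndrome-repair step: find a minimum-weight $s'$ such that $\hat\sigma := \tilde\sigma - s'$ lies in $\mathrm{Im}(H)$. Since $\tilde\sigma - s = \sigma(e)$ is already a valid syndrome, $|s'|\le |s|$, hence the ``effective'' syndrome error is $s - s'$ with $|s-s'|\le 2|s|$. Next I would apply any minimum-weight decoder to the cleaned syndrome $\hat\sigma$, producing $\tilde e$ with $\sigma(\tilde e)=\hat\sigma$. The residual error then satisfies $\sigma(r) = \sigma(e) + \hat\sigma = s - s'$, so $|\sigma(r)|\le 2|s|$.

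The core of the proof is then invoking $(t,f)$-confinement on $r$: provided we can ensure $|r|^{\mathrm{red}}\le t$, we immediately obtain the desired bound
\begin{equation}
|r|^{\mathrm{red}} \le f\bigl(|\sigma(r)|\bigr) \le f(2|s|),
\end{equation}
which, together with the distance assumption $d\ge an^b$, forbids $r$ from being a nontrivial logical operator whenever $f(2|s|) < d/2$. The Pauli-Z-only statement follows by restricting the same argument to Z-type errors and X-type checks, using good X-confinement in place of full confinement.

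The main obstacle is ensuring we stay inside the confinement regime $|r|^{\mathrm{red}}\le t$; this requires a weight-budget argument. Since $\tilde e$ is minimum weight with syndrome $\hat\sigma$, and $e$ plus any correction $y$ of $s-s'$ is a competitor, one has $|\tilde e|\le |e| + |y|_{\min}$, where $|y|_{\min}\le f(2|s|)$ by confinement again. Hence $|r|\le 2|e| + f(2|s|)$, so a threshold condition of the form $2|e| + f(2|s|) \le t$ is sufficient. Because $t\in\Omega(n^b)$ and $f$ is independent of $n$, this threshold is macroscopic, so the conclusion holds for any adversarial error pattern of size $|e|, |s|$ below constant fractions of $n^b$. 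The delicate point, and where I expect most of the bookkeeping to go, is handling the inductive/self-referential use of confinement when bounding $|\tilde e|$ itself, and verifying that the minimum-weight-decoder choice (rather than e.g.\ an approximate small-set-flip decoder) is the cleanest way to close the argument; an alternative would be to iterate the confinement bound and absorb lower-order terms, yielding the same asymptotic statement.
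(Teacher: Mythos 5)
The paper states this as Theorem~1 of Quintavalle et al.\ and gives no proof of its own, so there is no in-paper argument for me to compare yours against; I assess your proposal on its own terms.

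Your high-level plan --- syndrome repair to obtain $\hat\sigma$ with $|s-s'|\le 2|s|$, decode $\hat\sigma$ to get $\tilde e$, then apply confinement to the residual $r=e+\tilde e$ with $\sigma(r)=s-s'$ --- is the standard two-stage single-shot decoder, and the identity $\sigma(r)=s-s'$ is computed correctly. The gap is in the step you yourself flag as delicate: you assert $|y|_{\min}\le f(2|s|)$ ``by confinement again'' and then use this to close the weight budget. This is not what $(t,f)$-confinement gives you. Since $y_{\min}$ is the minimum-weight error producing syndrome $s-s'$, we have $|y_{\min}|^{\mathrm{red}}=|y_{\min}|$, and the confinement hypothesis says \emph{only}: if $|y_{\min}|\le t$, then $|y_{\min}|\le f(|\sigma(y_{\min})|)\le f(2|s|)$. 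Unconditionally you get a dichotomy --- either $|y_{\min}|\le f(2|s|)$ or $|y_{\min}|>t$ --- and nothing in the confinement definition lets you discard the second branch. The unconditional statement ``a small syndrome always admits a small pre-image'' is exactly Campbell's \emph{soundness}, which the paper stresses is strictly stronger than confinement and fails for, e.g., quantum expander codes. You may not assume it here.

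The resulting weight budget is therefore circular: you need $|y|_{\min}\le f(2|s|)$ to get $|\tilde e|\le |e|+f(2|s|)$ and hence $|r|\le 2|e|+f(2|s|)\le t$, which is what would let you invoke confinement on $r$; but the bound on $|y|_{\min}$ is only available under the very hypothesis $|y_{\min}|\le t$ that you are trying to establish. Worse, note that $|y|_{\min}$ and $|r|^{\mathrm{red}}$ are \emph{the same number} --- both are by definition the minimum weight of any error with syndrome $s-s'$ --- so ``bound $|y|_{\min}$, then bound $|r|^{\mathrm{red}}$'' is not two steps but one. To close the argument you must genuinely resolve the dichotomy: for instance, use the adversarial promise $|e|\le p$ together with a more careful comparison argument to rule out the branch $|y_{\min}|>t$, or replace the bare minimum-weight decoder with a decoder whose output weight can be bounded a priori against the confinement threshold. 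This bookkeeping is precisely the nontrivial content of Quintavalle et al.'s Theorem~1 and its supporting lemmas. As written, your proposal proves the statement only for codes with good \emph{soundness}, a strictly smaller class that misses the expander-code example the theorem is explicitly designed to cover.
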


Confinement is also related to the energy barrier of a code~\cite{bombin2015single,hong2024quantum,placke2024topological}. In Sec.~\ref{sec:confinement_and_energy_barrier}, we demonstrate formally that confinement naturally induces an energy barrier lower bound.

\subsection{Expander code}
An expander graph embodies a bipartite graph characterized by significant connectivity. Let $G=(V \cup C, E)$ be a bipartite graph. For a vertex subset $S \subseteq V$, a vertex $g \in C$ is a neighbor of $S$ if it is connected to at least one vertex in $S$, $N(S)$ is the set of neighbors of $S$. We say $g \in C$ is a unique neighbor of $S$ if it is only connected to a single vertex in $S$. We use $U(S)$ to denote the collection of unique neighbors of $S$. The expander graph can be defined as follows:

\begin{dfn}[Expander graph]
    Let $G=(V \cup C, E)$ be a bipartite graph with left and right degrees bounded by $w$. Let $|V|=n$ and $|C|=m$. We say $G$ is $(\gamma_1, \partial_1)$-left-expanding with constants $\gamma_1, \partial_1>0$, if for any subset $S \subseteq V$ with $|S| \leq \gamma_1 n$, the neighbor set $N(S)$ of $S$ in $C$ satisfies
    \begin{eqnarray}
        |N(S)| \geqslant w(1-\partial_1)|S|.
    \end{eqnarray}
    Similarly, we say $G$ is $(\gamma_2, \partial_2)$-right-expanding with constants $\gamma_2, \partial_2>0$, if for any subset $T \subseteq C$ with $|T| \leq \gamma_2 n$, the neighbor $N(T)$ of $T$ in the subset $V$ satisfies $|N(T)| \geqslant w(1-\partial_2)|T|$. We call $G$ an $\left(n, m, w, \gamma_1, \partial_1, \gamma_2, \partial_2\right)$ expander graph if it is both left and right expanding.
\end{dfn}

Let $H_G$ be the parity check matrix of an expander code defined by this $\left(n, m, w, \gamma_1, \partial_1, \gamma_2, \partial_2\right)$ expander graph (choosing $V$ and $C$ as the variable and check nodes of the Tanner graph) with $\partial_1<1/2$. Then for any $S \subset V$ such that $|S| \leq \gamma_1 n, \quad U(S) \geq w(1-2 \partial_1)|S|$. Furthermore, the distance $d_C$ of the code $\mathcal{C}(G)$ is greater than $\gamma_1 n$ \cite{Sipser1996}. 

The above analysis directly establishes the confinement property of expander codes. For any error $e$ with reduced weight $\wt_{\mathrm{red}}(e) < d$, we have $|\sigma(e)| \geqslant U(e) \geqslant w(1-2 \partial_1)|e|$. Noting that in classical codes, $\wt_{\mathrm{red}}(e) = |e|$. We can then conclude that $({1}/({w(1-2 \partial_1)}))|\sigma(e)| \geqslant \wt_{\mathrm{red}}(e)$. This relationship demonstrates that expander codes exhibit linear confinement.

Conversely, linear confinement also induces left-expansion. For $(t, f)$-confinement with $t = an$ and $f(x) = cx$, where $a$ and $c$ are constants, any error $S$ with $|S| < an$ satisfies $c|\sigma(S)| \geqslant |S|$. Thus $N(S) \geqslant |\sigma(S)| \geqslant (1/c)|S|$. With degree $w$, we have $N(S) \geqslant w(1-(1-{1}/{(cw)}))|S|$. Therefore, this code is an expander code with $(a, 1-{1}/{(cw)})$-left-expanding.

Furthermore, the energy barrier of expander codes can be characterized by the following lemma.
\begin{lemma}
    Let $\Delta(H_G)$ be the energy barrier of classical expander code defined by $H_G$ with parameters $(n, m, w, \gamma_1, \partial_1, \gamma_2, \partial_2)$ and $\partial_1<{1}/{2}$, then
    \begin{eqnarray}
        \Delta(H_G) \geq cn,
    \end{eqnarray}
    where $c > 0$ is a constant.
    \label{lemma:energy_barrer_expander_code}
\end{lemma}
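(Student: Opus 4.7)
The plan is to combine two facts already established in the text: (i) the distance of the expander code satisfies $d_C > \gamma_1 n$, and (ii) expander codes obey linear confinement, $|\sigma(e)| \geq w(1-2\partial_1)|e|^{\mathrm{red}}$, whenever the support of $e$ is small enough that the left-expansion property applies. Together these should force any path from the zero string to a non-zero codeword to traverse a configuration of macroscopic syndrome weight, which by definition lower-bounds $\Delta(H_G)$.

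Concretely, I would fix an arbitrary non-zero codeword $\ell$ and an arbitrary path $P_0 = 0, P_1, \ldots, P_t = \ell$. Because consecutive bit-strings along the path differ in exactly one coordinate, the Hamming weight $|P_i|$ changes by at most one per step. Since $|P_0| = 0$ and $|P_t| = |\ell| \geq d_C > \gamma_1 n$, an integer intermediate-value argument guarantees some index $i^\star$ with $|P_{i^\star}| = \lfloor \gamma_1 n / 2 \rfloor$. At this intermediate step I would verify two inclusions. First, $|P_{i^\star}| < d_C/2$, so for any non-zero codeword $c$ one has $|P_{i^\star}+c| \geq |c| - |P_{i^\star}| > |P_{i^\star}|$; hence the reduced weight coincides with the actual Hamming weight, $|P_{i^\star}|^{\mathrm{red}} = |P_{i^\star}|$. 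Second, $|P_{i^\star}| \leq \gamma_1 n$, so the left-expansion applies to the support of $P_{i^\star}$ and the standard unique-neighbor argument (using $\partial_1 < 1/2$) yields $|U(P_{i^\star})| \geq w(1-2\partial_1)|P_{i^\star}|$.

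Each unique neighbor of the support corresponds to an unsatisfied parity check, so the syndrome weight is bounded below by the unique-neighbor count, giving $\epsilon(P_{i^\star}) = |\sigma(P_{i^\star})| \geq w(1-2\partial_1)\lfloor \gamma_1 n / 2 \rfloor$. Since this holds along every path and for every non-zero codeword, taking $c = w(1-2\partial_1)\gamma_1/2$ (up to a harmless floor correction) produces the desired bound $\Delta(H_G) \geq cn$. There is no substantive obstacle here beyond bookkeeping; the only delicate point is ensuring that the intermediate weight $\lfloor \gamma_1 n / 2 \rfloor$ lies simultaneously in the reduced-weight regime ($|e| < d_C/2$) and in the expansion regime ($|e| \leq \gamma_1 n$), and both follow immediately from $d_C > \gamma_1 n$, so the argument is a clean packaging of confinement plus the monotone-step property of the path.
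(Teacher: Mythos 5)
Your argument is correct and is essentially the paper's: both proofs locate an intermediate step on the path whose support has size about $\gamma_1 n$ (small enough for left-expansion to apply) and then invoke the unique-neighbor bound $|U(P)| \geq w(1-2\partial_1)|P|$ to lower-bound the syndrome weight, hence the energy barrier, by a constant times $n$. The reduced-weight verification you include is superfluous here --- the unique-neighbor count directly bounds $|\sigma(P_{i^\star})|$ in terms of the actual support of $P_{i^\star}$, with no need to pass through $|P_{i^\star}|^{\mathrm{red}}$ --- and your choice of $\lfloor\gamma_1 n/2\rfloor$ rather than (roughly) $\gamma_1 n$ merely halves the resulting constant; neither affects validity.
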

\begin{proof}
    Without loss of generality, suppose that energy barrier $\Delta(H_G)$ of code is given by the energy barrier of logical operator $L$, with path $r_L$. Since $|L| \geq\gamma_1 n$, there is a point along the path (denoted as $P$) such that $\left|P\right|=\gamma_1 n$. With the expansion property $U(P) \geq w(1-2 \partial_1)|P|$, we get $\Delta(H_G) \geq U(P) \geq w(1-2 \partial_1) |P| = w(1-2 \partial_1) \gamma_1 n$. Then, we have $\Delta(H_G) \geq cn$ for $c = w(1-2 \partial_1) \gamma_1 >0$.
\end{proof}

From Lemma~\ref{lemma:energy_barrer_expander_code}, it follows that the classical expander code has an energy barrier scaling linearly with the number of bits. A similar conclusion holds for the transpose code in which the variable and the check nodes are exchanged. Therefore, according to the results in~\cite{zhao2024on}, the quantum expander code defined by parity check matrix $H_{(H_G, H_G)}$ has $\Omega(\sqrt{n})$ energy barrier, where $n$ here is the number of qubits.

Researches have shown that good soundness requires check redundancy in LDPC codes~\cite{ben2010locally,ben2003stoc}, the expander code exhibits poor soundness due to its lack of check redundancy. However, despite this limitation, the quantum expander code has been proven to support single-shot error correction when implemented with the small-set flip decoder~\cite{fawzi2020constant,fawzi2018efficient}. This makes confinement a more general and inclusive property compared to soundness, in relation to single-shot error correction.

\subsection{Higher-dimensional hypergraph product}

The hypergraph products provide a way to construct quantum codes with any two classical linear codes~\cite{2014-Tillich}. It enables the application of extensive insights from classical coding theory to quantum coding. Those codes have good properties, such as circuit-level distance preservation~\cite{manes2023distance,tan2025effective}.

Given two classical codes $C_1$ and $C_2$, with respective parity check matrices $H_1$ and $H_2$, the hypergraph product defines a quantum code with parity check matrices
\begin{equation}
\begin{aligned}
H_X & =\left(H_1 \otimes I \quad I \otimes H_2^T\right), \\
H_Z & =\left(I \otimes H_2 \quad H_1^T \otimes I\right).
\end{aligned}
\end{equation}
One can verify that $H_X H_Z^T=0$ within the field $\mathbb{F}_2$. Thus these two parity check matrices define a CSS code, with a quantum parity check matrix of the form
\begin{equation}
H_{\left(H_1, H_2\right)}=\left(\begin{array}{cc}
H_X & 0 \\
0 & H_Z
\end{array}\right).
\end{equation}

Hypergraph product codes can be interpreted through chain complexes. Classical codes correspond to length-2 chain complexes, where boundary maps define parity check matrices. A hypergraph product code arises from the tensor product of two such chain complexes, yielding a length-3 chain complex. This structure provides two boundary maps that serve as parity checks. The fundamental property that ``a boundary has no boundary'' ensures these checks commute, allowing the construction of CSS quantum codes directly from the boundary maps of the resulting length-3 chain complex.

HHGP codes~\cite{zeng2019higher} extends this concept by generalizing the construction to create a length-$n$ chain complex. This generalization has important connections to topological codes: $n$-dimensional toric codes on hypercubic lattices constitute a length-$n$ chain complex~\cite{bombin2007homological,dennis2002topological}, and can be perfectly described by higher-dimensional hypergraph products.

Following Ref.~\cite{zeng2019higher}, we give a short introduction to the chain complex interpretation of hypergraph product. A chain complex is a sequence of finite-dimensional vector spaces $\{ \ldots, \mathcal{A}_{j-1}, \mathcal{A}_j, \ldots \}$ connected by boundary operators $\partial_j: \mathcal{A}_{j-1} \leftarrow \mathcal{A}_{j}$, satisfying $\partial_j \partial_{j+1}=0$ for all $j \in \mathbb{Z}$. 

For qubit systems, we consider vector spaces $\mathcal{A}_j=\mathbb{F}_2^{n_j}$, consisting of binary vectors of length $n_j$. $\mathcal{A} \equiv \mathcal{K}\left(A_1, \ldots, A_m\right)$ can be defined as a length-$(m+1)$ chain complex with boundary map $\partial_j$ represented by $n_{j-1} \times n_j$ binary matrices $A_j$:
\begin{eqnarray}
    \mathcal{A}:\{0\} \stackrel{\partial_0}{\leftarrow} \mathcal{A}_0 \stackrel{A_1}{\leftarrow} \mathcal{A}_1 \ldots \stackrel{A_m}{\leftarrow} \mathcal{A}_m \stackrel{\partial_{m+1}}{\leftarrow}\{0\},
\end{eqnarray}
where adjacent matrices satisfy the orthogonality condition $A_{j-1} A_j=0$ for $j \in\{1, \ldots, m\}$. The trivial operators $\partial_0:\{0\} \leftarrow \mathcal{A}_0$ and $\partial_{m+1}: \mathcal{A}_m \leftarrow\{0\}$ are treated as zero matrices of dimensions $0 \times n_0$ and $n_m \times 0$, respectively.

The subspace $\operatorname{Im}\left(A_{j+1}\right) \subseteq \mathcal{A}_j$ consists of boundaries, which can be represented as linear combinations of the columns of $A_{j+1}$. This subspace defines a binary linear code with generator matrix $A_{j+1}^T$, denoted as $\operatorname{Im}\left(A_{j+1}\right)=\mathcal{C}_{A_{j+1}^T}$. For the special case where $j=m$, we have $\operatorname{Im}\left(\partial_{m+1}\right)={0}$, which forms a trivial vector space.

The subspace $\operatorname{Ker}\left(A_j\right) \subseteq \mathcal{A}_j$ consists of cycles, which are vectors $x \in \mathcal{A}_j$ orthogonal to the rows of $A_j$, satisfying $A_j x^T=0$. This subspace defines a binary linear code with parity check matrix $A_j$, expressed as $\operatorname{Ker}\left(A_j\right)=\mathcal{C}_{A_j}^{\perp}$. In the special case where $j=0$, we have $\operatorname{Ker}\left(\partial_0\right)=\mathcal{A}_0$, encompassing the entire vector space.

Within the binary vector space $\mathcal{A}_j$, we have defined two key subspaces: $\operatorname{Im}\left(A_{j+1}\right)$, consisting of boundaries, and $\operatorname{Ker}\left(A_j\right)$, comprising cycles. A fundamental property of chain complexes is that all boundaries are cycles, though not all cycles are boundaries. This relationship is mathematically expressed through the orthogonality condition $A_j A_{j+1}=0$.

Based on these subspaces, we can define the $j$-th homology group as
\begin{eqnarray}
H_j(\mathcal{A}) \equiv \operatorname{Ker}\left(A_j\right) / \operatorname{Im}\left(A_{j+1}\right).
\end{eqnarray}
This quotient space contains all elements that are cycles but not boundaries, which we refer to as homologically nontrivial cycles.

The rank of the $j$-th homology group, representing the dimension of the associated vector space, is given by
\begin{eqnarray}
k_j \equiv \operatorname{rank} H_j(\mathcal{A})=n_j-\operatorname{rank} A_j-\operatorname{rank} A_{j+1}.
\end{eqnarray}

Additionally, the homological distance $d_j$ is defined as the minimum Hamming weight of a nontrivial element in $H_j(\mathcal{A})$, expressed as:
\begin{eqnarray}
d_j=\min \{\operatorname{wt}(x): x \in H_j(\mathcal{A}), x \neq 0\}.
\end{eqnarray}

A quantum CSS code can be derived from a length-3 chain complex $\mathcal{A}_0 \stackrel{A_1}{\leftarrow} \mathcal{A}_1 \stackrel{A_2}{\leftarrow} \mathcal{A}_2$ by setting $H_Z^T=A_1$ and $H_X=A_2$. In chain complex terminology, the code length equals $\operatorname{dim} \mathcal{A}_1$, while its dimension corresponds to the first homology group $\mathcal{H}_{1}=\operatorname{ker} A_{1} / \operatorname{Im} A_2$, or equivalently, the first cohomology group $\mathcal{H}_{1}^*=\operatorname{ker} A_1^T / \operatorname{Im} A_{2}^T$. The X and Z distances are determined by the minimum weights of non-zero vectors in $\mathcal{H}_{1}^*$ and $\mathcal{H}_{1}$ respectively.

Classical codes can be described by length-2 chain complexes. To construct quantum codes from classical codes, it is just a question of how to build a length-3 chain complex from two length-2 chain complexes. For this purpose, we employ the tensor product of chain complexes.

The tensor product $\mathcal{A} \times \mathcal{B}$ of two chain complexes $\mathcal{A}$ and $\mathcal{B}$ is defined as a new chain complex in the following way
\begin{eqnarray}
    (\mathcal{A} \times \mathcal{B})_l=\bigoplus_{i+j=l} \mathcal{A}_i \otimes \mathcal{B}_j.
\end{eqnarray}
The vector spaces in the new chain complex are direct sums of Kronecker products of vector spaces from the original chain complexes. Moreover, the boundary operators in this tensor product construction act as
\begin{eqnarray}
    \partial_{i+j}(a \otimes b)=\partial_i^{\prime} a \otimes b+(-1)^i a \otimes \partial_j^{\prime \prime} b,
\end{eqnarray}
where $a \in \mathcal{A}_i, b \in \mathcal{B}_j$, and $\partial_i^{\prime}$ and $\partial_j^{\prime \prime}$ are the boundary operators in the corresponding space $\mathcal{A}$ and $\mathcal{B}$. If both $\mathcal{A}$ and $\mathcal{B}$ have finite dimension, then the dimension of $\mathcal{C}=\mathcal{A} \times \mathcal{B}$ is
\begin{eqnarray}
    n_j(\mathcal{C})=\sum_i n_i(\mathcal{A}) n_{j-i}(\mathcal{B}).
\end{eqnarray}

By the K\"unneth theorem, the homology groups of the product complex $\mathcal{C}=\mathcal{A} \times \mathcal{B}$ are given by
\begin{eqnarray}
    H_j(\mathcal{C}) \cong \bigoplus_i H_i(\mathcal{A}) \otimes H_{j-i}(\mathcal{B}) .
\end{eqnarray}
Consequently, the rank $k_j(\mathcal{C})$ of the $j$-th homology group $H_j(\mathcal{C})$ can be expressed as
\begin{eqnarray}
    k_j(\mathcal{C})=\sum_i k_i(\mathcal{A}) k_{j-i}(\mathcal{B}) .
\end{eqnarray}

The hypergraph product code is defined as the tensor product of two length-2 chain complexes. The higher-dimensional hypergraph product extends this concept to the tensor product of two chain complexes of arbitrary lengths. As demonstrated in Ref.~\cite{zeng2019higher}, for HHGP codes formed from a length-m chain complex $\mathcal{A}$ and a length-l chain complex $\mathcal{B}$ with $l = 2$, there exists a lower bound on the distance of logical operators in the resulting quantum codes:
\begin{eqnarray}
    d_j(\mathcal{A} \times \mathcal{B})=\min \left[d_{j-1}(\mathcal{A}) d_1(\mathcal{B}), d_j(\mathcal{A}) d_0(\mathcal{B})\right].
    \label{eq:hhpdistance}
\end{eqnarray}

In this work, we focus specifically on those 3D and 4D hypergraph product codes that can be regarded as tensor products of three or four classical codes, respectively. This allows us to apply the distance bounds given by above equation.

\section{Confinement and energy barrier}
\label{sec:confinement_and_energy_barrier}

The relation between confinement and energy barrier is well-known~\cite{bombin2015single,hong2024quantum,placke2024topological}. In this section, we formalize this connection through the following lemma.
\begin{lemma}
\label{lemma:confinement_energybarrier}
    For a $[[n, k, d]]$ quantum code $\mathcal{C}$ with $(t,f)$ confinement where $t \leqslant d$ and $f(\cdot)$ is increasing monotonically, the energy barrier of the code is at least $f^{-1}(w)$, where $w = \min[t, {(d-1)}/{2}]$.
\end{lemma}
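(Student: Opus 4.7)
The plan is to show that along any path $P_0 = I, P_1, \ldots, P_T = L$ from the identity to a nontrivial logical operator $L$, there must exist an intermediate step $j$ with $|\sigma(P_j)| \geq f^{-1}(w)$. Taking the maximum over the path and then minimizing over all nontrivial $L$ yields the claimed energy barrier lower bound.

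The core of the argument is a decoding step. I would fix a decoder $D$ sending each syndrome to a minimum-weight Pauli representative, with $D(0) = I$, and abbreviate $\tilde P_i := D(\sigma(P_i))$, so that $|\tilde P_i| = |P_i|^{\mathrm{red}}$. Each $P_i \oplus \tilde P_i$ lies in the normalizer of the stabilizer group and therefore represents either a stabilizer (trivial) or a nontrivial logical class. The two endpoints give $P_0 \oplus \tilde P_0 = I$ (trivial) and $P_T \oplus \tilde P_T = L$ (nontrivial, using $\sigma(L) = 0$), so some index $i^\star$ witnesses a flip of the logical class. Writing $e := P_{i^\star - 1} \oplus P_{i^\star}$ for the one-qubit difference, the XOR of the two normalizer elements
\begin{equation}
(P_{i^\star} \oplus \tilde P_{i^\star}) \oplus (P_{i^\star-1} \oplus \tilde P_{i^\star-1}) = e \oplus \tilde P_{i^\star-1} \oplus \tilde P_{i^\star}
\end{equation}
is a nontrivial logical operator, hence has weight at least $d$. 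This gives $|\tilde P_{i^\star-1}| + |\tilde P_{i^\star}| \geq d - 1$, so that $\max(|P_{i^\star-1}|^{\mathrm{red}}, |P_{i^\star}|^{\mathrm{red}}) \geq w$.

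To extract a syndrome bound I would use that $|P_i|^{\mathrm{red}}$ is $1$-Lipschitz along the path: if $P_{i+1} = P_i \oplus e$ with $|e| = 1$, then $\tilde P_i \oplus e$ has the same syndrome as $P_{i+1}$, so $|P_{i+1}|^{\mathrm{red}} \leq |P_i|^{\mathrm{red}} + 1$, and symmetrically in the other direction. Starting from $|P_0|^{\mathrm{red}} = 0$ and given that the reduced weight reaches a value at least $w$, the integer intermediate-value property locates a step $j$ with $|P_j|^{\mathrm{red}}$ in the range $[w, t]$ where confinement applies. Confinement then gives $f(|\sigma(P_j)|) \geq |P_j|^{\mathrm{red}} \geq w$, and monotonicity of $f$ yields $|\sigma(P_j)| \geq f^{-1}(w)$, completing the path bound.

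The main obstacle is making the flip step rigorous: one must verify that with a fixed decoder the assignment $i \mapsto P_i \oplus \tilde P_i$ takes well-defined values in the normalizer, that its logical coset changes between consecutive indices at some $i^\star$, and that the resulting nontrivial logical at the transition indeed has weight at most $1 + |\tilde P_{i^\star-1}| + |\tilde P_{i^\star}|$. The Lipschitz bound on $|P_i|^{\mathrm{red}}$ and the closing application of confinement are then routine.
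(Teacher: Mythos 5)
Your proposal is correct and is essentially the paper's own proof in slightly different notation: your $\tilde P_i = D(\sigma(P_i))$ plays the role of the paper's $P_i V_i$, your flip index $i^\star$ is the paper's $j^\star$, your weight bound $d \leq 1 + |\tilde P_{i^\star-1}| + |\tilde P_{i^\star}|$ is the paper's triangle-inequality step, and your $1$-Lipschitz / intermediate-value argument on the reduced weight matches the paper's final passage from $(d-1)/2$ down to the confinement hypothesis. The "main obstacle" you flag at the end is not a gap: the coset-flip step is exactly what the paper carries out, and it goes through as you describe.
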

\begin{proof}
    Without loss of generality, suppose that the code energy barrier $\Delta(\mathcal{C})$ is given by logical operator $L$, with a path $r_L$. Since $|L| \geqslant d \geqslant t$, we first show that there always exist a step $P_c$ along the path $r_L$, such that $\wt_{\mathrm{red}}(P_c)\geqslant (d-1)/2$. This can be argued as follows~\cite{campbell2019theory}:
    
    Suppose $r_L = \{P_0, P_1, \cdots, P_L\}$, and each $E_j = P_j P_{j-1}$ is single qubit Pauli operator. For each $P_j$, we define the reduced weight as:
    \begin{eqnarray}
        \mathrm{wt}_{\mathrm{red}}(P_j):=\min_V\{\mathrm{wt}(P_j V): V \in \mathcal{P}, \sigma(V)=0\},
    \end{eqnarray}
    where the minimization is over Pauli operators $V$ with trivial syndrome. Let $V_j$ denote the Pauli operators achieving this minimum for $P_j$. Since $\sigma\left(V_j\right)=0$, $V_j$ is either a stabilizer or a nontrivial logical operator.
    
    Trivially, we find $V_0=I$ and $V_L=P_L$, indicating that the sequence starts with a stabilizer and ends with a logical operator. Therefore, there exists an index $j^*$ such that $V_{j^*}$ is a stabilizer, while $V_{j^*+1}$ is a nontrivial logical operator. Then $V_{j^*} V_{j^*+1}$ is also a nontrivial logical operator, leading to:
    \begin{eqnarray}
        d \leq \mathrm{wt}\left(V_{j^*} V_{j^*+1}\right) .
    \end{eqnarray}
    Notice that
    \begin{eqnarray}
        \operatorname{wt}\left(V_{j^*} V_{j^*+1}\right)&=&\operatorname{wt}\left(V_{j^*} V_{j^*+1} P_{j^*}P_{j^*}^\dag P_{j^*+1} P_{j^*+1}^\dag\right) \nonumber \\
        &=&\operatorname{wt}\left(V_{j^*}P_{j^*} V_{j^*+1} P_{j^*+1} P_{j^*}^\dag P_{j^*+1}^\dag\right). \nonumber \\
    \end{eqnarray}
    Applying the triangle inequality twice gives:
    \begin{eqnarray}
        \mathrm{wt}\left(V_{j^*} V_{j^*+1}\right) \leq \mathrm{wt}_{\mathrm{red}}\left(P_{j^*}\right)+\mathrm{wt}_{\mathrm{red}}\left(P_{j^*+1}\right)+1 .
    \end{eqnarray}
    
    Thus, combining this result, we obtain:
    \begin{eqnarray}
        d \leq 2 \cdot \max \left[\mathrm{wt}_{\mathrm{red}}\left(P_{j^*}\right), \mathrm{wt}_{\mathrm{red}}\left(P_{j^*+1}\right)\right]+1,
    \end{eqnarray}
    which implies
    \begin{eqnarray}
        \frac{d-1}{2} \leq \max \left[\mathrm{wt}_{\mathrm{red}}\left(P_{j^*}\right), \mathrm{wt}_{\mathrm{red}}\left(P_{j^*+1}\right)\right].
    \end{eqnarray}
    Thus, the sequence of reduced weights $\left\{\mathrm{wt}_{\mathrm{red}}\left(P_0\right), \mathrm{wt}_{\text {red}}\left(P_1\right), \cdots, \mathrm{wt}_{\text {red }}\left(P_L\right)\right\}$ starts and ends with zero, and it reach a value of at least $\left(d-1\right) / 2$ in the middle. Moreover, the local error condition ensures that $\left|\mathrm{wt}_{\mathrm{red}}\left(P_{j+1}\right)-\mathrm{wt}_{\text {red }}\left(P_j\right)\right|$ is either 0 or 1, then the sequence must include every integer from 0 to $\left(d-1\right) / 2$.

    Recall that $(t,f)$ confinement means that for all errors $e$ with $\wt_{\mathrm{red}}(e)\leq t$, $f(|\sigma(e)|) \geq \wt_{\mathrm{red}}(e)$. Then for $w = \min [t, (d-1)/2]$, there must exist an $P_w$ with $\wt_{\text {red}}(P_w) = w$, such that $f(|\sigma(P_w)|) \geqslant \wt_{\mathrm{red}}(P_w) = w$. Given $f(\cdot)$ is monotonically increasing, $ |\sigma(P_c)| \geqslant f^{-1}(w)$. This implies that the energy barrier of code $\mathcal{C}$ is at least $f^{-1}(w)$.
\end{proof}

The converse statement of Lemma~\ref{lemma:confinement_energybarrier}—that any quantum code with a macroscopic energy barrier necessarily exhibits confinement—does not generally hold. This can be demonstrated with a straightforward counterexample. Consider a quantum LDPC code $\mathcal{C}$ constructed as a composite system comprising two regions: Region A, based on a 2D quantum repetition code $[[n,1,n]]$ with energy barrier $O(\sqrt{n})$, and Region B, based on a 1D quantum repetition code $[[n,1,n]]$ with constant energy barrier $O(1)$, see Fig.~\ref{fig:double_ising_model}. The overall code is a $[[2n,1,2n]]$ quantum repetition code with energy barrier $O(\sqrt{n})$ determined by Region A. However, for an error $e$ contained entirely within Region B, we can have $\wt_{\mathrm{red}}(e) = |e|$ that grows up to $(d-1)/2$ with $|\sigma(e)| \leq c$, where $c$ is constant. Consequently, the overall code fails to satisfy the confinement criterion.

Quintavalle et al. proved that all 3D hypergraph product codes exhibit $(t, f)$ X-confinement (according to their convention) where $t$ equals the minimum distance of classical codes and $f(x)= x^3/4$ (see Lemma 10 of Ref~\cite{Quintavalle2021single}). For a 3D hypergraph product code constructed with three identical classical codes defined by the same parity check matrix $\delta$ with distance $d$, the result in Ref~\cite{Quintavalle2021single} implies a lower bound for the energy barrier of Z-type logical operators as $\Omega\left(d^{\frac{1}{3}}\right)$. In section~\ref{sec:energy_barrier_of_3d_hgp}, we will demonstrate that the optimal lower bound for this energy barrier actually scales as $\Omega(d)$ in this scenario.

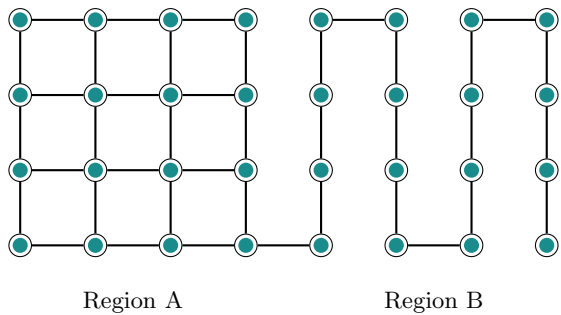
\begin{figure}
\centering
\begin{tikzpicture}[
    spin/.style={circle, draw, minimum size=0.3cm, inner sep=0pt},
    connection/.style={-,thick},
    node distance=1cm
]

\foreach \i in {0,...,3} {
    \foreach \j in {0,...,3} {
        \node[spin] (s\i\j) at (\i,\j) {};
        
        \pgfmathparse{rnd > 0.0 ? 1 : -1}
        \pgfmathtruncatemacro{\spinvalue}{\pgfmathresult}
        \ifnum\spinvalue=1
            \fill[teal, opacity=0.9] (s\i\j) circle (0.1cm);
        \fi
    }
}

\foreach \i in {0,...,2} {
    \foreach \j in {0,...,3} {
        \pgfmathtruncatemacro{\nexti}{\i+1}
        \draw[connection] (s\i\j) -- (s\nexti\j);
    }
}

\foreach \i in {0,...,3} {
    \foreach \j in {0,...,2} {
        \pgfmathtruncatemacro{\nextj}{\j+1}
        \draw[connection] (s\i\j) -- (s\i\nextj);
    }
}

\foreach \i in {0,...,3} {
    \foreach \j in {0,...,3} {
        \pgfmathtruncatemacro{\idx}{4*\i + \j}
        \node[spin] (r\idx) at (\i+4, \j) {};
        
        \pgfmathparse{rnd > 0.0 ? 1 : -1}
        \pgfmathtruncatemacro{\spinvalue}{\pgfmathresult}
        \ifnum\spinvalue=1
            \fill[teal, opacity=0.9] (r\idx) circle (0.1cm);
        \fi
    }
}

\draw[connection] (r0) -- (r1);
\draw[connection] (r1) -- (r2);
\draw[connection] (r2) -- (r3);
\draw[connection] (r3) -- (r7);
\draw[connection] (r7) -- (r6);
\draw[connection] (r6) -- (r5);
\draw[connection] (r5) -- (r4);
\draw[connection] (r4) -- (r8);
\draw[connection] (r8) -- (r9);
\draw[connection] (r9) -- (r10);
\draw[connection] (r10) -- (r11);
\draw[connection] (r11) -- (r15);
\draw[connection] (r15) -- (r14);
\draw[connection] (r14) -- (r13);
\draw[connection] (r13) -- (r12);

\foreach \j in {0} {
    \draw[connection] (s3\j) -- (r\j);
}

\node[below=0.5cm] at (1.5,-0.0) {Region A};
\node[below=0.5cm] at (5.5,-0.0) {Region B};

\end{tikzpicture}
\caption{A composite system consisting of a 2D quantum repetition code (Region A) and a 1D quantum repetition code arranged in a snake pattern (Region B). The overall code exhibits an energy barrier determined by the 2D quantum repetition code, which scales as the linear dimension of Region A. However, confinement properties are influenced by the 1D quantum repetition component, resulting in the absence of confinement for the composite system.}
\label{fig:double_ising_model}
\end{figure}

\section{Tensor product code and its energy barrier}
\label{sec:tensor_product_code_and_its_energy_barrier}

In this section, we analyze the energy barrier of tensor product codes — a specific classical code structure that builds new classical codes from two existing ones.  We will demonstrate that the energy barrier of the resulting code is related to the distance of the underlying classical codes.

This result is essential for our subsequent analysis of HHGP codes. Specifically, we establish that the logical operators of HHGP codes admit a tensor product decomposition. This structural property enables us to derive a lower bound on the energy barrier of HHGP codes in terms of the distances of their constituent classical codes.

\subsection{Code construction}

Given two classical codes $\mathcal{C}_a$ and $\mathcal{C}_b$ with parity check matrix $\delta_a$ and $\delta_b$, the tensor product code $\mathcal{C}_c$  is a classical code defined to have the parity check matrix
\begin{eqnarray}
    \delta_c = \left(\begin{array}{c}
         \delta_a \otimes I_{n_b} \\
         I_{n_a} \otimes \delta_b
    \end{array}\right).
\end{eqnarray}

This tensor product structure can be also understood from the perspective of code space. Let $\mathcal{C}_a$ with parameters $\left[n_a, k_a, d_a\right]$ and $\mathcal{C}_b$ with parameters $\left[n_b, k_b, d_b\right]$ be two classical codes. The tensor product of these two codes, denoted as $\mathcal{C}_c = \mathcal{C}_a \otimes \mathcal{C}_b$, encodes messages as follows: The message of the tensor product code $\mathcal{C}_a \otimes \mathcal{C}_b$ can be represented as a $k_a \times k_b$ matrix $M$. The encoding proceeds in two stages: first, each column of $M$ is encoded using $\mathcal{C}_a$, yielding a $n_a \times k_b$ intermediate matrix; second, each row of this intermediate matrix is encoded using $\mathcal{C}_b$, producing the final $n_a \times n_b$ codeword.

Let $L_a$ be the logical operator of $\delta_a$, and $L_b$ be the logical operator of $\delta_b$, one can verify that $L_c = L_a \otimes L_b$ is the logical operator of $\delta_c$ because
\begin{eqnarray}
    \delta_c L_c &=& \left(\begin{array}{c}
         \delta_a \otimes I_{n_b} \\
         I_{n_a} \otimes \delta_b
    \end{array}\right) (L_a \otimes L_b) \nonumber \\
    &=&  \left(\begin{array}{c}
         (\delta_a \otimes I_{n_b}) (L_a \otimes L_b) \\
         (I_{n_a} \otimes \delta_b) (L_a \otimes L_b)
    \end{array}\right)  \nonumber \\
    &=& 0.
\end{eqnarray}
Here we used the relation $\delta_aL_a = \delta_bL_b = 0$. 

Given $k_a$ distinct logical operators $L_a$ and $k_b$ distinct logical operators $L_b$, one can construct $k_ak_b$ distinct logical operators $L_c$ through this tensor product construction.

Moreover, the distance of the tensor product code is at least $d_a d_b$. A simple argument demonstrates this: consider two different message matrices of code $\mathcal{C}_c$, denoted as $M_a$ and $M_b$. After the first encoding step, we obtain intermediate matrices $N_a$ and $N_b$, and the second encoding step produces the final codewords $C_a$ and $C_b$. If $M_a$ and $M_b$ differ in the $i$-th row, then $N_a$ and $N_b$ differ in at least $d_a$ positions in that row, corresponding to $d_a$ columns. Let the indices of these differing columns be $\{j_1, \ldots, j_{d_a}\}$. During the second encoding step, each of these columns generates differences in at least $d_b$ positions. Consequently, $C_a$ and $C_b$ differ in at least $d_a d_b$ positions. In summary, the tensor product code $\delta_c$ has parameters $[n_an_b, k_ak_b, d_ad_b]$.

\subsection{Energy barrier of tensor product code}
\label{sub:energy_barrier_tensor_product}

Let $L_a$ and $L_b$ be the logical operators of $\delta_a$ and $\delta_b$. Our goal is to determine the energy barrier of the operator $L_c = L_a \otimes L_b$, which serves as a logical operator for $\delta_c$.

To compute the energy barrier of $L_c$, we consider a path $r = \{P_0, P_1, \cdots, P_F\}$ with $P_0 = I$ and $P_F=L_c$. Then
\begin{equation}
    \Delta(L_c) = \min \{ \epsilon_{\max}(r): r \in w(0, L_c) \}.
\end{equation}
Note that $r$ is a path on the bit of code $\delta_c$. Given a $P_\ell \in r$, the energy (the number of violated checks) of $P_\ell$ is
\begin{eqnarray}
    \epsilon(P_\ell) &=& \wt(\delta_cP_\ell) \nonumber \\
    &=& \wt \left(\left(\begin{array}{c}
         \delta_a \otimes I_{n_b} \\
         I_{n_a} \otimes \delta_b
    \end{array}\right) P_\ell \right) \nonumber \\
    &=& \wt((\delta_a \otimes I_{n_b}) P_\ell) + \wt((I_{n_a}\otimes\delta_b)P_\ell).
    \label{eq:energy_pell_main}
\end{eqnarray}

Because $\wt(\cdot) \geqslant 0$, then we have 
\begin{eqnarray}
    \epsilon(P_\ell) \geqslant \wt((\delta_a \otimes I_{n_b}) P_\ell),
\end{eqnarray}
or 
\begin{eqnarray}
    \epsilon(P_\ell) \geqslant \wt((I_{n_a}\otimes\delta_b)P_\ell).
\end{eqnarray}

Because the path $r$ contains paths for $L_a$ and $L_b$, the above inequalities indicate that the energy barrier of $L_c$ is lower bounded by the energy barrier of $L_a$ or $L_b$. Furthermore, the energy barrier of code $\delta_c$ is then lower bounded by energy barrier of codes $\delta_a$ and $\delta_b$. We conclude this as the following Lemma
\begin{lemma}
\label{lemma:energy_barrier_tensor_product_energy}
    Given two classical codes $\delta_a$ and $\delta_b$ with parameters $[n_a, k_a, d_a, E_a]$ and $[n_b, k_b, d_b, E_b]$, where $E_a$ and $E_b$ are energy barriers of codes. Then the energy barrier of the tensor product code $\delta_c$ is lower bounded by
    \begin{eqnarray}
        \Delta(\delta_c) \geqslant \min [E_a, E_b].
    \end{eqnarray}
\end{lemma}

However, this may not be the tightest bound, particularly when codes $\delta_a$ and $\delta_b$ have constant energy barriers. In the following, we establish a relationship between the energy barrier of $L_c$ and the distance of $L_a$ and $L_b$.

\subsubsection{Lower bound}
Given two classical codes $\delta_a$ and $\delta_b$ with energy barriers $\Delta(\delta_a) = E_a$ and $\Delta(\delta_b) = E_b$. Here, we show that the energy barrier of the tensor product code $\delta_c$ can be lower bounded by the distances of the two constituent codes. Formally, we state the following lemma.

\begin{lemma}
\label{lemma:energy_barrier_of_tensor_product_codes}
    Given two classical codes $\delta_a$ and $\delta_b$ with parameters $[n_a, k_a, d_a, E_a]$ and $[n_b, k_b, d_b, E_b]$. The energy barrier of the tensor product code $\delta_c$ is lower bounded by
    \begin{eqnarray}
        \Delta(\delta_c) \geqslant \min [d_a, d_b].
    \end{eqnarray}
\end{lemma}

\begin{proof}
    Let us first establish a clear representation of the tensor product code structure. Given the parity check matrices $\delta_a \in \{0,1\}^{m_a \times n_a}$ and $\delta_b \in \{0,1\}^{m_b \times n_b}$, we can arrange the code words of $\delta_c$ on a two-dimensional grid of size $n_a \times n_b$. Then, each row of the 2D grid corresponds to a copy of code $\delta_b$, and each column corresponds to a copy of code $\delta_a$.
    
    The logical operator $L_c = L_a \otimes L_b$ can be visualized as a pattern on this grid, where $L_a$ specifies which rows are involved and $L_b$ specifies which columns are involved. Specifically, the support of $L_c$, denoted as $\mathrm{supp}(L_c)$, consists of positions $(i,j)$ where $i \in \mathrm{supp}(L_a)$ and $j \in \mathrm{supp}(L_b)$, forming a subgrid of size $|L_a| \cdot |L_b| = d_a d_b$.
    
    Any valid path $r = \{P_r^0, P_r^1, P_r^2, \ldots, P_r^L\}$ from the identity operator $I$ to the logical operator $L_c$, by definition, must flip all the bits in the subset $\mathrm{supp}(L_c)$. For any intermediate state $P_r^i$ along this path, the syndrome weight $\wt(\delta_c P_r^i)$ measures the energy cost at that step. To finish the proof, We show that for any valid path, there must exist a intermediate state with syndrome weight at least $\min [d_a, d_b]$.

    To flip all bits in $\mathrm{supp}(L_c)$, the path must eventually form a logical operator in each of the $d_a$ rows indexed by $\mathrm{supp}(L_a)$, as well as in each of the $d_b$ columns indexed by $\mathrm{supp}(L_b)$. More precisely, in each row $i \in \mathrm{supp}(L_a)$, the final configuration must contain a codeword of $\delta_b$. Similarly, each column $j \in \mathrm{supp}(L_b)$ must contain a codeword of $\delta_a$.

    Since the path proceeds by single bit flips, there must exist a critical intermediate state $P^{i^*}$, which represents the first moment when a logical operator of either $ \delta_a $ or $ \delta_b $ appears. More precisely, $P^{i^*}$ could be one of the following configurations:
    
    \begin{enumerate}
        \item Exactly one row $i \in \text{supp}(L_a)$ contains a non-trivial logical operator $L'_b$ of code $\delta_b$ (not necessarily to be $L_b$), while other rows and columns contain none, or
        \item Exactly one column $j \in \text{supp}(L_b)$ contains a non-trivial logical operator $L'_a$ of code $\delta_a$ (not necessarily to be $L_a$), while other rows and columns contain none, or
        \item In rare cases, exactly one row $i \in \text{supp}(L_a)$ contains a non-trivial logical operator $L'_b$ of code $\delta_b$ (not necessarily to be $L_b$), while simultaneously, exactly one column $j \in \text{supp}(L_b)$ contains a non-trivial logical operator $L'_a$ of code $\delta_a$ (not necessarily to be $L_a$), with all other rows and columns containing none.
    \end{enumerate}

    \textit{Case 1:} The row containing the logical operator has $d_b$ flipped bits at positions $(i, j_1), \ldots, (i, j_{d_b'})$ where $\{j_1, \ldots, j_{d_b'}\} \subseteq \mathrm{supp}(L_b')$. Each of these bits lies in a distinct column. Since these columns do not contain complete codewords of $\delta_a$, each such incomplete column contributes at least 1 to the syndrome weight. With $d_b' \geqslant d_b$, the total syndrome weight is at least $d_b$.

    \textit{Case 2:} Similarly, the column containing the logical operator $L_a'$ has $d_a'$ flipped bits, each in a distinct row that does not form a complete codeword of $\delta_b$. Thus each incomplete row contributes at least 1 to the syndrome weight, yielding a total contribution of at least $d_a$.

    \textit{Case 3:} The row containing a logical operator of $\delta_b$ contributes at least $d_b$ flipped bits, and the column containing a logical operator of $\delta_a$ contributes at least $d_a$ flipped bits. With out loss of generality, suppose these two logical operators intersect at position $(i,j)$. The row induces syndrome violations in $d_b - 1$ columns (excluding the column that contains a logical operator), and the column induces violations in $d_a - 1$ rows (excluding the row that contains a logical operator). Thus, the syndrome weight is at least $(d_a - 1) + (d_b - 1) = d_a + d_b - 2 \geq \min[d_a, d_b]$, where the final inequality holds when $d_a, d_b \geq 2$.

    In all cases, provided $d_a, d_b \geq 2$ (which is standard for codes with non-trivial distance), the syndrome weight at the critical state $P^{i^*}$ is at least $\min [d_a, d_b]$. Since the energy barrier $\Delta(L_c)$ is defined as the minimum over all paths of the maximum syndrome weight along each path, we conclude $\Delta(L_c) \geq \min[d_a, d_b].$

    The above result holds for any logical operator $L_c$ of $\delta_c$. Let us denote the set of logical operators of $\delta_c$ as $L(\delta_c)$, then we have
    \begin{equation}
    \Delta(\delta) = \min\{\Delta(L_c): L_c \in L(\delta_c)\} \geq \min[d_a, d_b].
    \end{equation}
\end{proof}

The intuition of Lemma~\ref{lemma:energy_barrier_of_tensor_product_codes} is from the 2D repetition code, which can be interpreted as the tensor product of two 1D repetition code. The 1D repetition code has parameters $[L, 1, L, \Omega(1)]$, where the energy barrier scales as $O(1)$. In contrast, the 2D repetition code exhibits parameters $\left[L^2, 1, L^2, \Omega(L)\right]$, with an energy barrier scaling as $\Omega(L)$.

The energy barrier of the 2D repetition code can be rigorously analyzed through its tensor product structure. Consider a 2D repetition code on an $L \times L$ square lattice, where the logical operator corresponds to flipping all bits on the lattice. For any valid path (meets local error condition) implementing this logical operator, there necessarily exists an intermediate configuration where precisely a single row or a single column of bits has been flipped  (or, a single row and a single column simultaneously). At this configuration, the number of violated checks (i.e., the energy cost) is at least $L$, as verified by analyzing the violated checks on each column and row. Consequently, the energy barrier of the 2D repetition code is lower-bounded by $L$.

\subsubsection{Upper bound - ``strip'' argument}

To establish an upper bound, one can just consider any specific path from the identity to a logical operator. We analyze two distinct implementation strategies: row-wise and column-wise implementation of logical operators. Same to the ``strip'' argument in Ref.~\cite{rakovszky2024physicsgoodldpccodes}.

We still view the tensor product code $\delta_c$ as operating on a 2D grid of size $n_a \times n_b$. In this representation, each row corresponds to a copy of code $\delta_b$, each column corresponds to a copy of code $\delta_a$.

For the row-wise implementation, let $r_a = \{P_a^0, P_a^1, \ldots, P_a^F\}$ represent an optimal path for implementing logical operator $L_a$, where $P_a^0 = I$ and $P_a^F = L_a$. By definition, the maximum syndrome weight along the path $r_a$ is $E_a$. Because $P_iP_{i+1} = E_i$ is a single Pauli for all $i$, we denote $q_a = \{j_1, j_2, \ldots, j_{L_a}\}$ as the position of single Pauli in each step of $r_a$. These positions correspond to rows on the 2D grid of the tensor product code.

We consider a path for implementing $L_c$ by applying $L_b$ sequentially according to the positions in $q_a$. Specifically, Our path $r_c$ for implementing $L_c$ begins with $P_c^0 = I$, for each position $j_k \in q_a$, we implement the logical operator $L_b$ on the corresponding row. After finishing all rows in $q_a$, we have $P_c^F = L_c$.

Now consider the energy barrier given by this specific $r_c$. Given any $P_c^i \in r_c$, the energy of $P_c^i$ is
\begin{eqnarray}
   \epsilon(P_c^i) = \wt((\delta_a \otimes I_{n_b}) P_c^i) + \wt((I_{n_a}\otimes\delta_b)P_c^i).
\end{eqnarray}

The maximum syndrome weight along this path is less than $d_b E_a+E_b$. This follows because for the part $\wt((\delta_a \otimes I_{n_b}) P_c^i)$, each column contributes a syndrome weight of at most $E_a$, and there are $d_b$ columns where $L_b$ has support, which gives $d_bE_a$. While for the part $\wt((I_{n_a}\otimes\delta_b)P_c^i)$, the maximum syndrome weight is $E_b$ because we implement $L_b$ sequentially.

Analogously, for column-wise implementation, let $r_b = \{P_b^0, P_b^1, \ldots, P_b^F\}$ represent an optimal path for implementing $L_b$, with maximum syndrome weight $E_b$. One can construct another path for $L_c$ by implementing $L_a$ on columns according to the positions provided by path $r_b$. The maximum syndrome weight along this path is less than $d_a E_b+E_a$.

By selecting the implementation strategy with the lower maximum syndrome weight, we establish the upper bound:
\begin{equation}\label{eq:delta_upper}
\Delta(L_c) \leqslant \min [d_bE_a+E_b, d_aE_b+E_a].
\end{equation}

In addition, we assert that (although we cannot provide a formal proof) there is no path with an energy barrier lower than $\min [d_bE_a, d_aE_b]$, which we formally state in the following conjecture.
\begin{conjecture}
\label{mainconjt}
    Given two classical codes $\delta_a$ and $\delta_b$ with parameters $[n_a, k_a, d_a, E_a]$ and $[n_b, k_b, d_b, E_b]$, the energy barrier of the tensor product code $\delta_c$ is
    \begin{eqnarray}
        \Delta(\delta_c) \geqslant \min[d_aE_b, E_ad_b].
    \end{eqnarray}
\end{conjecture}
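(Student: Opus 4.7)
The plan is to extend the first-completion argument of Lemma~\ref{lemma:energy_barrier_of_tensor_product_codes} by upgrading each per-row or per-column contribution from $1$ to the corresponding underlying energy barrier. As there, view a path $r=\{P_0,\dots,P_F\}$ from the identity to $L_c=L_a\otimes L_b$ as a trajectory on an $n_a\times n_b$ grid whose rows are copies of $\delta_b$ and whose columns are copies of $\delta_a$. For each row $i\in\mathrm{supp}(L_a)$, the restriction of $r$ to that row is a valid (possibly stuttering) path in $\delta_b$ from $0$ to $L_b$; since $L_b$ has energy barrier $E_b$ in $\delta_b$, there is some time $\sigma_i$ at which that row's $\delta_b$-syndrome weight is at least $E_b$. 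Symmetrically, for each column $j\in\mathrm{supp}(L_b)$ there exists a time $\tau_j$ at which that column's $\delta_a$-syndrome weight is at least $E_a$. This replaces the ``at least $1$'' estimate in the lemma by ``at least $E_a$'' or ``at least $E_b$''.

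Next I would reprise the case analysis of that lemma to locate a critical moment $t^\ast$ at which either some row has just become a nontrivial codeword $L'_a$ of $\delta_a$ while no column is yet a codeword of $\delta_b$, or the dual occurs. In the first case, each of the $\geq d_a$ columns crossing $\mathrm{supp}(L'_a)$ has exactly one flipped bit; from $t^\ast$ onward these columns must still evolve to the required codeword of $\delta_b$, and each residual subpath must cross a $\delta_b$-syndrome of order $E_b$. If one could show that the residual crossings for all $d_a$ columns can be pinned to a common time, the total syndrome at that time would be at least $d_aE_b$; the dual argument in the other case would yield $d_bE_a$, matching the conjecture.

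The principal obstacle is exactly this simultaneity of barrier crossings across rows or columns. The times $\sigma_i$ and $\tau_j$ are in general spread out along the path, so a naive sum across rows or columns does not produce a single instant where the syndrome weight matches the conjectured bound. To attack this, I would try two refinements. First, a potential-function argument: define a monotone quantity such as $\Phi(t)=\sum_j\max_{s\leq t}\psi_j(s)$, where $\psi_j(s)=\wt(\delta_a\cdot\mathrm{col}_j(P_s))$, and try to rule out the scenario in which $\Phi$ reaches $d_bE_a$ via many small contributions at distinct instants rather than at a single concentrated one. Second, a recursive approach: apply the weaker lemma bound to the smaller tensor-product substructures carved out by the completed rows and columns, then bootstrap to amplify $\min(d_a,d_b)$ to $\min(d_aE_b,d_bE_a)$. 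I expect this concentration step to be the key technical difficulty, and indeed the reason the statement has resisted formal proof: the flexibility of admissible trajectories in the tensor product code makes it delicate to concentrate the many separate barrier-crossing events into a single instant.
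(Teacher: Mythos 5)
You should be aware that the statement you were asked to prove is labeled a \emph{conjecture} in the paper, not a theorem or lemma, and the paper does not in fact prove it. What the paper does offer is (i) the matching upper bound $\Delta(\delta_c)\leqslant\min(d_bE_a+E_b,\,d_aE_b+E_a)$ via the row-wise/column-wise ``strip'' construction, (ii) a heuristic in the appendix that separately minimizes the two additive contributions $\sum_k\wt(\delta_aP_\ell^k)$ and $\sum_j\wt(\delta_bP_\ell^j)$ to reach the same $\min(d_aE_b,d_bE_a)$ expression, and (iii) numerical verification on random codes, together with an explicit acknowledgment that ``one needs to prove that there is no path with an energy barrier lower than $\min(d_aE_b,d_bE_a)$'' and that a rigorous argument remains open. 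So there is no proof in the paper to compare against, and your proposal should not be expected to close a gap the authors themselves could not.

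That said, your analysis is sound as far as it goes and, importantly, it pinpoints precisely the obstruction that the paper's discussion skates over. Your observation that each row $i\in\mathrm{supp}(L_a)$ induces a stuttering $\delta_b$-path from $0$ to $L_b$ that must cross syndrome weight $E_b$ at some time $\sigma_i$, and dually for columns, is correct; and your identification of the difficulty---that the crossing times $\sigma_i$ and $\tau_j$ are spread over the trajectory, so no single instant is forced to carry the aggregated weight $d_aE_b$ or $d_bE_a$---is exactly the ``concentration'' problem that makes this a conjecture rather than a lemma. The appendix's heuristic of ``first minimize one sum, then the other'' only shows that the strip path achieves the claimed value, not that every adversarial path must reach it; your potential function $\Phi(t)=\sum_j\max_{s\leq t}\psi_j(s)$ faces the same issue, since it is a running-max aggregate rather than an instantaneous energy, and as you note it can grow to $d_bE_a$ through many temporally disjoint small excursions. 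Your recursive/bootstrapping idea is also a reasonable direction, but it is not developed far enough to assess. In short: you have not proved the conjecture, but neither does the paper, and you have correctly located the technical bottleneck. If you want to make partial progress, one avenue worth trying is to prove the bound under an extra structural hypothesis on the admissible paths (e.g.\ paths that complete each column monotonically), which would at least formalize the strip-path optimality and perhaps suggest what invariant to track in the general case.
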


We conjecture this result and anticipate that an elegant, rigorous proof will be found in the future (See more discussion in the Appendix). Meanwhile, the lower bound provided in Lemma~\ref{lemma:energy_barrier_of_tensor_product_codes} is optimal with respect to the order of the distance, although the value $E_a$ and $E_b$ may yield a non-negligible difference in the absolute value of the energy barrier. Thus the lower bound in Lemma~\ref{lemma:energy_barrier_of_tensor_product_codes} is tight if underlying codes exhibit system size-dependent distances but constant energy barriers.

\section{Energy barrier of 3D hypergraph product codes}
\label{sec:energy_barrier_of_3d_hgp}

With all the necessary preparations in place, we now turn our attention to the energy barrier of HHGP codes. In this section, we consider the simplest instance, 3D hypergraph product codes that are constructed from three classical codes. 

We begin by reviewing the structure of 3D hypergraph product codes and analyzing their logical operators, defining a complete set as elementary canonical logical operators. We first address Z logical operators. Using the strip argument from Ref.~\cite{rakovszky2024physicsgoodldpccodes}, we establish an upper bound for the energy barrier. We then derive a lower bound by examining the structure of those elementary canonical Z logical operators. Based on energy barrier of tensor product codes (Lemma~\ref{lemma:energy_barrier_of_tensor_product_codes}), we demonstrate that this lower bound relates to the distances of the underlying classical codes. Finally, we discuss the energy barrier of the logical X operators.

\subsection{Code construction of 3D hypergraph product codes}

Consider three classical codes $\delta_a$, $\delta_b$, and $\delta_c$ with respective parameters $[n_{\ell}, k_{\ell}, d_{\ell}]$, where $\ell \in \{a,b,c\}$, and the parameters of their transposes $\delta_{\ell}^T$ are denoted by $[n_{\ell}^T, k_{\ell}^T, d_{\ell}^T]$. For convenience, we denote $r_\ell \equiv n_\ell^T$. Applying the 3D hypergraph product to these classical codes generates a length-4 chain complex~\cite{Quintavalle2021single}
\begin{equation}
    C_0 \xleftarrow{\partial_0} C_1 \xleftarrow{\partial_1} C_2 \xleftarrow{\partial_2} C_3,
\end{equation}
where $\partial_0, \partial_1$ and $\partial_2$ are
\begin{equation}
\partial_0=\left(\begin{array}{l}
\delta_a \otimes I_{n_b} \otimes I_{n_c} \\
I_{n_a} \otimes \delta_b \otimes I_{n_c} \\
I_{n_a} \otimes I_{n_b} \otimes \delta_c
\end{array}\right),
\label{eq:delta0}
\end{equation}

\begin{equation}
\partial_1=\left(\begin{array}{ccc}
I_{r_a} \otimes \delta_b \otimes I_{n_c}  & \delta_a \otimes I_{r_b} \otimes I_{n_c}  & 0_{r_ar_bn_c \times n_an_br_c} \\
I_{r_a} \otimes I_{n_b} \otimes \delta_c & 0_{r_an_br_c \times n_ar_bn_c} & \delta_a \otimes I_{n_b} \otimes I_{r_c}  \\
0_{n_ar_br_c \times r_an_bn_c} & I_{n_a} \otimes I_{r_b} \otimes \delta_c & I_{n_a} \otimes \delta_b \otimes I_{r_c} 
\end{array}\right),
\label{eq:delta1}
\end{equation}

\begin{equation}
\partial_2=\left(I_{r_a}  \otimes I_{r_b}  \otimes \delta_c \quad I_{r_a}  \otimes \delta_b \otimes I_{r_c}  \quad \delta_a \otimes I_{r_b}  \otimes I_{r_c} \right) .
\end{equation}
Here $I_{n}$ denotes an identity matrix of size $n \times n$. 

The relation $\partial_1\partial_0 = 0$ allows us to construct a quantum CSS code $\mathcal{C}\left(\delta_a, \delta_b, \delta_c\right)$ with parity check matrix $H_Z=\partial_0^T$ and $H_X=\partial_1$. The commutation relation follows as $H_XH_Z^T = \partial_1\partial_0 = 0$.

Additionally, since $\partial_2\partial_1 = 0$, the matrix $M =\partial_2$ serves as the meta-check for $H_X$. This means any valid X-syndrome must satisfy the constraints imposed by $M$.

The parameters of the code $\mathcal{C}\left(\delta_a, \delta_b, \delta_c\right)$ can be determined from the classical code $\delta_a$, $\delta_b$, and $\delta_c$. As demonstrated in Ref.~\cite{zeng2019higher}, the resulting quantum code $\mathcal{C}\left(\delta_a, \delta_b, \delta_c\right)$ is an $\left[\left[n, k, d_x, d_z\right]\right]$ code with
\begin{eqnarray}
\begin{aligned}
n & =n_a^T n_b n_c+n_a n_b^T n_c+n_a n_b n_c^T, \\
k & =k_a^T k_b k_c+k_a k_b^T k_c+k_a k_b k_c^T, \\
d_x & =\min \left[d_a^T, d_b^T, d_c^T\right], \\
d_z & =\min \left[d_b d_c, d_a d_c, d_a d_b\right] .
\end{aligned}
\end{eqnarray}

\subsection{Logical operators of 3D hypergraph product codes}

The logical operators of stabilizer codes commute with all elements in the stabilizer group while remaining distinct from it. We first examine the Z-logical operators, which are operators in $\ker(H_X) \backslash \img(H_Z^T)$.

Denote the binary form of logical operators of classical codes $\delta_a$, $\delta_b$, and $\delta_c$ as $L_A = \{L_A^1, L_A^2, \ldots, L_A^{k_a}\}$, $L_B = \{L_B^1, L_B^2, \ldots, L_B^{k_b}\}$, and $L_C = \{L_C^1, L_C^2, \ldots, L_C^{k_c}\}$, respectively. We first show that the following three sets of binary vectors give a complete set of Z logical operators.
\begin{equation}
\begin{aligned}
\label{3dzlogicaloperators}
    &L_Z^1 = \sum_{i,j,k}\alpha_{ijk}\left(\begin{array}{c}
         v_i\otimes L_B^j \otimes L_C^k \\
         0_{n_ar_bn_c} \\
         0_{n_an_br_c}
    \end{array}\right),\\
    &L_Z^2 = \sum_{i,j,k} \beta_{ijk}\left(\begin{array}{c}
    0_{r_an_bn_c}\\
    L_A^i \otimes w_j \otimes L_C^k\\
    0_{n_an_br_c}
    \end{array}\right),\\
    &L_Z^3 = \sum_{i,j,k}\gamma_{ijk} \left(\begin{array}{c}
    0_{r_an_bn_c}\\
    0_{n_ar_bn_c}\\
    L_A^i \otimes L_B^j \otimes u_k 
    \end{array}\right),\\
\end{aligned}
\end{equation}
where $v_i, w_j$, and $u_k$ are unit vectors with length $r_a, r_b$, and $r_c$, respectively. $0_{n}$ is a zero vector with length $n$, and $\alpha_{ijk}, \beta_{ijk}, \gamma_{ijk} \in \{0,1\}$. Given that $\delta_a L_A^i = \delta_b L_B^j = \delta_c L_C^k = 0$, one can confirm that $H_XL_Z^l = 0$ for $l \in \{1,2,3\}$. 

For $L_Z^1$, the condition $H_XL_Z^1 = 0$ ensures that any X-type stabilizer from $H_X$ commutes with $L_Z^1$. Similarly, any Z-type stabilizer from $H_Z$ naturally commutes with $L_Z^1$ as they both consist solely of Pauli Z operators. To prove that $L_Z^1$ are Z logical operators, we must prove that $L_Z^1$ are not in the image of $H_Z^T$. To proceed with this analysis, we first denote:
\begin{eqnarray}
    L_Z^{1,i,j,k} =\left(\begin{array}{c}
         v_i\otimes L_B^j \otimes L_C^k \\
         0_{n_ar_bn_c} \\
         0_{n_an_br_c}
    \end{array}\right).
    \label{eq:elementary_canonical_logical}
\end{eqnarray} 
For $a_i \in \{0,1\}$, suppose that the linear combination of these operators 
\begin{eqnarray}
    \sum_i a_iL_Z^{1,i,j,k} =\left(\begin{array}{c}
         (\sum_i a_iv_i) \otimes L_B^j \otimes L_C^k \\
         0_{n_ar_bn_c} \\
         0_{n_an_br_c}
         \end{array}\right)
\end{eqnarray}
is in the image of matrix $H_Z^T$. Then, there exists a vector $x$ such that
\begin{eqnarray}
   \sum_i a_i L_Z^{1,i,j,k} &=& \left(\begin{array}{c}
     (\sum_i a_iv_i) \otimes L_B^j \otimes L_C^k \\
     0_{n_ar_bn_c} \\
     0_{n_an_br_c}
   \end{array}\right) \nonumber \\
   &=& H_Z^T(x \otimes L_B^j \otimes L_C^k) \nonumber \\
   &=&  \left(\begin{array}{c}
     (\delta_a \otimes I_{n_b} \otimes I_{n_c})(x \otimes L_B^j \otimes L_C^k) \\
     (I_{n_a} \otimes \delta_b \otimes I_{n_c})(x \otimes L_B^j \otimes L_C^k) \\
     (I_{n_a} \otimes I_{n_b} \otimes \delta_c)(x \otimes L_B^j \otimes L_C^k)
   \end{array}\right) \nonumber \\
   &=& \left(\begin{array}{c}
     \delta_a x \otimes L_B^j \otimes L_C^k \\
     0_{n_ar_bn_c}\\
     0_{n_an_br_c}
   \end{array}\right).
\end{eqnarray}

Recall the number of encoded bits for $\delta_a^T$ is $k_a^T$, then $\dim(\img(\delta_a)) = n_a - k_a^T$, it is feasible to select a minimum of $k_a^T$ unit vectors $v_i$, ensuring that any combination of these $v_i$ remains external to the image of $\delta_a$. Specifically, for these $k_a^T$ unit vectors $v_i$, there does not exist any vector $x$ such that
\begin{eqnarray}
  \sum_{i=1}^{k_a^T} a_i v_i = \delta_a x.
\end{eqnarray}

For fixed $L_B^j$ and $L_C^k$, these $k_a^T$ unit vectors generate $k_a^T$ operators $L_Z^{1,i,j,k}$ that belong to $\ker(H_X) \backslash \img(H_Z^T)$. Since $\delta_b$ has $k_b$ logical bits and $\delta_c$ has $k_c$ logical bits, there are $k_b$ distinct $L_B$ and $k_c$ distinct $L_C$ operators. This yields $k_a^Tk_bk_c$ unique instances of $L_Z^{1, i,j,k}$, which function as logical Z operators of the quantum code.

Applying the same logic to operators $L_Z^2$ and $L_Z^3$ defined in Eq.~(\ref{3dzlogicaloperators}), we can demonstrate that there exist $k_ak_b^Tk_c$ distinct $L_Z^{2,i,j,k}$ and $k_ak_bk_c^T$ distinct $L_Z^{3,i,j,k}$ that serve as logical Z operators of the quantum code.

This completely characterizes all $k_a^Tk_bk_c + k_ak_b^Tk_c + k_ak_bk_c^T$ logical Z operators of the 3D hypergraph product codes. We refer to these as \textit{elementary canonical} logical Z operators.

We now examine the logical X operators, which are in the set $\ker(H_Z) \backslash \img(H_X^T)$. Consider the following three types of Pauli X operators,
\begin{eqnarray}
    &L_X^1 = \sum_{i,j,k} \alpha_{ijk}' \left(\begin{array}{c}
         L_A^{i'}\otimes w_j \otimes u_k  \\
         0_{n_ar_bn_c} \\
         0_{n_an_br_c}
    \end{array}\right),\\
    &L_X^2 = \sum_{i,j,k} \beta_{ijk}'\left(\begin{array}{c}
    0_{r_an_bn_c}\\
    v_i \otimes L_B^{j'} \otimes u_k\\
    0_{n_an_bn_c}
    \end{array}\right),\\
    &L_X^3 = \sum_{i,j,k} \gamma_{ijk}'\left(\begin{array}{c}
    0_{r_an_bn_c}\\
    0_{n_ar_bn_c}\\
    v_i \otimes w_j \otimes L_C^{k'}
    \end{array}\right).\\
\end{eqnarray}
Here $v_i, w_j$, and $u_k$ are unit vectors of length $n_a, n_b$ and $n_c$ respectively. Moreover, $L_A^{i'}, L_B^{j'}$ and $L_C^{k'}$ are logical operators of the code defined by the parity check matrix $\delta_a^T, \delta_b^T$, and $\delta_c^T$ respectively. 

For $L_X^1$, $H_ZL_X^1 = 0$ implies that any Z-type stabilizer from $H_Z$ commutes with $L_X^1$. Meanwhile, any X-type stabilizer in $H_X$ also commutes with $L_X^1$ as they consist only of Pauli X operators. To identify logical X operators, we need to find Pauli X operators in the set $L_X^1$ that are not present in the image of $H_X^T$. We denote:
\begin{eqnarray}
    L_X^{1,i,j,k} = \left(\begin{array}{c}
         L_A^{i'}\otimes w_j \otimes u_k  \\
         0_{n_ar_bn_c} \\
         0_{n_an_br_c}
    \end{array}\right).
\end{eqnarray}
For $b_j, c_k \in \{0,1\}$, suppose that the linear combination of these operators 
\begin{eqnarray}
    \sum_{jk} b_jc_k L_X^{1,i,j,k} = \left(\begin{array}{c}
         L_A^{i'}\otimes \sum_j b_j w_j \otimes \sum_k c_k u_k  \\
         0_{n_ar_bn_c} \\
         0_{n_an_br_c}
    \end{array}\right)
\end{eqnarray}
is in the image of matrix $H_X^T$. Then, there exists vectors $y, z$ such that
\begin{eqnarray}
   \sum_{jk} b_j c_k L_X^{1,i,j,k} &=& \left(\begin{array}{c}
     L_A^{i'}\otimes \sum_j b_j w_j \otimes \sum_k c_k u_k\\
     0_{n_ar_bn_c} \\
     0_{n_an_br_c}
   \end{array}\right) \nonumber \\
   &=& H_X^T(L_A^{i'} \otimes y \otimes z)
\end{eqnarray}

Recall that $k_b^T$ and $k_c^T$ represent the number of encoded bits for $\delta_b^T$ and $\delta_c^T$ respectively. Then with $\dim(\img(\delta_b^T)) = n_b - k_b$ and $\dim(\img(\delta_c^T)) = n_c - k_c$, one can select at least $k_b$ unit vectors $w_j$ and $k_c$ unit vectors $u_k$ such that any combination of these vectors remains outside the image of $\delta_b^T$ and $\delta_c^T$ respectively.

Given that there are $k_a^T$ different $L_A^{i'}$, and paralleling our analysis of logical Z operators, we can verify that there exist $k_a^Tk_bk_c$ distinct $L_X^{1,i,j,k}$, $k_ak_b^Tk_c$ distinct $L_X^{2,i,j,k}$, and $k_ak_bk_c^T$ distinct $L_X^{3,i,j,k}$ that serve as logical X operators in $\ker(H_Z) \backslash \img(H_X^T)$, we call those logical opertors as elementary canonical logical X operators. This accounts for all $k_a^Tk_bk_c + k_ak_b^Tk_c + k_ak_bk_c^T$ logical X operators of the code.

\subsection{Energy barrier of Z logical operators}

We now analyze the energy barrier for logical Z operators in 3D hypergraph product codes. First, we employ a strip argument to establish an upper bound. Then, using Lemma~\ref{lemma:energy_barrier_of_tensor_product_codes}, we derive a lower bound for the energy barrier for logical Z operators.

\subsubsection{Upper bound: ``strip'' argument}

The upper bound on the energy barrier can be obtained by considering a specific path from the identity to any logical operator. By choosing this logical operator to be an elementary canonical logical operator, and considering the ``strip'' argument mentioned in Sec.~\ref{sub:energy_barrier_tensor_product}, one can obtain an upper bound.

For elementary canonical logical operator in the form of $L_Z^{1,i,j,k}$, with fixed $i, j$, and $k$, the logical operator takes the form as in the Eq.~(\ref{eq:elementary_canonical_logical}). Since $v_i$ is a unit vector, the effective logical operator simplifies to $L_Z^{1^{\prime},j,k}=L_B^j \otimes L_C^k$. This can be interpreted exactly as a logical operator of a tensor product code. One can construct the same kind of strip configuration when implementing these logical operators. Thus, the energy barrier of this logical operator is upper bounded by the strip argument as $\Delta(L_Z^{1^{\prime},j,k}) \leqslant \min \left[E_b d_c, E_c d_b\right]$, where $E_b$ and $E_c$ are the energy barriers of the logical operators $L_B^j$ and $L_C^k$, respectively, and $d_b$ and $d_c$ are their corresponding distances.

With the same argument, the energy barrier of the canonical logical Z operator $L_Z^{2,i,j,k}$ is upper bounded by $\min \left[E_a d_c, E_c d_a\right]$, and the energy barrier of $L_Z^{3,i,j,k}$ is upper bounded by $\min \left[E_a d_b, E_b d_a\right]$.

In summary, consider a 3D hypergraph product code $\mathcal{C}$ constructed from classical codes $\delta_a$, $\delta_b$, and $\delta_c$ with parameters $\left[n_{\ell}, k_{\ell}, d_{\ell}, E_{\ell}\right]$, where $\ell \in \{a, b, c\}$. Based on the structure of the canonical logical operators we have derived, the energy barrier of the logical Z operators can be shown to have the following upper bound.
\begin{eqnarray}
    \Delta(L_Z) \leqslant  \min \{d_\ell E_m: \ell,m \in \{a,b,c\}, \ell \neq m \}.
\end{eqnarray}

\subsubsection{Lower bound}

We now establish a lower bound for the energy barrier of Z-type logical operators by analyzing the structure of these canonical logical operators. Consider a 3D hypergraph product code $\mathcal{C}$ constructed from classical codes $\delta_a$, $\delta_b$, and $\delta_c$ with parameters $\left[n_{\ell}, k_{\ell}, d_{\ell}, E_{\ell}\right]$, where $\ell \in \{a, b, c\}$. The parity check matrices of this quantum code are $H_Z=\partial_0^T$ and $H_X=\partial_1$, as defined in Eq.~(\ref{eq:delta0}) and Eq.~(\ref{eq:delta1}). Recall that the Z-type canonical logical operators $L_Z$ fall into three categories.
\begin{equation}
\begin{aligned}
    &L_Z^1 = \sum_{i,j,k} \alpha_{ijk}((v_i\otimes L_B^j \otimes L_C^k)^T, 0^T, 0^T)^T, \\
    &L_Z^2 = \sum_{i,j,k} \beta_{ijk}(0^T, (L_A^i \otimes w_j \otimes L_C^k)^T, 0^T)^T, \\
    &L_Z^3 = \sum_{i,j,k} \gamma_{ijk}(0^T, 0^T,  (L_A^i \otimes L_B^j \otimes u_k)^T )^T. 
\end{aligned}
\end{equation}
where (i) $\delta_a L_A^i = \delta_b L_B^j = \delta_c L_C^k = 0$ and (ii) $v_i \notin \operatorname{Im}\left(\delta_a\right)$, $w_j \notin \operatorname{Im}\left(\delta_b\right)$  and $u_k \notin \operatorname{Im}\left(\delta_c\right)$ are unit vectors. We note that $i \in \{1, \ldots, k_a^T\}$, $j \in \{1, \ldots, k_b^T\}$  and $k \in \{1, \ldots, k_c^T\} $ and that the set of logical operators expressible in this form is complete. A canonical logical operator is elementary if only one of the coefficients, $\alpha_{ijk}$ (also $\beta_{ijk}$ or $\gamma_{ijk}$), is equal to one.

The Tanner graph representation provides an intuitive framework that is well-suited for our analysis. Let $\mathcal{G}_1(V_1, C_1)$, $\mathcal{G}_2(V_2, C_2)$ and $\mathcal{G}_3(V_3, C_3)$ be the Tanner graphs of codes defined by $\delta_a$, $\delta_b$ and $\delta_c$, respectively. Here, $V_1$, $V_2$ and $V_3$ represent the set of bits, and $C_1$, $C_2$ and $C_3$ denote the set of checks. We use $\{v_1^i: i \in \{1,2,\cdots n_a\}\}$ (resp. $\{v_2^j: j \in \{1,2,\cdots n_b\}\}$, $\{v_3^k: j \in \{1,2,\cdots n_c\}\}$) to refer to bit vertices in $V_1$ (resp. $V_2$, $V_3$), and also as length $n_a$ (resp. $n_b$, $n_c$) unit vectors with the $i$th (resp. $j$th, $k$th) entry as $1$. 

The 3D hypergraph product $\mathcal{G}_1 \times \mathcal{G}_2 \times \mathcal{G}_3$ is a bipartite graph with vertex set $V \cup C$, where
\begin{eqnarray}
    V = V_1 \otimes V_2 \otimes C_3 \cup V_1 \otimes C_2 \otimes V_3 \cup C_1 \otimes V_2 \otimes V_3
\end{eqnarray} is the qubit set, and 
\begin{eqnarray}
    C &=& V_1 \otimes V_2 \otimes V_3 \cup C_1 \otimes C_2 \otimes V_3 \cup C_1 \otimes V_2 \otimes C_3 \nonumber\\
    &&\cup V_1 \otimes C_2 \otimes C_3 \cup C_1 \otimes C_2 \otimes C_3
\end{eqnarray}
is the check and meta-check set. More specifically, $V_1 \otimes V_2 \otimes V_3$ is the set the Z type checks, $C_1 \otimes C_2 \otimes V_3 \cup C_1 \otimes V_2 \otimes C_3 \cup V_1 \otimes C_2 \otimes C_3$ represents the set of X checks, and $C_1 \otimes C_2 \otimes C_3$ is meta-checks on X checks.

The set of qubits can be partitioned into three subsets, $V_1 \otimes V_2 \otimes C_3$, $V_1 \otimes C_2 \otimes V_3$, and $C_1 \otimes V_2 \otimes V_3$. For $H_X = \partial_1$, there are three column blocks (see Eq.~(\ref{eq:delta1})). The first column block of $H_X$ acts on $C_1 \otimes V_2 \otimes V_3$, the second column block of $H_X$ acts on $V_1 \otimes C_2 \otimes V_3$, and the third column block acts on $V_1 \otimes V_2 \otimes C_3$. Moreover, the subset $C_1 \otimes V_2 \otimes V_3$ can be further partitioned into $n_a^T \equiv r_a$ subsets $\{c_1^1 \otimes V_2 \otimes V_3, c_1^2 \otimes V_2 \otimes V_3, \cdots, c_1^{r_a} \otimes V_2 \otimes V_3 \}$, where $c_1^k \otimes V_2 \otimes V_3 := \{c_1^k \otimes y \otimes z: y\in V_2, z \in V_3 \}$ and $V_2 = \{v_2^1,\ldots, v_2^{n_b} \}$, $V_3 = \{v_3^1,\ldots, v_3^{n_c} \}$.

Thus, a Z-type Pauli operator can be expressed as a bit-string $z=\left({z^{(1)}}^T, {z^{(2)}}^T, {z^{(3)}}^T\right)^T$, where $z^{(1)}$ is supported on the qubit subset $C_1 \otimes V_2 \otimes V_3$ with vector space $\mathbb{F}_2^{r_a} \otimes \mathbb{F}_2^{n_b} \otimes \mathbb{F}_2^{n_c}$, $z^{(2)}$ is supported on the qubit set $V_1 \otimes C_2 \otimes V_3$ with vector space $\mathbb{F}_2^{n_a} \otimes \mathbb{F}_2^{r_b} \otimes \mathbb{F}_2^{n_c}$ and $z^{(3)}$ is supported on the qubit set $V_1 \otimes V_2 \otimes C_3$ with vector space $\mathbb{F}_2^{n_a} \otimes \mathbb{F}_2^{n_b} \otimes \mathbb{F}_2^{r_c}$. Because $z^{(1)}$, $z^{(2)}$, and $z^{(3)}$ act as a tensor product of three vector spaces, one can view them also as 3D matrices. (For instance, $v_i \otimes u_j \otimes w_k$ for unit vectors $v_i$, $u_j$ and $w_k$ can be viewed as a 3D matrix whose entry is $1$ on the $i$'th a-slice, the $j$'th b-slice and the $k$'th c-slice and zero elsewhere. We call this procedure vector reshaping. After reshaping, $z^{(1)}$, $z^{(2)}$ and $z^{(3)}$ become $Z^{(1)}$, $Z^{(2)}$ and $Z^{(3)}$ respectively. Here, $Z^{(1)}$ is an $r_a \times n_b \times n_c$ matrix, and the entry $Z^{(1)}_{i,j,k}$ is supported on the qubit $r_1^i \otimes v_2^j \otimes v_3^k$. Moreover, the $i$th a-slice $Z^{(1)}_{i}$ is supported on qubits $\{r_1^i \otimes v_2^1\otimes v_3^1, r_1^i \otimes v_2^2\otimes v_3^1, \cdots, r_1^i \otimes v_2^{n_b}\otimes v_3^{n_c}\}$, which we refer to as $r_1^i \otimes V_2 \otimes V_3$. Similarly, the subset $V_1 \otimes C_2 \otimes V_3$ can be partitioned into $r_b$ b-slices, and $j$th b-slice is supported on the qubit subset $V_1 \otimes r_2^j \otimes V_3$.  Also, the subset $V_1 \otimes V_2 \otimes C_3$ can be partitioned into $r_c$ c-slices, and $k$th c-slice is supported on the qubit subset $V_1 \otimes V_2 \otimes r_3^k$ (defined similarly). 

An elementary canonical logical-Z operator, for example, in the form $(v_i\otimes L_B^j \otimes L_C^k, 0, 0)^T$, is supported on the subset $r_1^i \otimes V_2 \otimes V_3$ if $v_i=r_1^i$. In matrix form, it is supported on the $i$th a-slice of $Z^{(1)}$. Fig.~\ref{fg:3d_slices} provides an illustrative example of the 3D Toric code structure.

\begin{figure}[htbp]
  \centering
  \includegraphics[width=0.48\textwidth]{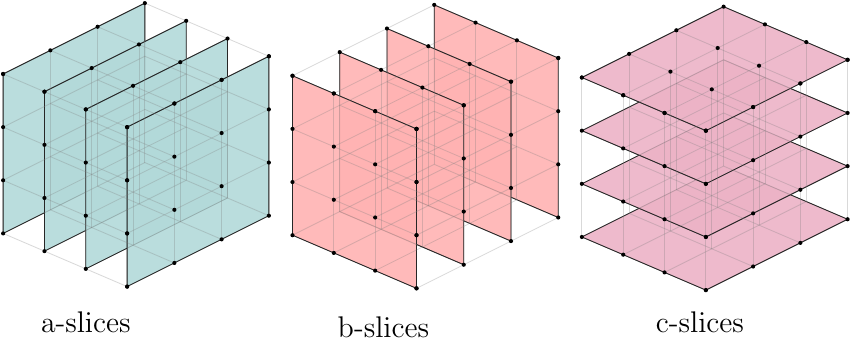}
  \caption{The 3D toric code can be structured as a cubic lattice (with periodic boundary conditions). From left to right, the figure illustrates its decomposition into a-slices, b-slices, and c-slices. The 3D Toric code encodes 3 logical qubits, with corresponding 3 logical Z operators. Each Z elementary canonical logical operator is supported on a single slice, with slices of the same color representing logical operators equivalent up to stabilizers. Generally, given any 3D hypergraph product code, one can create such a decomposition, although the structure within each slice will be more complicated and depends on the underlying classical codes.}
  \label{fg:3d_slices}
\end{figure}

We now prove a lower bound for the energy barrier of the canonical logical-Z operators, as shown in the following proposition.

\begin{proposition}\label{proposition:canonical_z_energy_barrier_3d}
    For any nontrivial canonical logical-Z operator $L$,
    \begin{equation}
        \Delta(L) \geq \min[d_a, d_b, d_c]
    \end{equation}
\end{proposition}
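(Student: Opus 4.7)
The plan is to reduce the 3D structure to a 2D tensor-product-code structure on a single slice of the qubit set and then invoke Lemma~\ref{lemma:energy_barrier_of_tensor_product_codes}. Since a nontrivial canonical Z-logical decomposes as $L = L_Z^1 + L_Z^2 + L_Z^3$ with $L_Z^\ell$ supported on the $\ell$-th qubit block, at least one summand is nonzero; the three cases are symmetric and collectively give the bound $\min(d_a,d_b,d_c)$, so I focus on $L_Z^1 \ne 0$. Writing $L_Z^1 = \sum_{i,j,k}\alpha_{ijk}\,v_i\otimes L_B^j\otimes L_C^k$ and slicing along the a-direction, I select an a-slice index $i_0$ for which $M_{i_0}:=\sum_{j,k}\alpha_{i_0,j,k}\,L_B^j\otimes L_C^k$ is a nonzero codeword of the tensor product code $\delta_{bc}$ on the $n_b\times n_c$ grid; such an $i_0$ exists precisely because $L_Z^1\ne 0$.

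Given any path $r=(P_0,\ldots,P_T)$ from $P_0=I$ to $P_T$ realising $L$, I track the restriction $s_t$ of the $z^{(1)}$-component of $P_t$ to slice $i_0$. Since $s_0=0$, $s_T=M_{i_0}$, and consecutive $s_t,s_{t+1}$ differ by at most one bit (or coincide when the flipped qubit lies outside slice $i_0$ of $z^{(1)}$), after collapsing no-op steps $(s_t)$ is a valid path in $\delta_{bc}$ from $0$ to $M_{i_0}$. Applying Lemma~\ref{lemma:energy_barrier_of_tensor_product_codes} then produces a critical step $t^*$ with tensor-product syndrome $|\sigma_{\delta_{bc}}(s_{t^*})|\geq\min(d_b,d_c)$. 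A small extension of the lemma's proof is needed here, since $M_{i_0}$ is a general codeword rather than a rank-one tensor $L_a\otimes L_b$, but the same three-case argument applies verbatim after identifying the first step at which some row or column of $s_t$ becomes a nonzero codeword of $\delta_c$ or $\delta_b$.

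The crux of the proof, and where I expect the heaviest bookkeeping, is lifting the tensor-product bound at $t^*$ to a bound on the actual 3D HGP X-syndrome $|H_X P_{t^*}|$. The TPC syndrome is realised as the row-block-1 checks on check a-slice $i_0$ (computing $(I_{r_a}\otimes\delta_b\otimes I_{n_c})e^{(1)}$ on slice $i_0$) together with the row-block-2 checks on the same slice (computing $(I_{r_a}\otimes I_{n_b}\otimes\delta_c)e^{(1)}$ on slice $i_0$), but both can receive cancelling contributions $(\delta_a\otimes I_{r_b}\otimes I_{n_c})e^{(2)}$ and $(\delta_a\otimes I_{n_b}\otimes I_{r_c})e^{(3)}$ from the other two qubit blocks. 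The key observation is that the completed logicals $L_Z^2,L_Z^3$ themselves satisfy $(\delta_a\otimes I\otimes I)L_Z^2=(\delta_a\otimes I\otimes I)L_Z^3=0$ by the canonical construction, so any cancellation must come from the mismatch of $e^{(2)}_{t^*},e^{(3)}_{t^*}$ with $L_Z^2,L_Z^3$, and that mismatch necessarily produces a compensating nonzero contribution to the row-block-3 syndrome $(I_{n_a}\otimes I_{r_b}\otimes\delta_c)e^{(2)}+(I_{n_a}\otimes\delta_b\otimes I_{r_c})e^{(3)}$. Formalising this trade-off via the meta-check identity $\partial_2\partial_1=0$ together with $\mathbb{F}_2$-triangle inequalities on syndrome supports should show that the total syndrome weight across the three row blocks at $t^*$ is at least $\min(d_b,d_c)$. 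Combining this with the symmetric arguments when $L_Z^2\ne 0$ or $L_Z^3\ne 0$ (giving $\min(d_a,d_c)$ and $\min(d_a,d_b)$ respectively) yields $\Delta(L)\geq\min(d_a,d_b,d_c)$.
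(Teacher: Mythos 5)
Your proposal takes a genuinely different route from the paper, and the lifting step you identify as the crux is where it breaks. The paper does not project $z^{(1)}$ onto a single a-slice $i_0$. Instead (Lemmas~\ref{lemma:path_within_subset_3d}, \ref{lemma:energy_barrier_of_composite_logical_3d} together with Lemmas~\ref{lemma:pauli_weight_reduction_3d}, \ref{lemma:pauli_weight_reduction_3d_2}) it deforms each Pauli along the path by taking the mod-2 sum of the a-slices of $Z^{(1)}$ indexed by a codeword $L_a$ of $\delta_a^T$, discarding $z^{(2)},z^{(3)}$ entirely. This choice is essential: because $L_a^T\delta_a=0$, the cross-contributions $\delta_a Z^{(2)}$ and $\delta_a Z^{(3)}$ to syndrome blocks one and two vanish \emph{identically} after summing over $\mathcal{C}(L_a)$ along every b-c line, and a parity count then shows the deformed Pauli never has higher syndrome weight than the original. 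The resulting path lives on a single slice, reduces cleanly to the tensor-product setting, and the remaining work is only to choose $L_a$ so that the deformed endpoint is nontrivial.

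Restricting to the single slice $i_0$ does not have this annihilation property, and your proposed trade-off does not hold. At your $t^*$, the block-one syndrome restricted to slice $i_0$ is $s_{t^*}\delta_b^T + (\delta_a Z^{(2)})_{i_0,\cdot,\cdot}$, and $(\delta_a Z^{(2)})_{i_0,\cdot,\cdot}$ can be essentially arbitrary; likewise block two involves $(\delta_a Z^{(3)})_{i_0,\cdot,\cdot}$. A cancellation there need not produce compensating weight in block three: if $Z^{(2)}$ contains a component $x\otimes y\otimes z$ with $(\delta_a x)_{i_0}=1$ and $z\in\ker\delta_c$, then $(\delta_a Z^{(2)})_{i_0,\cdot,\cdot}$ is nonzero (can cancel $s_{t^*}\delta_b^T$) while $Z^{(2)}\delta_c^T$ receives nothing from it. The meta-check identity $\partial_2\partial_1=0$ offers no help, since $\sigma=\partial_1 z$ satisfies it automatically for every $z$ and therefore imposes no further constraint on the block weights. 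So the projected path $(s_t)$ does not control the actual $H_X$-syndrome, and the claimed bound does not follow from your argument as written. (Your other flagged concern --- that $M_{i_0}$ is a general TPC codeword rather than a rank-one tensor --- is indeed a minor extension of Lemma~\ref{lemma:energy_barrier_of_tensor_product_codes}'s three-case argument; the real obstruction is the lifting step.) To repair the proof you would need the codeword-weighted slice sum, or another mechanism that exactly kills the $\delta_a Z^{(2)},\delta_a Z^{(3)}$ cross-terms.
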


We introduce two lemmas to prove this proposition. We use the following convention in the proof. For the subsequent proof, we adopt the following notation: a path $r = \{P_0, P_1, \ldots, P_\ell\}$ is said to be \emph{supported on} a subset $U \subseteq V$ if $\mathrm{supp}(P_i) \subseteq U$ for all $0 \leq i \leq \ell$.

\begin{lemma}
\label{lemma:path_within_subset_3d}
    For any elementary canonical logical-Z operator $L$ supported on $r_1^\alpha \otimes V_2 \otimes V_3$ (resp. $V_1 \otimes r_2^\beta \otimes V_3$, $V_1 \otimes V_2 \otimes r_3^\gamma$), its energy barrier is attained by a path supported on $r_1^\alpha \otimes V_2 \otimes V_3$ (resp. $V_1 \otimes r_2^\beta \otimes V_3$, $V_1 \otimes V_2 \otimes r_3^\gamma$).
\end{lemma}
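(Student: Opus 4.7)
The plan is to start from any path $r = \{P_0 = I, P_1, \ldots, P_F = L\}$ realizing the energy barrier of $L$ in the full code and extract from it a path supported entirely inside $r_1^\alpha \otimes V_2 \otimes V_3$ whose maximum syndrome weight does not exceed that of $r$. Since $L$ itself already lies in the slice, this will exhibit an optimal path that never leaves it, proving the claim.

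First I would define $\tilde P_t$ as the restriction of $P_t$ obtained by zeroing its components on $V_1 \otimes C_2 \otimes V_3$, on $V_1 \otimes V_2 \otimes C_3$, and on $r_1^{\alpha'} \otimes V_2 \otimes V_3$ for every $\alpha' \neq \alpha$. Consecutive $\tilde P_t$ and $\tilde P_{t+1}$ differ by at most a single qubit: they inherit the single-qubit move whenever the qubit flipped between $P_t$ and $P_{t+1}$ lies in the slice, and coincide otherwise. Collapsing trivial transitions produces a valid local-error path from $I$ to $L$ supported in the slice.

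The heart of the argument is to show $|\sigma(\tilde P_t)| \leq |\sigma(P_t)|$ for every $t$. Decomposing $P_t = (P_t^{(1)}, P_t^{(2)}, P_t^{(3)})$ along the column blocks of $\partial_1$ in Eq.~(\ref{eq:delta1}) and splitting $\sigma(P_t)$ along the three row blocks, $\sigma(\tilde P_t)$ is supported only on the slice-$\alpha$ part of the first two row blocks, where it equals the syndrome of $P_{t,\alpha}^{(1)}$ in the 2D tensor-product subcode $\delta_b \otimes \delta_c$ carried by the slice. The same slice-$\alpha$ checks of $\sigma(P_t)$ receive additional contributions only from $P_t^{(2)}$ via $\delta_a \otimes I_{r_b} \otimes I_{n_c}$ and from $P_t^{(3)}$ via $\delta_a \otimes I_{n_b} \otimes I_{r_c}$. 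I would argue that any cancellation that hides a bit of $\sigma(\tilde P_t)$ inside $\sigma(P_t)$ forces a matching nonzero bit either in another slice of the first two row blocks or in the third row block, using the observation that $P_t^{(2)}$ and $P_t^{(3)}$ also enter the third row block through $I_{n_a} \otimes I_{r_b} \otimes \delta_c$ and $I_{n_a} \otimes \delta_b \otimes I_{r_c}$. The resulting injection from hidden to uncancelled bits delivers the inequality.

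The main obstacle is making this compensation argument fully rigorous in the presence of simultaneous cancellations that may combine through $\delta_a$, $\delta_b$, and $\delta_c$: one must show that any pattern of $P_t^{(2)}, P_t^{(3)}$ that cancels a weight-$w$ portion of the slice-$\alpha$ syndrome of $\tilde P_t$ must itself produce at least weight $w$ in the union of the non-slice checks and the third row block of $\sigma(P_t)$. I expect this to reduce, via the block structure of $\partial_1$, to an injection on syndrome bits defined bit-by-bit, with the LDPC property ensuring each cancellation event is localized enough to track. The same argument, with the roles of $a$, $b$, $c$ permuted and the block structure relabelled, handles the $V_1 \otimes r_2^\beta \otimes V_3$ and $V_1 \otimes V_2 \otimes r_3^\gamma$ cases symmetrically.
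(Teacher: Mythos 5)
Your proposal deforms a path by \emph{projecting} each Pauli onto the single a-slice $r_1^\alpha \otimes V_2 \otimes V_3$ and zeroing everything else. The crux of your argument is the pointwise claim that this projection never increases the syndrome weight, i.e.\ $|\sigma(\tilde P_t)| \leq |\sigma(P_t)|$, which you propose to establish by an injection from ``hidden'' slice-$\alpha$ syndrome bits to surviving syndrome bits elsewhere. That inequality is false, so the compensation argument you outline cannot be completed.

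Here is an explicit counterexample. Take $\delta_a = \left(\begin{smallmatrix}1&1&1\\1&1&1\end{smallmatrix}\right)$ (so $n_a=3$, $r_a=2$) and $\delta_b=\delta_c=(1\;\;1)$. In the reshaped picture, set
\begin{equation}
Z^{(1)}_1 = Z^{(1)}_2 = \left(\begin{smallmatrix}1&0\\0&0\end{smallmatrix}\right), \quad Z^{(2)}_1 = (1,0), \quad Z^{(3)}_1 = (1,0)^T,
\end{equation}
and all other entries zero. Direct evaluation of Eq.~(\ref{eq:energy_expression_hgp_reshape_3d}) shows every row block cancels, so $\sigma(P)=0$. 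But your projection onto slice $\alpha=1$ produces $\tilde P$ with $Z^{(1)}_1$ alone and everything else zero, which has $\sigma(\tilde P) = 2$. The projection strictly \emph{increased} the energy. The issue is that $(\delta_a)_\alpha Z^{(2)}$ and $(\delta_a)_\alpha Z^{(3)}$ live directly on slice $\alpha$ of the syndrome and can cancel slice-$\alpha$ contributions of $Z^{(1)}$ exactly, while their footprint in the third row block can in turn be cancelled against each other; no injection rescues the inequality.

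The paper avoids this by \emph{folding} rather than projecting. Fix a codeword $L_a$ of $\delta_a^T$ with $(L_a)_\alpha = 1$ and let $\mathcal{C}(L_a)$ be its support. Replace $Z^{(1)}$ by the matrix $Z^{1,s}$ whose slice $\alpha$ equals $\sum_{k\in\mathcal{C}(L_a)} Z^{(1)}_k$ and is zero elsewhere, and set $Z^{(2)}=Z^{(3)}=0$. The key algebraic fact (Lemma~\ref{lemma:pauli_weight_reduction_3d}) is that each $(j,k)$-line of $\delta_a Z^{(2)}$ lies in the row space of $\delta_a$, hence is orthogonal to $L_a$; so summing that line over $\mathcal{C}(L_a)$ gives zero mod $2$. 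Thus whenever $(Z^{1,s}\delta_b^T)_{jk}=1$, the line $(Z^{(1)}\delta_b^T+\delta_a Z^{(2)})_{jk}$ has odd parity over $\mathcal{C}(L_a)$ and therefore contains at least one $1$, giving the inequality $\text{wt}(Z^{1,s}\delta_b^T) \leq \text{wt}(Z^{(1)}\delta_b^T + \delta_a Z^{(2)})$ slice by slice. The analogous statement holds for $\delta_c$ and $Z^{(3)}$, and the third row block is simply discarded. This folded deformation is still a valid local-error path ending at $L$ (since $L$ is supported only on slice $\alpha$), and its maximal energy cannot exceed that of the original path. That is the mechanism your projection misses; it is worth noticing that for the specific $P$ above, $\sum_{k\in\mathcal{C}(L_a)}Z^{(1)}_k = 0$, so the paper's fold correctly produces a Pauli of energy $0$, matching $\sigma(P)$, whereas your projection does not.
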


\begin{lemma}
\label{lemma:energy_barrier_of_composite_logical_3d}
    For any nontrivial canonical logical-Z operator $L$, the energy barrier $\Delta(L)$ is greater than or equal to the minimum energy barrier of the elementary canonical logical-Z operators.
\end{lemma}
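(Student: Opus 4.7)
The plan is to reduce the composite case to the elementary case by projecting any path for $L$ onto a slice containing one of its elementary components, and then to close via Lemma~\ref{lemma:path_within_subset_3d}. Decompose $L = \sum_i L_i$ into elementary canonical logical-Z operators supported on pairwise disjoint slices of the qubit hypergraph. Given any path $r = (P_0, P_1, \ldots, P_F)$ from $I$ to $L$ with $\max_j \epsilon(P_j) = \Delta(L)$, define the projected sequence $\pi_i(P_j)$ by restricting $P_j$ to the qubits of the slice carrying $L_i$; disjointness of the slice supports implies that $\{\pi_i(P_j)\}_j$ is a valid path from $I$ to $L_i$ under the local-error condition. Because a slice-supported Pauli only triggers slice-touching checks, its energy in the full code coincides with its slice-sub-code energy; combined with Lemma~\ref{lemma:path_within_subset_3d}, this gives $\Delta(L_i) \leq \max_j \epsilon(\pi_i(P_j))$, and the lemma would follow immediately from the inequality $\max_j \epsilon(\pi_i(P_j)) \leq \Delta(L)$ for at least one $i$ in the decomposition of $L$.

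The main obstacle is precisely establishing that inequality. A slice-touching X-check $c$ generally has neighbors both inside slice $i$, contributing parity $a^{(i)}_{c,j}$ under $P_j$, and outside, contributing parity $b^{(i)}_{c,j}$: $c$ fires on $\pi_i(P_j)$ iff $a^{(i)}_{c,j}=1$, while it fires on $P_j$ iff $a^{(i)}_{c,j}\oplus b^{(i)}_{c,j}=1$. Summing across the three qubit types using the block structure of $\partial_1$ and the elementary inequality $x+y \geq x\oplus y$ (valid for $x,y\in\{0,1\}$) actually yields the wrong-direction bound $\sum_i \epsilon(\pi_i(P_j)) \geq \epsilon(P_j)$, showing that the projected energies can collectively overshoot the original and that a naive restriction argument does not go through on its own.

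To rescue this, I would modify each $P_j$ by a stabilizer $S_j$ before projecting: since $H_X S_j = 0$, the modification leaves $\epsilon(P_j)$ invariant, and $S_j$ can be chosen to cancel the off-slice parities $b^{(i)}_{c,j}$ on the slice-$i$ checks, collapsing $a^{(i)}_{c,j}\oplus b^{(i)}_{c,j}$ to $a^{(i)}_{c,j}$ and delivering $\epsilon(\pi_i(P_j+S_j)) \leq \epsilon(P_j)$. The hardest technical step will then be to argue that such an $S_j$ exists slice-by-slice, by solving the structured linear system dictated by the block form of $\partial_0$ and $\partial_1$, and that it can be chosen consistently along the path so that the modified projected sequence still satisfies the local-error condition. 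Pairing this construction with the LDPC stabilizer-equivalence-preserves-barrier observation recalled from Ref.~\cite{zhao2024on} should suffice to conclude $\Delta(L_i) \leq \Delta(L)$ for some elementary $L_i$ in the decomposition, and hence $\min_{\text{elementary } L'} \Delta(L') \leq \Delta(L)$, which is the claim of Lemma~\ref{lemma:energy_barrier_of_composite_logical_3d}.
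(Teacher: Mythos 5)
You correctly diagnose that naive restriction to a slice gives the wrong-direction bound, and that the approach as stated does not close. But the rescue you propose — replacing each $P_j$ by $P_j + S_j$ for a carefully chosen stabilizer $S_j$ — leaves a genuine gap in two places, and you acknowledge both. First, you give no construction of $S_j$ and no argument that the structured linear system is solvable; second, even granting such $S_j$ exist pointwise, the modified sequence $\{\pi_i(P_j+S_j)\}_j$ need not satisfy the local-error condition (consecutive $S_j, S_{j+1}$ can differ on many qubits), and interpolating between them to repair the path could itself raise the maximum energy above $\Delta(L)$. This is not a cosmetic issue: it is exactly the kind of obstruction that makes stabilizer surgery along a path delicate.

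The paper's mechanism is fundamentally different and avoids both pitfalls. Rather than projecting to one slice or conjugating by stabilizers, it picks a codeword $L_a$ of $\delta_a^T$ and deforms each $P_j$ by (i) zeroing the $p^{(2)}, p^{(3)}$ blocks, (ii) summing the a-slices of the reshaped $P^{(1)}$ indexed by $\mathcal{C}(L_a)$, and (iii) placing that sum in one a-slice $\alpha \in \mathcal{C}(L_a)$. The reason this reduces energy is not cancellation by stabilizers but orthogonality: since $L_a$ is a codeword of $\delta_a^T$, summing over a-slices in $\mathcal{C}(L_a)$ annihilates the contributions $\delta_a Z^{(2)}$ and $\delta_a Z^{(3)}$ entry-by-entry (Lemma~\ref{lemma:pauli_weight_reduction_3d} and its companion). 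Because the deformation is a fixed linear map applied to every $P_j$, the local-error condition is preserved automatically, with no repair step. There remains a nontrivial final step you do not address: the deformed endpoint $L'$ could be the identity, and the paper must separately prove (via the $u_{jk}$ row-space argument) that some choice of $L_a$ makes $L'$ a nontrivial elementary operator. Also note that your premise — that $L$ decomposes into elementary operators on pairwise \emph{disjoint} slices — is not quite right: elementary summands $L_Z^{1,i,j,k}$ with the same slice index $i$ but different $j,k$ occupy the same slice, so the decomposition you start from does not actually have the disjointness you rely on.
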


Consider an elementary canonical logical-Z operator $L$ supported on $r_1^\alpha \otimes V_2 \otimes V_3$. Suppose $\Delta(L)$ is given by a path $r$. The main idea behind the proof of Lemma~\ref{lemma:path_within_subset_3d} is to deform a general path $r$ into a new path $r'$, supported only on $r_1^\alpha \otimes V_2 \otimes V_3$, and the energy barrier of $r'$ is not greater than that of the original path $r$. A similar argument can be applied to prove Lemma~\ref{lemma:energy_barrier_of_composite_logical_3d}. The proofs closely extended from the arguments in the hypergraph product case~\cite{zhao2024on}. The details are provided in the appendix [see Sec~\ref{sec:proofoflemma14} and ~\ref{sec:proofoflemma15}].

For the elementary canonical logical operator $L_Z^{1,\alpha} = ((r_1^\alpha \otimes L_B^j \otimes L_C^k)^T, 0^T, 0^T)^T$, with fixed $\alpha$, $j$, and $k$. $L_Z^{1,\alpha}$ is supported on $r_1^\alpha \otimes V_2 \otimes V_3$. To compute the energy barrier of $L_Z^{1,\alpha}$, consider a path $r = \{P_0, P_1, \cdots, P_F\}$. Then
\begin{equation}
    \Delta(L_Z^{1,\alpha}) = \min \{\epsilon_{\max}(r): r \in w(0, L_Z^{1,\alpha})\}.
\end{equation}
Here, $P_0 = (0,0,0)^T$ is a zero vector, and $P_F = ((r_1^\alpha \otimes L_B^j \otimes L_C^k)^T, 0^T, 0^T)^T$. Note that $r$ is a path that walks throughout the set of qubits. Therefore, $P_\ell \in r$ could be represented as $P_\ell = ((P_\ell^1)^T, (P_\ell^2)^T, (P_\ell^3)^T)^T$, where each component, $P_\ell^1, P_\ell^2$, and $P_\ell^3$, could be nontrivial.

Lemma~\ref{lemma:path_within_subset_3d} implies that there is a path $r$ supported on $r_1^\alpha \otimes V_2 \otimes V_3$ that attains the energy barrier of $\Delta(L)$. A path restricted to the first component means that for any $P_\ell \in r$, one can set $P_\ell^2 = P_\ell^3 = 0$, the only nontrivial Paulis (nonzero binaries) are in $P_\ell^1$ with fixed $r_1^\alpha$. Furthermore, $r_1^\alpha$ is fixed means every $P_\ell^1$ has the form $r_1^\alpha \otimes y \otimes z$, where $y \in V_2$ and $z \in V_3$ are vectors of length $n_b$ and $n_c$ respectively. 

Given a $P_\ell \in r$, the energy (the number of violated checks) of $P_\ell$ is given by $\epsilon(P_\ell) = \mathrm{wt}\left(H_XP_\ell\right)$, has the form
\begin{eqnarray}
    \mathrm{wt}\left(\left(
    {\setlength{\arraycolsep}{0.5pt}
    \begin{array}{@{}c@{\,}c@{\,}c@{}}
    I_{n_a} \!\otimes\! \delta_b \!\otimes\! I_{n_c}  & \delta_a \!\otimes\! I_{n_b} \!\otimes\! I_{n_c}  & 0 \\
    I_{n_a} \!\otimes\! I_{n_b} \!\otimes\! \delta_c & 0 & \delta_a \!\otimes\! I_{n_b} \!\otimes\! I_{n_c}  \\
    0 & I_{n_a} \!\otimes\! I_{n_b} \!\otimes\! \delta_c & I_{n_a} \!\otimes\! \delta_b \!\otimes\! I_{n_c} 
    \end{array}}
    \right) 
    {\setlength{\arraycolsep}{1pt}
    \left(\begin{array}{@{}c@{}}
     P_\ell^1 \\
     P_\ell^2 \\
     P_\ell^3
    \end{array}\right)}\right). \nonumber \\
\end{eqnarray}

By setting $P_\ell^2 = P_\ell^3 = 0$, then
\begin{equation}
    \begin{aligned}
        \epsilon(P_\ell) &= \mathrm{wt} \left(\left(\begin{array}{c}
          I_{n_a} \otimes \delta_b \otimes I_{n_c} \\
          I_{n_a} \otimes I_{n_b} \otimes \delta_c
        \end{array}
        \right) P_\ell^1
        \right) \\
        & = \mathrm{wt}\left(I_{n_a} \otimes \delta_b \otimes I_{n_c} P_\ell^1\right) + \mathrm{wt}\left(I_{n_a} \otimes I_{n_b} \otimes \delta_c P_\ell^1\right).
    \end{aligned}
\end{equation}

Furthermore, for fixed $r_1^\alpha$, because $r_1^\alpha$ is a unit vector, the part $r_1^\alpha$ can be discarded by applying the map $(r_1^\alpha)^T \otimes I_{n_b} \otimes I_{n_c}$ to $P_\ell^1$. Thus
\begin{equation}
    \begin{aligned}
        \epsilon(P_\ell) = \mathrm{wt}\left(\delta_b \otimes I_{n_c} P_\ell^{1'}\right) + \mathrm{wt}\left( I_{n_b} \otimes \delta_c P_\ell^{1'}\right),
    \end{aligned}
\end{equation}
with $P_\ell^{1'} = ((r_1^\alpha)^T \otimes I_{n_b} \otimes I_{n_c}) P_\ell^1$.

In summary, to calculate the energy barrier of the Z-type logical operator $L_Z^{1,\alpha,j,k} = ((r_1^\alpha \otimes L_B^j \otimes L_C^k)^T, 0^T, 0^T)^T$, one can consider a path $r \in \{Q_0, Q_1, \cdots, Q_L \}$ supported on qubit subset $r_1^\alpha \otimes V_2 \otimes V_3$, then
\begin{equation}
    \Delta(L_Z^{1,\alpha,j,k}) = \min \{ \epsilon_{\max}(r): r \in w(0, L_B^j\otimes L_C^k)\}.
\end{equation}
where for each $r$,
\begin{equation}
    \epsilon_{\max}(r) = \max[\epsilon(Q_0), \epsilon(Q_1), \cdots, \epsilon(Q_L)],
\end{equation}
and the energy cost of the step $Q_\ell$ is
\begin{equation}
\label{eq:energy_step}
    \epsilon(Q_\ell) = \mathrm{wt}\left(\delta_b \otimes I_{n_c} Q_\ell\right) + \mathrm{wt}\left( I_{n_b} \otimes \delta_c Q_\ell\right).
\end{equation}

This scenario corresponds exactly to computing the energy barrier of a classical tensor product code (Eq.~(\ref{eq:energy_pell_main})), where the underlying classical codes are $\delta_b$ and $\delta_c$. The subset $r_1^\alpha \otimes V_2 \otimes V_3$ is the bit set of this classical tensor product code. The path for applying logical operator $L_Z^{1, \alpha,j,k}$ effectively forms a path for the logical operator $L_B \otimes L_C$ in the tensor product code constructed by $\delta_b$ and $\delta_c$.

According to Lemma~\ref{lemma:energy_barrier_of_tensor_product_codes}, the energy barrier of $L_B \otimes L_C$ is lower bounded by $\min \left[d_b, d_c\right]$. Thus the energy barrier of the Z-type logical operator $L_Z^{1, \alpha,j,k}$ satisfies $\Delta\left(L_Z^{1, \alpha,j,k}\right) \geq$ $\min \left[d_b, d_c\right]$.

Following the same reasoning, the energy barrier for logical Z-type operators of the form $L_Z^{2,i,j,k}=$ $\left(0^T, (L_A^i \otimes w_j \otimes L_C^k)^T, 0^T\right)^T$ is lower bounded by $\min \left[d_a, d_c\right]$, and the logical operators of the form $L_Z^{3,i,j,k}=\left(0^T,0^T, (L_A^i \otimes L_B^j \otimes u_k)^T\right)^T$ have an energy barrier lower bounded by $\min \left[d_a, d_b\right]$.

In summary, for any elementary canonical Z logical operator $L$ in the 3D hypergraph product code, the energy barrier is lower bounded by
\begin{eqnarray}
    \Delta(L) \geqslant \min \left[d_a, d_b, d_c\right].
\end{eqnarray}

With Lemma \ref{lemma:energy_barrier_of_composite_logical_3d}, we can establish that for any canonical Z-type logical operator (not just elementary ones), the energy barrier is lower bounded by $\min \left[d_a, d_b, d_c\right]$. Moreover, results from Ref.~\cite{zhao2024on} demonstrate that for any logical operator $L$ and stabilizer $S$ with $\Delta(L) \geqslant \Delta(S)$, we have $\Delta(L) = \Delta(LS)$. Therefore, with the LDPC property yielding $\Delta(S) = O(1)$ for any $S$, and in cases where $d_a, d_b, d_c \geqslant O(1)$, this lower bound applies to all logical Z operators, not just the canonical ones.

Formally, we have the following theorem.
\begin{thm}
\label{thm:energy_barrier_3d_hgp}
Let $\Delta(L_Z)$ be the energy barrier of the Z logical operators of the 3D hypergraph product code constructed from classical codes $\delta_a$, $\delta_b$ and $\delta_c$ with parameters $[n_\ell, k_\ell, d_\ell, E_\ell]$ for $\ell \in \{a,b,c\}$. If the resulting quantum code is LDPC with sparsity parameters $w_c, w_q$, then in cases where $d_a, d_b, d_c \geqslant w_cw_q$, we have
\begin{eqnarray}
	\Delta(L_Z) \geqslant \min [d_a, d_b, d_c].
\end{eqnarray}
\end{thm}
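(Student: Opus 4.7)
The plan is to reduce the theorem to the energy barrier of tensor product codes (Lemma \ref{lemma:energy_barrier_of_tensor_product_codes}) by exploiting the slice structure of 3D hypergraph product codes, and then lift the bound from canonical logical operators to arbitrary logical operators using the LDPC stabilizer equivalence from Ref.~\cite{zhao2024on}. First I would fix an elementary canonical Z logical operator, say $L_Z^{1,\alpha,j,k} = ((r_1^\alpha \otimes L_B^j \otimes L_C^k)^T, 0^T, 0^T)^T$, which lives entirely on the single a-slice $r_1^\alpha \otimes V_2 \otimes V_3$. By Lemma \ref{lemma:path_within_subset_3d}, the energy barrier $\Delta(L_Z^{1,\alpha,j,k})$ is attained by a path $r$ whose support is contained in this slice, so along $r$ we may take $P_\ell^2 = P_\ell^3 = 0$ and $P_\ell^1 = r_1^\alpha \otimes Q_\ell$ for some $Q_\ell \in \mathbb{F}_2^{n_b} \otimes \mathbb{F}_2^{n_c}$.

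Next I would compute the syndrome weight along such a restricted path. Plugging into $H_X = \partial_1$ from Eq.~(\ref{eq:delta1}), the third block column contributes nothing (since $P_\ell^3 = 0$ and $P_\ell^2 = 0$), and the first block column acting on $r_1^\alpha \otimes Q_\ell$ produces syndrome weight $\wt(\delta_b \otimes I_{n_c} Q_\ell) + \wt(I_{n_b} \otimes \delta_c Q_\ell)$, after peeling off the fixed $r_1^\alpha$ factor. This is exactly the energy functional for the classical tensor product code with parity check matrix $\delta_c^{(b,c)} = (\delta_b \otimes I_{n_c};\; I_{n_b} \otimes \delta_c)$, and the path $\{Q_\ell\}$ implements the logical $L_B^j \otimes L_C^k$ of that tensor product code. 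Invoking Lemma \ref{lemma:energy_barrier_of_tensor_product_codes} then gives $\Delta(L_Z^{1,\alpha,j,k}) \geq \min(d_b, d_c)$. The same argument applied to $L_Z^{2,i,j,k}$ and $L_Z^{3,i,j,k}$, restricted respectively to a b-slice and a c-slice, yields lower bounds $\min(d_a, d_c)$ and $\min(d_a, d_b)$, so every elementary canonical Z operator obeys $\Delta(L) \geq \min(d_a, d_b, d_c)$.

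To promote this bound to all canonical Z logical operators (arbitrary linear combinations of elementary ones as in Eq.~(\ref{3dzlogicaloperators})), I would invoke Lemma \ref{lemma:energy_barrier_of_composite_logical_3d}, which states that the energy barrier of a composite canonical logical is at least the minimum over the elementary ones. Finally, to cover every logical-equivalent representative, I would use the LDPC stabilizer-equivalence result from Ref.~\cite{zhao2024on}: whenever $\Delta(L) \geq w_c w_q$ and $S$ is a stabilizer, $\Delta(L) = \Delta(LS)$. Under the hypothesis $d_a, d_b, d_c \geq w_c w_q$, the canonical-operator bound exceeds the stabilizer energy scale, so $\Delta(L_Z) \geq \min(d_a, d_b, d_c)$ holds for all Z logical operators, not merely the canonical representatives.

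I expect the main obstacle to be the slice-restriction step (Lemma \ref{lemma:path_within_subset_3d}): one must show that a general path, which may roam across all three qubit subsets $V_1 \otimes V_2 \otimes C_3$, $V_1 \otimes C_2 \otimes V_3$, $C_1 \otimes V_2 \otimes V_3$ and across all indices $\alpha$, can be deformed to one confined to a single slice without increasing the maximum syndrome weight. The paper defers this to the appendix and argues it extends the 2D hypergraph product argument of Ref.~\cite{zhao2024on}; the remaining steps in the theorem itself are then essentially bookkeeping on top of the tensor product lemma and the LDPC equivalence lemma.
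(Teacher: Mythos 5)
Your proposal is correct and follows essentially the same route as the paper: restrict to a single slice via Lemma~\ref{lemma:path_within_subset_3d}, recognize the restricted energy functional as that of a tensor product code and apply Lemma~\ref{lemma:energy_barrier_of_tensor_product_codes}, extend to composite canonical logicals via Lemma~\ref{lemma:energy_barrier_of_composite_logical_3d}, and finally lift to all logical representatives using the LDPC stabilizer-equivalence result from Ref.~\cite{zhao2024on} under the hypothesis $d_a,d_b,d_c\geq w_cw_q$. You have also correctly identified the slice-restriction step as the technically nontrivial part, which the paper handles in the appendix.
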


Similarly, with Lemma~\ref{lemma:energy_barrier_tensor_product_energy}, we can conclude that the energy barrier of 3D LDPC hypergraph product codes is lower bounded by the energy barriers of the underlying classical codes:
\begin{eqnarray}
    \Delta(L_Z) \geqslant \min [E_a, E_b, E_c].
\end{eqnarray}

Moreover, if Conjecture~\ref{mainconjt} is true, we have
\begin{eqnarray}
    \Delta(L_Z) \geqslant \min \{d_\ell E_m: \ell,m \in \{a,b,c\}, \ell \neq m \}.
\end{eqnarray}
This lower bound matches the upper bound obtained using the strip argument, effectively providing a tight bound.

\subsection{Energy barrier of X logical operators}

The canonical X logical operators have the form
\begin{eqnarray}
    &L_X^1 = \sum_{i,j,k} \alpha_{ijk}'((L_A^{i'}\otimes w_j \otimes u_k)^T, 0_{n_ar_bn_c}^T, 0_{n_an_br_c}^T)^T, \nonumber \\
    &L_X^2 = \sum_{i,j,k} \beta_{ijk}'(0_{r_an_bn_c}^T, (v_i \otimes L_B^{j'} \otimes u_k)^T, 0_{n_an_bn_c}^T)^T, \nonumber \\
    &L_X^3 = \sum_{i,j,k} \gamma_{ijk}'(0_{r_an_bn_c}^T, 0_{n_ar_bn_c}^T,  (v_i \otimes w_j \otimes L_C^{k'})^T )^T, \nonumber \\
\end{eqnarray}
where (i) $\delta_a^T L_A^{i'} = \delta_b^T L_B^{j'} = \delta_c^T L_C^{k'} = 0$ and (ii) $v_i \notin \operatorname{Im}\left(\delta_a^T\right)$, $w_j \notin \operatorname{Im}\left(\delta_b^T\right)$  and $u_k \notin \operatorname{Im}\left(\delta_c^T\right)$ are unit vectors. We note that $i \in \{1, \ldots, k_a^T\}$, $j \in \{1, \ldots, k_b\}$  and $k \in \{1, \ldots, k_c\} $ and that the set of logical operators expressible in this form is complete. A canonical logical operator is elementary if only one of the coefficients, $\alpha_{ijk}'$, $\beta_{ijk}'$ or $\gamma_{ijk}'$, is equal to one.

These elementary canonical logical X operators take the form of tensor products between one classical logical operator and two unit vectors, which resembles that of logical operators in hypergraph product codes, though in a higher dimension. By extending the reasoning previously applied to hypergraph product codes~\cite{zhao2024on} to this higher-dimensional case, in the LDPC regime, one can establish a lower bound for the energy barrier of logical X operators as
\begin{eqnarray}
   \Delta(L_X) \geqslant  \min \left[E_a^T, E_b^T, E_c^T\right].
\end{eqnarray}

The upper bound can be also derived from the structure of these elementary canonical logical X operators~\cite{zhao2024on}, which is also $\min [E_a^T, E_b^T, E_c^T]$. In the case where $\min \left[E_a^T, E_b^T, E_c^T\right] \geqslant w_cw_q$, with $w_c, w_q$ are the sparsity parameters, we have
\begin{eqnarray}
   \Delta(L_X) =  \min \left[E_a^T, E_b^T, E_c^T\right].
\end{eqnarray}

\section{Energy barrier of 4D hypergraph product codes}
\label{sec:energy_barrier_of_4d_hgp}

In 3D hypergraph product codes, the energy barrier of Z-type logical operators is lower bounded by the distance of the underlying classical codes. While X logical operators retain the energy barrier of these classical codes. When classical codes possess large distances but small energy barriers, only Z-type logical operators have improved energy barrier, potentially enabling single-shot. The X-type logical operators, lacking this enhanced protection, remain vulnerable. In this section, we show that 4D hypergraph product codes offer improved energy barrier for both X and Z logical operators.

The distinction between the 3D and 4D cases stems from structural differences in their elementary canonical logical operators. In 3D codes, Z logical operators take the form $L_Z=L_{c_1} \otimes L_{c_2} \otimes u_k$, while X logical operators are simply $L_X= v_i\otimes w_j \otimes L_{c_3}^T$, where $v_i, w_j$, and $u_k$ are unit vectors, and $L_{c_1}, L_{c_2}, L_{c_3}$ are classical logical operators. In contrast, 4D codes exhibit a more balanced structure where both X and Z logical operators are tensor products of two classical logical operators (and two unit vectors, see Eq.~\ref{eq:4dhgp_lz} and Eq.~\ref{eq:4dhgp_lx}), resulting in enhanced energy barriers for both types of logical operators.

Given four classical codes represented by parity check matrices $\delta_a, \delta_b, \delta_c, \delta_d$, let $\left[n_{\ell}, k_{\ell}, d_{\ell}, E_{\ell}\right]$ and $\left[n_{\ell}^T, k_{\ell}^T, d_{\ell}^T, E_{\ell}^T\right]$ represent the parameters of the classical linear code defined by the parity check matrix $\delta_{\ell}$ and its transpose $\delta_{\ell}^T$. For convenience, we denote $r_\ell \equiv n_\ell^T$, where $\ell=\{a,b,c,d\}$. The 4D hypergraph product is defined by a length-$5$ chain complex
\begin{equation}
    C_0 \xleftarrow{\partial_0} C_1 \xleftarrow{\partial_1} C_2 \xleftarrow{\partial_2} C_3 \xleftarrow{\partial_3} C_4,
\end{equation}

Boundary maps $\partial_0, \partial_1$, $\partial_2$ and $\partial_3$ are
\onecolumngrid
\begin{eqnarray}
    \partial_0=\left(\begin{array}{c}\delta_a \otimes I_{n_b} \otimes I_{n_c} \otimes I_{n_d} \\ I_{n_a} \otimes \delta_b \otimes I_{n_c} \otimes I_{n_d} \\ I_{n_a} \otimes I_{n_b} \otimes \delta_c \otimes I_{n_d} \\ I_{n_a} \otimes I_{n_b} \otimes I_{n_c} \otimes \delta_d\end{array}\right),
\end{eqnarray}

\begin{eqnarray}
    \partial_1= \left(\begin{array}{cccc}
    I_{r_a} \otimes \delta_b \otimes I_{n_c} \otimes I_{n_d} & \delta_a \otimes I_{r_b} \otimes I_{n_c} \otimes I_{n_d} & 0 & 0 \\ 
    I_{r_a} \otimes I_{n_b} \otimes \delta_c \otimes I_{n_d} & 0 & \delta_a \otimes I_{n_b} \otimes I_{r_c} \otimes I_{n_d} & 0 \\ 
    I_{r_a} \otimes I_{n_b} \otimes I_{n_c} \otimes \delta_d & 0 & 0 & \delta_a \otimes I_{n_b} \otimes I_{n_c} \otimes I_{r_d} \\ 
    0 & I_{n_a} \otimes I_{r_b} \otimes \delta_c \otimes I_{n_d} & I_{n_a} \otimes \delta_b \otimes I_{r_c} \otimes I_{n_d} & 0 \\ 
    0 & I_{n_a} \otimes I_{r_b} \otimes I_{n_c} \otimes \delta_d & 0 & I_{n_a} \otimes \delta_b \otimes I_{n_c} \otimes I_{r_d} \\ 
    0 & 0 & I_{n_a} \otimes I_{n_b} \otimes I_{r_c} \otimes \delta_d & I_{n_a} \otimes I_{n_b} \otimes \delta_c \otimes I_{r_d}\end{array}\right), \nonumber \\
\end{eqnarray}

\footnotesize{
\begin{eqnarray}
    &&\partial_2= \nonumber\\
    &&\left({\setlength{\arraycolsep}{0.5pt}\begin{array}{cccccc}
    I_{r_a} \otimes I_{r_b} \otimes \delta_c \otimes I_{n_d} & I_{r_a} \otimes \delta_b \otimes I_{r_c} \otimes I_{n_d} & 0 & \delta_a \otimes I_{r_b} \otimes I_{r_c} \otimes I_{n_d} & 0 & 0 \\ 
    I_{r_a} \otimes I_{r_b} \otimes I_{n_c} \otimes \delta_d & 0 & I_{r_a} \otimes \delta_b \otimes I_{n_c} \otimes I_{r_d} & 0 & \delta_a \otimes I_{r_b} \otimes I_{n_c} \otimes I_{r_d} & 0 \\ 
    0 & I_{r_a} \otimes I_{n_b} \otimes I_{r_c} \otimes \delta_d & I_{r_a} \otimes I_{n_b} \otimes \delta_c \otimes I_{r_d} & 0 & 0 & \delta_a \otimes I_{n_b} \otimes I_{r_c} \otimes I_{r_d} \\ 
    0 & 0 & 0 & I_{n_a} \otimes I_{r_b} \otimes I_{r_c} \otimes \delta_d & I_{n_a} \otimes I_{r_b} \otimes \delta_c \otimes I_{r_d} & I_{n_a} \otimes \delta_b \otimes I_{r_c} \otimes I_{r_d}\end{array}}\right), \nonumber \\
\end{eqnarray}}

\normalsize
\begin{eqnarray}
    \partial_3=\left(\begin{array}{cccc} I_{r_a} \otimes I_{r_b} \otimes I_{r_c} \otimes \delta_d & I_{r_a} \otimes I_{r_b} \otimes \delta_c \otimes I_{r_d} & I_{r_a} \otimes \delta_b \otimes I_{r_c} \otimes I_{r_d} & \delta_a \otimes I_{r_b} \otimes I_{r_c} \otimes I_{r_d} \end{array}\right). \nonumber \\
\end{eqnarray}

\twocolumngrid

One can confirm that $\partial_i\partial_{i-1} = 0 \mod 2$ for $i \in {1,2,3}$. In this case, $H_Z=\partial_1^T, H_X=\partial_2$ is used to construct a quantum CSS code $H_{(\delta_a, \delta_b, \delta_c, \delta_d)}$ with parity check matrix
\begin{equation}
H_{\left(\delta_a, \delta_b, \delta_c, \delta_d\right)}=\left(\begin{array}{cc}
H_X & 0 \\
0 & H_Z
\end{array}\right).
\end{equation}
Moreover, as $\partial_3\partial_2 = \partial_1\partial_0 = 0$, the remaining matrices $M_1 =\partial_3$, $M_2= \partial_0^T$ work as a meta-check for $H_X$ and $H_Z$, respectively. This means that any valid X-syndrome adheres to the constraints defined by $M_1$, and any valid Z-syndrome satisfies the constraints defined by $M_2$.

The parameters of the resulting quantum code $\mathcal{C}\left(\delta_a, \delta_b, \delta_c, \delta_d\right)$ depend on the parameters of classical codes $\delta_a, \delta_b$, $\delta_c$, and $\delta_d$, where

\onecolumngrid
\begin{eqnarray}
n &=&n_a^T n_b^T n_c n_d+n_a^T n_b n_c^Tn_d+n_a^T n_b n_cn_d^T+n_a n_b^T n_c^Tn_d +n_a n_b^T n_cn_d^T +n_a n_b n_c^Tn_d^T, \nonumber\\
k &=& k_a^T k_b^T k_ck_d+ k_a^T k_b k_c^Tk_d+k_a^T k_b k_ck_d^T +k_a k_b^T k_c^Tk_d+k_a k_b^T k_ck_d^T+k_a k_b k_c^Tk_d^T, \nonumber \\
d_x &=& \min \left[d_a^Td_b^T, d_c^Td_a^T, d_a^Td_d^T, d_c^Td_b^T, d_b^Td_d^T, d_c^Td_d^T\right], \nonumber \\
d_z &=& \min \left[d_b d_c, d_b d_d, d_a d_d,d_ad_c,d_ad_b\right] .
\end{eqnarray}

The Z-type canonical logical operators $L_Z$ can be categorized into the following six sets.
\begin{equation}
\begin{aligned}
    &L_Z^1 = \sum_{i,j,k,l} a^1_{ijkl}((v_i\otimes w_j \otimes L_C^k \otimes L_D^l)^T, 0^T, 0^T, 0^T,0^T,0^T)^T, \\
    &L_Z^2 = \sum_{i,j,k,l} a^2_{ijkl}(0^T, (v_i \otimes L_B^j \otimes u_k \otimes L_D^l)^T, 0^T,0^T,0^T,0^T)^T, \\
    &L_Z^3 = \sum_{i,j,k,l} a^3_{ijkl}(0^T, 0^T,  (v_i \otimes L_B^j \otimes L_C^k \otimes t_l)^T,0^T,0^T,0^T )^T, \\
    &L_Z^4 = \sum_{i,j,k,l} a^4_{ijkl}(0^T, 0^T, 0^T,  (L_A^i \otimes w_j \otimes u_k \otimes L_D^l)^T,0^T,0^T )^T, \\
    &L_Z^5 = \sum_{i,j,k,l} a^5_{ijkl}(0^T, 0^T, 0^T,0^T, (L_A^i \otimes w_j \otimes L_C^k \otimes t_l)^T,0^T )^T, \\
    &L_Z^6 = \sum_{i,j,k,l} a^6_{ijkl}(0^T, 0^T,0^T,0^T,0^T, (L_A^i \otimes L_B^j \otimes u_k \otimes t_l)^T )^T,
\end{aligned}
\label{eq:4dhgp_lz}
\end{equation}

\twocolumngrid
where $v_i, w_j$, $u_k$, and $t_l$ are unit vectors with length $r_a, r_b$, $r_c$ and $r_d$, respectively. ${L_A^i}, {L_B^j}$, ${L_C^k}$ and ${L_D^l}$ are logical operators of the code defined by the parity check matrix $\delta_a, \delta_b$, $\delta_c$ and $\delta_d$, respectively. Given that $\delta_a L_A^i = \delta_b L_B^j = \delta_c L_C^k = \delta_d L_D^l = 0$, one can confirm that $H_XL_Z^l = 0$ for $l \in \{1,2,3,4,5,6\}$.

Similarly, the X-type canonical logical operators $L_X$ can be categorized into six sets.
\onecolumngrid
\begin{equation}
\begin{aligned}
    &L_X^1 = \sum_{i,j,k,l} b^1_{ijkl}(({L_A^i}' \otimes {L_B^j}' \otimes u_k \otimes t_l)^T,0^T,0^T,0^T,0^T,0^T )^T, \\
    &L_X^2 = \sum_{i,j,k,l} b^2_{ijkl}(0^T, ({L_A^i}' \otimes w^j \otimes {L_C^k}' \otimes t_l)^T,0^T ,0^T,0^T,0^T)^T, \\
    &L_X^3 = \sum_{i,j,k,l} b^3_{ijkl}(0^T, 0^T, ( {L_A^i}' \otimes w_j \otimes u_k \otimes {L_D^l}')^T,0^T,0^T,0^T )^T, \\
    &L_X^4 = \sum_{i,j,k,l} b^4_{ijkl}(0^T, 0^T, 0^T,  (v_i \otimes {L_B^j}' \otimes {L_C^k}' \otimes t_l)^T,0^T,0^T )^T, \\
    &L_X^5 = \sum_{i,j,k,l} b^5_{ijkl}(0^T, 0^T,0^T,0^T, (v_i \otimes {L_B^j}' \otimes u_k \otimes {L_D^l}')^T, 0^T)^T, \\
    &L_X^6 = \sum_{i,j,k,l} b^6_{ijkl}(0^T,0^T,0^T,0^T,0^T,(v_i\otimes w_j \otimes {L_C^k}' \otimes {L_D^l}')^T)^T.
\end{aligned}
\label{eq:4dhgp_lx}
\end{equation}
\twocolumngrid
where $v_i, w_j$, $u_k$ and $t_l$ are unit vectors of length $n_a, n_b$, $n_c$ and $n_d$, respectively. ${L_A^i}', {L_B^j}'$, ${L_C^k}'$ and ${L_D^l}'$ are logical operators of the code defined by the parity check matrix $\delta_a^T, \delta_b^T$, $\delta_c^T$ and $\delta_d^T$, respectively. 

Both X and Z elementary canonical logical operators have the structure of tensor product of two classical logical operators (and two unit vectors). Thus, in the LDPC regime, with Lemma~\ref{lemma:energy_barrier_of_tensor_product_codes}, by applying the same reasoning used for logical Z operators in 3D hypergraph product codes but extended to four dimension, we can establish the following lower bounds on the energy barriers for logical Z and X operators in the 4D hypergraph product code.
\begin{eqnarray}
    \Delta(L_Z) &\geqslant& \min [d_a, d_b, d_c, d_d], \nonumber\\
    \Delta(L_X) &\geqslant& \min [d_a^T, d_b^T, d_c^T, d_d^T].
    \label{eq:bound_4d}
\end{eqnarray}

We state it formally in the following Theorem
\begin{thm}
\label{thm:energy_barrier_4d_hgp}
Let $\Delta(L_Z)$, $\Delta(L_X)$ be the energy barrier of the Z, X logical operators of the 4D hypergraph product code constructed from classical codes $\delta_a$, $\delta_b$, $\delta_c$ and $\delta_d$, with parameters $[n_\ell, k_\ell, d_\ell, E_\ell]$ for $\ell \in \{a,b,c, d\}$. If the resulting quantum code is LDPC with sparsity parameters $w_c, w_q$. When $d_\ell, d_\ell^T \geqslant w_cw_q$ for $\ell \in \{a,b,c,d\}$,  then we have
\begin{eqnarray}
	\Delta(L_Z) &\geqslant& \min [d_a, d_b, d_c, d_d], \\
    \Delta(L_X) &\geqslant& \min [d_a^T, d_b^T, d_c^T, d_d^T].
\end{eqnarray}   
\end{thm}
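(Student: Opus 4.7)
The plan is to extend the 3D argument of Theorem~\ref{thm:energy_barrier_3d_hgp} to 4D by observing that every elementary canonical logical operator of the 4D hypergraph product code, as enumerated in Eq.~(\ref{eq:4dhgp_lz}) and Eq.~(\ref{eq:4dhgp_lx}), has the tensor product form (up to permutation of the four tensor slots) of two unit vectors and two nontrivial classical logical operators. This structural observation reduces the problem, for each of the six qubit blocks, to the energy barrier of a logical operator of a classical tensor product code built from the two relevant parity check matrices, which is already bounded below by Lemma~\ref{lemma:energy_barrier_of_tensor_product_codes}.

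First I would state and prove 4D analogues of Lemma~\ref{lemma:path_within_subset_3d} and Lemma~\ref{lemma:energy_barrier_of_composite_logical_3d}. Concretely, for an elementary canonical logical-Z operator in block 1 of the form $r_a^i \otimes r_b^j \otimes L_C^k \otimes L_D^l$, I would show that its energy barrier is attained by a path supported entirely on the two-dimensional slice $r_a^i \otimes r_b^j \otimes V_c \otimes V_d$, and analogously for each of the other five blocks with two of the four tensor coordinates fixed to the prescribed unit vectors. The composite-to-elementary lemma says that the energy barrier of any canonical logical operator is at least the minimum over its elementary summands. Both follow by the same deformation argument used in the appendix for the 3D case: excursions of a path outside the distinguished slice can be projected back in without increasing any syndrome weight, because the restriction of $H_X = \partial_2$ to the slice decouples from the syndrome contributions in the other blocks.

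Next, I would verify the reduction to a tensor product code. For the block-$1$ slice $r_a^i \otimes r_b^j \otimes V_c \otimes V_d$, only the first block column of $\partial_2$ contributes, and once the fixed unit vectors $r_a^i,r_b^j$ are factored out, the effective parity check matrix simplifies to
\begin{equation}
\left(\begin{array}{c} \delta_c \otimes I_{n_d} \\ I_{n_c} \otimes \delta_d \end{array}\right),
\end{equation}
which is precisely the parity check matrix of the classical tensor product code $\delta_c \otimes \delta_d$, and the restricted logical operator is $L_C^k \otimes L_D^l$. Lemma~\ref{lemma:energy_barrier_of_tensor_product_codes} then yields a lower bound of $\min(d_c, d_d)$. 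Running the same reduction over the remaining five blocks produces bounds $\min(d_\alpha, d_\beta)$ for each unordered pair $\{\alpha, \beta\} \subset \{a,b,c,d\}$, and the minimum across all six pairs is $\min(d_a, d_b, d_c, d_d)$. The X-sector argument is symmetric, with every $\delta_\ell$ replaced by $\delta_\ell^T$ throughout, yielding $\Delta(L_X) \geqslant \min(d_a^T, d_b^T, d_c^T, d_d^T)$.

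Finally, I would lift these bounds from canonical to arbitrary logical operators by invoking the LDPC stability result from Ref.~\cite{zhao2024on}: whenever $\Delta(L) \geqslant \Delta(S)$ for every stabilizer $S$, the energy barrier is invariant under multiplication by stabilizers, and since $\Delta(S) = O(w_c w_q)$ in the LDPC regime, the hypothesis $d_\ell, d_\ell^T \geqslant w_c w_q$ ensures the threshold is met by the canonical bounds just derived. The main obstacle I anticipate is the 4D analogue of Lemma~\ref{lemma:path_within_subset_3d}: in 4D each of the six qubit blocks is coupled by $\partial_2$ to two neighbouring blocks rather than one, so pushing out-of-slice contributions back onto the distinguished slice requires verifying that no uncancellable syndrome weight is induced in either neighbour. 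Because every cross-block entry of $\partial_2$ remains a single classical parity-check acting on one tensor factor, I expect the block-by-block pushing argument from the 3D proof to carry through, but this bookkeeping is the part that will require the most care.
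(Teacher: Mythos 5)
Your proposal follows exactly the paper's argument: reduce each elementary canonical logical operator (which is a tensor product of two unit vectors and two classical logical operators) to a classical tensor product code restricted to a 2D slice, invoke Lemma~\ref{lemma:energy_barrier_of_tensor_product_codes}, then use 4D analogues of Lemma~\ref{lemma:path_within_subset_3d} and Lemma~\ref{lemma:energy_barrier_of_composite_logical_3d} together with the LDPC stability result of Ref.~\cite{zhao2024on} to extend from elementary to general logical operators. The paper states this extension from 3D at the same level of detail you provide, leaving the deformation bookkeeping in 4D to the reader just as you flag it as the step requiring the most care, so the two treatments are essentially identical in substance and gap.
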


Similarly, with Lemma~\ref{lemma:energy_barrier_tensor_product_energy}, we have
\begin{eqnarray}
	\Delta(L_Z) &\geqslant& \min [E_a, E_b, E_c, E_d], \\
    \Delta(L_X) &\geqslant& \min [E_a^T, E_b^T, E_c^T, E_d^T].
\end{eqnarray}

Moreover, if Conjecture~\ref{mainconjt} is true, we have
\begin{eqnarray}
    \Delta(L_Z) &\geqslant& \min \{d_\ell E_m: \ell,m \in \{a,b,c,d\}, \ell \neq m \}, \\
    \Delta(L_X) &\geqslant& \min \{ d_\ell^TE_m^T: \ell,m \in \{a,b,c,d\}, \ell \neq m \},
\end{eqnarray}
which is tight.

\section{Conclusion and outlook}

In this work, we first gave a introduction to the confinement, soundness, and expansion properties of codes. We also discussed the relationships among these properties. Specifically, we showed the equivalence between left-expansion and linear confinement, and demonstrated formally that confinement property inherently induces a lower bound on energy barrier. 

We then established a lower bound for the energy barrier of tensor product codes in Lemma~\ref{lemma:energy_barrier_of_tensor_product_codes}, showing its relation to the distance of the underlying codes. Using this lemma, we proved that for 3D LDPC hypergraph product codes constructed from classical codes $\delta_a$, $\delta_b$, and $\delta_c$ with parameters $\left[n_{\ell}, k_{\ell}, d_{\ell}, E_{\ell}\right]$, where $\ell \in {a, b, c}$, the energy barrier of logical Z operators is lower-bounded by $\Omega(d_\ell)$, which improves upon the $\Omega(d_\ell^{\frac13})$ bound derived from the soundness and confinement property. For logical X operators, the energy barrier corresponds to that of the transpose codes of these classical codes, paralleling the behavior in standard hypergraph products. In the 4D case, we showed that X and Z logical operators have energy barriers lower-bounded by $\Omega(d_\ell)$ and $\Omega(d_\ell^T)$, respectively.

Several questions remain for further investigation. An immediate one concerns the mechanism behind energy barrier enhancement in HHGP codes. A possible explanation lies in the increased check redundancy introduced by these higher-dimensional structures. This raises follow-up questions: Is there a quantitative relationship between redundancy and energy barriers? What amount or structure of redundancy is sufficient to achieve a specified energy barrier?

The redundancy provided by higher-dimensional hypergraph products is sufficient for achieving soundness and confinement~\cite{campbell2019theory,Quintavalle2021single}. Several works have also demonstrated the effectiveness of redundant parity check sets for simultaneously handling measurement and qubit errors~\cite{fujiwara2014ability,Ashikhmin2020quantum,Delfosse2022beyond}. However, establishing a general framework for introducing redundancy that ensures both confinement and macroscopic energy barriers remains an open question.

In contrast, quantum expander codes exhibit confinement even without check redundancy. This inspires the thought to explore alternative methods of providing confinement beyond redundancy or expansion, such as symmetries~\cite{sam2020symmetry,charles2023self}. Furthermore, quantum expander codes display unique statistical properties that may intrigue researchers, as demonstrated in recent studies~\cite{hong2024quantum,yin2024low,placke2024topological,deroeck2024ldpc}.  

Finally, quantum codes with macroscopic energy barriers are conjectured to support local decoders~\cite{Leverrier_2015,Fawzi_2018,kubica2019cellular,niko2017local,Hastings2014decoding,dennis2002topological,herold2015cellular,Breuckmann2018Scalable,dauphinais2017fault,Pastawski2011quantum,vasmer2021cellular}, based on intuition from statistical mechanics. In such systems, local cooling processes can reduce system energy, suggesting that for codes with macroscopic energy barriers, a simple iterative local process could function effectively as a decoder. However, the design of local decoders for higher-dimensional hypergraph product codes remains an open research question.

\section{Acknowledgments}
I thank Qi Ye of Tsinghua University for providing the argument for the energy barrier of two-dimensional repetition code. I thank Isaac Kim for helpful discussions and feedback on an earlier version of this manuscript. I also thank Andrew Doherty, Xiaotong Ni for valuable discussions. The author acknowledges the financial support from Sydney Quantum Academy. This work was supported by the Australian Research Council Centre of Excellence for Engineered Quantum Systems (EQUS, CE170100009), and partially supported by NSFC (No.92476206).

\section{Appendix}

Throughout this appendix, we work in the field $\mathbb{F}_2$. Thus, all addition operations are modulo $2$ except for the computation of the weight of vectors or matrices, i.e., the function $\mathrm{wt}(\cdot)$.

\subsection{Vector reshaping for 3D matrices}

Consider a basis $\mathcal{B}$ of the vector space $\mathbb{F}_2^{n_1} \otimes \mathbb{F}_2^{n_2}$:
\begin{eqnarray}
\mathcal{B}=\left\{a_i \otimes b_j \mid i= \{1, \ldots, n_1 \} \text { and } j= \{1, \ldots, n_2 \}\right\}.
\end{eqnarray}
Then any vector $v \in \mathbb{F}_2^{n_1} \otimes \mathbb{F}_2^{n_2}$ can be written as
\begin{eqnarray}
    v=\sum_{a_i \otimes b_j \in \mathcal{B}} v_{i j}\left(a_i \otimes b_j\right),
\end{eqnarray}
for some $v_{i j} \in \mathbb{F}_2$. We call the $n_1 \times n_2$ matrix $V$ with entries $v_{i j}$ the \emph{reshaping} of the vector $v$. By this definition, if $A, B$ are respectively $m_1 \times n_1$ and $m_2 \times n_2$ matrices, then 
\begin{eqnarray}
    (A \otimes B)v \Rightarrow A V B^T.
\end{eqnarray}

Define $\mathrm{wt}(M)$ as the number of ones in the vector (or matrix) $M$. We have $\mathrm{wt}((A \otimes B)v) = \mathrm{wt}(A V B^T)$.

Now consider a vector $v$ lies in the tensor product space $\mathbb{F}_2^{n_1} \otimes \mathbb{F}_2^{n_2} \otimes \mathbb{F}_2^{n_3}$, we need to extend the reshaping and transformation concepts to accommodate this higher-order tensor structure.

The basis for the tensor product space $\mathbb{F}_2^{n_1} \otimes \mathbb{F}_2^{n_2} \otimes \mathbb{F}_2^{n_3}$ is:
\begin{eqnarray}
    \mathcal{B}=\left\{a_i \otimes b_j \otimes c_k\right\},
\end{eqnarray}
for $i=\{1, \ldots, n_1 \}; j= \{1, \ldots, n_2 \}; k=\{1, \ldots, n_3\}$.

Thus, any vector $v$ in this space can be written as
\begin{eqnarray}
    v=\sum_{i=1}^{n_1} \sum_{j=1}^{n_2} \sum_{k=1}^{n_3} v_{i j k}\left(a_i \otimes b_j \otimes c_k\right),
\end{eqnarray}
where $v_{i j k} \in \mathbb{F}_2$. Here, the coefficients $v_{i j k}$ form a 3D matrix $V$ with shape $n_1 \times n_2 \times n_3$.

To analyze the action of a Kronecker product transformation, say $(A \otimes B \otimes C)$, we extend the idea of reshaping $v$: The vector $v$ can be interpreted as a 3D matrix $V$ with entries $v_{i j k}$. Let $A, B$, and $C$ be matrices of dimensions $m_1 \times n_1, m_2 \times n_2$, and $m_3 \times n_3$, respectively. The action of $A \otimes B \otimes C$ on $v$ can be expressed as
\begin{eqnarray}
    (A \otimes B \otimes C) v \equiv A V B^T C^T.
\end{eqnarray}
Here, $V$ is treated as a 3D matrix, and the transformation applies $A, B$, and $C$ along the first, second, and third modes of $V$, respectively. The resulting transformed 3D matrix $A V B^T C^T$ is a new 3D matrix of shape $m_1 \times m_2 \times m_3$.

The weight of the 3D matrix $V$, denoted wt $(V)$, counts the total number of ones in $V$. The Kronecker product transformation preserves the weight
\begin{eqnarray}
    \mathrm{wt}((A \otimes B \otimes C) v)=\mathrm{wt}\left(A V B^T C^T\right).
\end{eqnarray}

\subsection{Proof of Lemma~\ref{lemma:path_within_subset_3d}}
\label{sec:proofoflemma14}

Recall that a Z-type Pauli error of the 3D hypergraph product code can be expressed as a bit-string $z=\left(z^{(1)}, z^{(2)}, z^{(3)}\right)^T$. The corresponding energy $\epsilon(z)= \text{wt}(H_Xz)$ is
\begin{eqnarray}\label{eq:energy_expression_hgp_3d}
    \epsilon(z) &=& \text{wt}\left((I_{r_a}\otimes \delta_b \otimes I_{n_c})z^{(1)} + (\delta_a \otimes I_{r_b}\otimes I_{n_c}) z^{(2)}\right) \nonumber \\
    && + \text{wt}\left((I_{r_a}\otimes I_{n_b} \otimes \delta_c)z^{(1)} + (\delta_a \otimes I_{n_b}\otimes I_{r_c}) z^{(3)}\right) \nonumber \\
    && + \text{wt}\left((I_{n_a}\otimes I_{r_b} \otimes \delta_c)z^{(2)} + (I_{n_a} \otimes \delta_b\otimes I_{r_c}) z^{(3)}\right). \nonumber \\
\end{eqnarray}
By applying vector reshaping, the energy can be written as
\begin{eqnarray}\label{eq:energy_expression_hgp_reshape_3d}
    \epsilon(z) &=& \text{wt}\left(Z^{(1)}\delta_b^T + \delta_aZ^{(2)} \right) + \text{wt}\left(Z^{(1)}\delta_c^T + \delta_aZ^{(3)} \right) \nonumber \\
    &&+ \text{wt}\left(Z^{(2)}\delta_c^T + Z^{(3)}\delta_b^T \right),
\end{eqnarray}
where $Z^{(1)}$, $Z^{(2)}$ and $Z^{(3)}$ are the 3D matrices reshaped from $z^{(1)}$, $z^{(2)}$, and $z^{(3)}$, respectively. The $j$th a-slice of $Z^{(1)}$ is supported on qubit subset $r_1^j \otimes V_2 \otimes V_3$.

Using Eq.~\eqref{eq:energy_expression_hgp_reshape_3d}, we aim to prove a lower bound on the energy $\epsilon(z)$ [Lemma~\ref{lemma:pauli_weight_reduction_3d}]. To that end, we shall use the following convention. Let $L_a$ be a nontrivial codeword of $\delta_a^T$. 

The set of a-slices of $Z^{(1)}$ associated with the nonzero entries of $L_a$ will play an important role. We define the index set of such columns as $\mathcal{C}(L_a)$:
\begin{equation}
    \mathcal{C}(L_a):= \{j: (L_a)_j=1 \}.
\end{equation}

Given a codeword $L_a$ of $\delta_a^T$, one can construct a matrix $Z^{1, s}$ from the 3D matrix $Z^{(1)}$ by summing all the a-slices in the set $\mathcal{C}(L_a)$. More formally,
\begin{equation}
    Z^{1,s}_{jk} = \sum_{ i : i \in \mathcal{C}(L_a) } Z^{(1)}_{ijk},
    \label{eq:prescription_weight_reduction_3d}
\end{equation}
where the addition is modulo $2$. For example, if $L_a=110100$, we would sum a-slices $1, 2$, and $4$. Using Eq.~\eqref{eq:prescription_weight_reduction_3d}, we can deform an arbitrary path to a path consisting of Paulis only supported on subset $r_1^i \otimes V_2\otimes V_3$ for some $i$.~\footnote{The precise choice of $i$ does not matter; any $i\in \mathcal{C}(L_a)$ would suffice.} In particular, we can prove an inequality between the energy of the original Pauli and the deformed Pauli, proved in Lemma~\ref{lemma:pauli_weight_reduction_3d}. 
\begin{lemma}
\label{lemma:pauli_weight_reduction_3d}
    \begin{equation}
        \mathrm{wt}(Z^{1,s}\delta_b^T) \leqslant \mathrm{wt}\left(Z^{(1)}\delta_b^T + \delta_aZ^{(2)}\right).
    \end{equation}
\end{lemma}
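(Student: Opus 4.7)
The plan is to interpret the right-hand side as the weight of a 3D matrix of shape $r_a \times r_b \times n_c$, and to produce the left-hand side by summing its a-slices over the index set $\mathcal{C}(L_a)$. Since the Hamming weight of a sum of slices over $\mathbb{F}_2$ is bounded by the sum of the individual slice weights, and hence by the weight of the full 3D matrix, the inequality will drop out once the summed slices are shown to equal $Z^{1,s}\delta_b^T$. The only nontrivial input is the hypothesis that $L_a$ is a codeword of $\delta_a^T$, i.e.\ $\delta_a^T L_a = 0$, equivalently $L_a^T\delta_a = 0$.

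First I would set $M := Z^{(1)}\delta_b^T + \delta_a Z^{(2)}$, viewed as a 3D matrix of shape $r_a \times r_b \times n_c$ under the reshaping convention, with $\delta_b^T$ acting on the b-mode of $Z^{(1)}$ and $\delta_a$ acting on the a-mode of $Z^{(2)}$. Denote by $M_i$, $Z^{(1)}_i$, and $(Z^{(1)}\delta_b^T)_i$ the $i$-th a-slices, for $i \in \{1,\ldots,r_a\}$. Because $\delta_b^T$ touches only the b-mode, the a-slices factor as $(Z^{(1)}\delta_b^T)_i = Z^{(1)}_i\delta_b^T$, and so
\begin{equation}
\sum_{i\in\mathcal{C}(L_a)} (Z^{(1)}\delta_b^T)_i \;=\; \Bigl(\sum_{i\in\mathcal{C}(L_a)} Z^{(1)}_i\Bigr)\delta_b^T \;=\; Z^{1,s}\delta_b^T,
\end{equation}
by the definition of $Z^{1,s}$ in Eq.~\eqref{eq:prescription_weight_reduction_3d}.

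Next I would handle the $\delta_a Z^{(2)}$ term, whose $i$-th a-slice is $(\delta_a Z^{(2)})_i = \sum_j (\delta_a)_{ij} Z^{(2)}_j$. Swapping the order of summation yields
\begin{equation}
\sum_{i\in\mathcal{C}(L_a)}(\delta_a Z^{(2)})_i \;=\; \sum_j \Bigl(\sum_{i\in\mathcal{C}(L_a)}(\delta_a)_{ij}\Bigr) Z^{(2)}_j \;=\; \sum_j (L_a^T\delta_a)_j\, Z^{(2)}_j \;=\; 0,
\end{equation}
since $L_a^T\delta_a = (\delta_a^T L_a)^T = 0$. Combining the two computations gives $\sum_{i\in\mathcal{C}(L_a)} M_i = Z^{1,s}\delta_b^T$.

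Finally, the triangle inequality for Hamming weight over $\mathbb{F}_2$ closes the argument:
\begin{equation}
\wt(Z^{1,s}\delta_b^T) \;=\; \wt\!\Bigl(\sum_{i\in\mathcal{C}(L_a)} M_i\Bigr) \;\leq\; \sum_{i\in\mathcal{C}(L_a)} \wt(M_i) \;\leq\; \sum_{i=1}^{r_a}\wt(M_i) \;=\; \wt(M),
\end{equation}
which is exactly $\wt(Z^{1,s}\delta_b^T) \leq \wt(Z^{(1)}\delta_b^T + \delta_a Z^{(2)})$. There is no genuine obstacle here; the whole argument rests on the cancellation $L_a^T\delta_a = 0$ ensuring that the cross-term $\delta_a Z^{(2)}$ disappears when we a-slice with weights $L_a$, leaving only the desired $Z^{1,s}\delta_b^T$ before the triangle-inequality step.
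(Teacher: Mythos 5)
Your proof is correct and rests on the same key observation as the paper's: the term $\delta_a Z^{(2)}$ cancels when a-slices are summed over $\mathcal{C}(L_a)$ because $L_a^T\delta_a = 0$. The paper packages this as a proof by contradiction with an element-wise parity argument, while you state the underlying algebraic identity $\sum_{i\in\mathcal{C}(L_a)} M_i = Z^{1,s}\delta_b^T$ directly and then apply the Hamming-weight triangle inequality; this is a slightly cleaner presentation of the same idea.
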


\begin{proof}
    We prove this by contradiction. Consider the elements of $Z^{1,s}\delta_b^T$ and $Z^{(1)}\delta_b^T + \delta_aZ^{(2)}$. If $\mathrm{wt}(Z^{1,s}\delta_b^T) > \mathrm{wt}\left(Z^{(1)}\delta_b^T + \delta_aZ^{(2)}\right)$, then there exists $j,k$ such that
    \begin{eqnarray}
    \label{eq:weight_checks_rowspace_3d}
        \mathrm{wt}((Z^{1,s}\delta_b^T)_{j,k}) > \mathrm{wt}\left( (Z^{(1)}\delta_b^T + \delta_aZ^{(2)})_{j,k}\right).
    \end{eqnarray}
    We will prove that Eq.~\eqref{eq:weight_checks_rowspace_3d} cannot be satisfied, thereby proving the claim.

    Without loss of generality, consider the $j,k$ elements. Notice that $(Z^{1,s}\delta_b^T)_{jk}$ is a bit, thus the weight of it is either 0 or 1. If $\mathrm{wt}((Z^{1,s}\delta_b^T)_{jk})= 0$, Eq.~(\ref{eq:weight_checks_rowspace_3d}) cannot be satisfied. Therefore, we consider the $\mathrm{wt}((Z^{1,s}\delta_b^T)_{jk})= 1$ case.
    
    If $\mathrm{wt}((Z^{1,s}\delta_b^T)_{jk}) = 1$,  $(Z^{(1)}\delta_b^T)_{jk}$ must contain an odd number of ones on the a-slice in the set $\mathcal{C}(L_a)$. Otherwise, we would have had $\mathrm{wt}((Z^{1,s}\delta_b^T)_{jk}) = 0$, which is a contradiction. On the other hand, we remark that $(\delta_aZ^{(2)})_{jk}$ consists of an even number of ones on the a-slice set $\mathcal{C}(L_a)$. To see why, note that each $jk$-line of $\delta_aZ^{(2)}$ is a linear combination of the checks in $\delta_a^T$. Because $L_a$ is a codeword of $\delta_a^T$, $(\delta_aZ^{(2)})_{jk} L_a=0$. Therefore, in the $\mathrm{wt}((Z^{1,s}\delta_b^T)_{jk}) = 1$ case, the number of ones in $Z^{(1)}\delta_b^T + \delta_aZ^{(2)}$ on the $jk$-line and the a-slice in $\mathcal{C}(L_a)$ is odd. 

    Thus we conclude $\mathrm{wt}\left( (Z^{(1)}\delta_b^T + \delta_aZ^{(2)})_{jk}\right)\geq 1$. As such, Eq.~(\ref{eq:weight_checks_rowspace_3d}) cannot be satisfied. This completes the proof.
\end{proof}

With the same logic, one can prove that
\begin{lemma}
\label{lemma:pauli_weight_reduction_3d_2}
    \begin{equation}
        \mathrm{wt}(Z^{1,s}\delta_c^T) \leqslant \mathrm{wt}\left(Z^{(1)}\delta_c^T + \delta_aZ^{(3)}\right).
    \end{equation}
\end{lemma}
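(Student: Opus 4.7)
My plan is to mirror the proof of Lemma~\ref{lemma:pauli_weight_reduction_3d} almost verbatim, substituting $\delta_c$ for $\delta_b$ and $Z^{(3)}$ for $Z^{(2)}$ throughout. The reason the substitution is legitimate is structural: $\delta_c^T$ acts on the third tensor mode (the $c$-mode) of $Z^{(1)}$, whereas the a-slice summation defining $Z^{1,s}$ acts on the first tensor mode, so the two operations are on disjoint modes and commute. Consequently
\begin{equation}
(Z^{1,s}\delta_c^T)_{jk}\;=\;\sum_{i\in\mathcal{C}(L_a)}(Z^{(1)}\delta_c^T)_{ijk}\pmod{2}.
\end{equation}

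I would then argue by contradiction exactly as before. Assume $\mathrm{wt}(Z^{1,s}\delta_c^T)>\mathrm{wt}(Z^{(1)}\delta_c^T+\delta_aZ^{(3)})$. Then there exists a coordinate $(j,k)$ with $(Z^{1,s}\delta_c^T)_{jk}=1$ yet $(Z^{(1)}\delta_c^T+\delta_aZ^{(3)})_{ijk}=0$ for every $i\in\mathcal{C}(L_a)$. The identity above immediately forces the set $\{(Z^{(1)}\delta_c^T)_{ijk}:i\in\mathcal{C}(L_a)\}$ to contain an odd number of ones.

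The key parity input needed to close the argument is that the $\delta_aZ^{(3)}$ term contributes an even number of ones along the same a-slices:
\begin{equation}
\sum_{i\in\mathcal{C}(L_a)}(\delta_aZ^{(3)})_{ijk}\;=\;\sum_l(L_a^T\delta_a)_l\,Z^{(3)}_{ljk}\;=\;0,
\end{equation}
which holds because $L_a$ is a codeword of $\delta_a^T$ and hence $L_a^T\delta_a=0$ in $\mathbb{F}_2$. Adding this even sum to the previously established odd sum preserves the odd parity, so $\sum_{i\in\mathcal{C}(L_a)}(Z^{(1)}\delta_c^T+\delta_aZ^{(3)})_{ijk}=1$ in $\mathbb{F}_2$. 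This contradicts the assumed vanishing of every summand and establishes the claimed inequality.

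I do not anticipate any substantive obstacle; the entire argument is parity bookkeeping. The only point demanding care is keeping track of which matrix acts on which tensor mode: $\delta_a$ acts on the first (a-mode) index of $Z^{(3)}$, $\delta_c^T$ acts on the third (c-mode) index of $Z^{(1)}$, and the slicing by $\mathcal{C}(L_a)$ again concerns the first index. Because the slicing and $\delta_c^T$ act on disjoint modes, the commutation step that initiates the proof goes through unchanged, and the rest of the reasoning from Lemma~\ref{lemma:pauli_weight_reduction_3d} transfers without modification.
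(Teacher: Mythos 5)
Your proof is correct and follows essentially the same route the paper intends: the paper simply states ``With the same logic, one can prove that'' and defers to Lemma~\ref{lemma:pauli_weight_reduction_3d}, and your argument spells out exactly the substitution $\delta_b\mapsto\delta_c$, $Z^{(2)}\mapsto Z^{(3)}$ that makes this work. In particular, you correctly identify the two structural facts on which the argument rests — that the a-slice summation commutes with the $\delta_c^T$ action because they act on disjoint tensor modes, and that $\sum_{i\in\mathcal{C}(L_a)}(\delta_aZ^{(3)})_{ijk}=0$ because $L_a^T\delta_a=0$ — which are precisely the facts the paper uses in Lemma~\ref{lemma:pauli_weight_reduction_3d}.
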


Now we are in a position to prove Lemma~\ref{lemma:path_within_subset_3d}. We do so by identifying a path $r'=\left\{P_0', P_1', \cdots, P_F'\right\}$ that is only supported on $r_1^\alpha \otimes V_2\otimes V_3$ while ensuring that $\epsilon_{\max }\left(r^{\prime}\right) \leqslant \epsilon_{\max }(r)$, for any path $r$ supported on the entire qubit set. Lemma~\ref{lemma:pauli_weight_reduction_3d} and Lemma~\ref{lemma:pauli_weight_reduction_3d_2} suggest a way to deform the path $r$ to the one supported on $r_1^\alpha \otimes V_2\otimes V_3$. 

Without loss of generality, let $r=\left\{P_0, P_1, \cdots, P_F\right\}$ be a path that $\Delta(L) = \epsilon_{\max}(r)$, with $P_0 = I$ and $P_F = L$. We consider a Pauli $P_i$ in the path $r$. It will be convenient to work in its binary representation, written as $(p^{(1)}, p^{(2)}, p^{(3)})^T$, where $p^{(1)}$,$p^{(2)}$ and $p^{(3)}$ represent the Paulis supported on $C_1\otimes V_2\otimes V_3$, $V_1\otimes C_2 \otimes V_3$ and $V_1 \times V_2 \otimes C_3$, respectively. First, we remove all the Paulis supported on $V_1\otimes C_2 \otimes V_3$ and $V_1 \times V_2 \otimes C_3$ by setting $p^{(2)}$ and $p^{(3)}$ as the zero vector. Next, we apply the following transformations to $p^{(1)}$. We reshape $p^{(1)}$ and refer to the reshaped 3D matrix as $P^{(1)}$. Let $L_a$ be a codeword of $\delta_a^T$ such that $(L_a)_{\alpha}=1$. Consider a set of a-slices in $P^{(1)}$ corresponding to the index set $\mathcal{C}(L_a)$. Denoting each a-slice as $r_1^k$, where $k \in \mathcal{C}(L_a)$, we update the a-slice $r_1^{\alpha}$ in the following way:
\begin{equation}
    r_1^{\alpha} \to r_1^{\alpha} + \sum_{k\in \mathcal{C}(L_a) \setminus \{ \alpha\}} r_1^k.
    \label{eq:column_update_3d}
\end{equation}
Afterward, the other a-slices of $P^{(1)}$ are set to the zero matrices. This yields the deformed Pauli operator $P_i'$.

By construction, the resulting $P_i^{\prime}$ is supported on $r_1^\alpha \otimes V_2 \otimes V_3$. Note that $r'$ is a valid path because $\mathrm{wt}(P_i^{\prime} P_{i+1}^{\prime}) \leq 1$ for every $i$. Also, because $P_F' = P_F = L$, $r'$ is still a path for $L$. Moreover, because $\epsilon\left(P_i^{\prime}\right) \leqslant \epsilon\left(P_i\right)$ for all $i$ [Lemma~\ref{lemma:pauli_weight_reduction_3d} and ~\ref{lemma:pauli_weight_reduction_3d_2}] as
\begin{eqnarray}
    \mathrm{wt}(Z^{1,s}\delta_c^T) + \mathrm{wt}(Z^{1,s}\delta_b^T)
    \leqslant \epsilon(z) = \mathrm{wt}\left(Z^{(1)}\delta_b^T + \delta_aZ^{(2)}\right) \nonumber \\
    + \mathrm{wt}\left(Z^{(1)}\delta_c^T + \delta_aZ^{(3)}\right) + \text{wt}\left(Z^{(2)}\delta_c^T + Z^{(3)}\delta_b^T \right), \nonumber \\
\end{eqnarray}
we have $\epsilon_{\max }\left(r^{\prime}\right) \leqslant \epsilon_{\max }(r)$. Both $r'$ and $r$ are paths for $L$, by definition  $\epsilon_{\max }\left(r^{\prime}\right) \geqslant \epsilon_{\max }(r)$, we conclude $\epsilon_{\max }\left(r^{\prime}\right) = \epsilon_{\max }(r)$.  Thus, by deforming $r$, we obtained a new path $r'$ supported on $r_1^\alpha \otimes V_2 \otimes V_3$ that yields the energy barrier $\Delta(L)$.

This argument can be applied to prove similar lower bounds for logical operators on $V_1 \otimes r_2^\beta \otimes V_3$ and $V_1 \otimes V_2 \otimes r_3^\gamma$. To conclude, for any elementary canonical logical operator $L$ supported on $r_1^\alpha \otimes V_2 \otimes V_3$ (resp. $V_1 \otimes r_2^\beta \otimes V_3$ and $V_1 \otimes V_2 \otimes r_3^\gamma$), their energy barrier can be given by a path supported on $r_1^\alpha \otimes V_2 \otimes V_3$ (resp. $V_1 \otimes r_2^\beta \otimes V_3$ and $V_1 \otimes V_2 \otimes r_3^\gamma$).

\subsection{Proof of Lemma~\ref{lemma:energy_barrier_of_composite_logical_3d}}
\label{sec:proofoflemma15}

Any nontrivial canonical logical-Z operator $L$ belongs to one of the following categories:
\begin{itemize}
    \item Case 1: $L$ is supported solely on the qubit subset $C_1 \otimes V_2 \otimes V_3$.
    \item Case 2: $L$ is supported solely on the qubit subset $V_1 \otimes C_2 \otimes V_3$.
    \item Case 3: $L$ is supported solely on the qubit subset $V_1 \otimes V_2 \otimes C_3$.
    \item Case 4: $L$ is supported on multiple subsets.
\end{itemize}
We will focus solely on Case 1. Case 2 and Case 3 can be analyzed similarly by considering subsets $V_1 \otimes C_2 \otimes V_3$ and $V_1 \otimes V_2 \otimes C_3$, while case 4 can be treated as Case 1, 2 or 3.

Without loss of generality, let the energy barrier of $L$ be attained by a path $r = \{P_0, P_1, \cdots, P_F\}$, with $P_0 = I$ and $P_F = L$. Similar to the approach taken in Lemma~\ref{lemma:path_within_subset_3d}, we aim to deform the path $r$ to the one supported on $r_1^k \otimes V_2 \otimes V_3$ for some $k$, such that the energy barrier of the deformed path lower bounds that of the $r$.

The deformation works in the same way as in the proof of Lemma~\ref{lemma:path_within_subset_3d}. We describe this procedure again for the readers' convenience.  Let $L_a$ be a nontrivial codeword of $\delta_a^T$ and $\mathcal{C}(L_a)$ be its corresponding column index set. We consider a binary representation of a Pauli $P_i$, written as $(p^{(1)}, p^{(2)}, p^{(3)})^T$. As in the proof of Lemma~\ref{lemma:path_within_subset_3d}, we remove the Paulis supported on $V_1 \otimes C_2 \otimes V_3$ and $V_1 \otimes V_2 \otimes C_3$ by setting $p^{(2)}, p^{(3)}$ as the zero vector. Next, reshape $p^{(1)}$ into a 3D matrix $P^{(1)}$ and update its columns in the following way. Choose $\alpha \in \mathcal{C}(L_a)$. This column is updated as Eq.~\eqref{eq:column_update_3d}. The other columns of $P^{(1)}$ are converted to zero vectors.

Thanks to Lemma~\ref{lemma:pauli_weight_reduction_3d} and ~\ref{lemma:pauli_weight_reduction_3d_2}, we obtain a new path $r'=\{P_0', P_1', \cdots, P_F'\}$ supported on $r_1^\alpha \otimes V_2 \otimes V_3$ with the property $\epsilon_{\max}\left(r'\right) \leqslant \epsilon_{\max}\left(r\right)$. Note that $P_F'$, in the binary representation, is of the form $r_1^\alpha \otimes \bar{y}\otimes \bar{z}$, where $\bar{y}, \bar{z}$ are codewords of $\delta_b$ and $\delta_c$ respectively. Therefore, $P_F'$ is either a nontrivial elementary canonical logical operator or the identity. In the latter case, $P_F'$ is the trivial codeword (zero vector) in the binary representation. Henceforth, we denote this as $L' = P_F'$.

If $L'$ is nontrivial, we can use the relation between the energy barriers of $L$ and $L'$:
\begin{eqnarray}
    \Delta(L) = \epsilon_{\max}(r) \geqslant \epsilon_{\max}(r') \geqslant \Delta(L').
\end{eqnarray}
Because $L'$ is an elementary logical operator, $\Delta(L)$ is greater or equal to the minimum energy barrier of elementary canonical logical operators. Thus, if $L'$ is nontrivial, the proof follows immediately.

If $L'$ is an identity, the above argument does not work. Fortunately, it turns out that for any $L'$, one can choose $L_a$ (the codeword of $\delta_a^T$ used in the current proof) such that $L'$ is not an identity.

Without loss of generality, consider a canonical logical-Z operator $L$, expressed as 
\begin{equation}
    L = \sum_{i,j,k} \alpha_{ijk}(x_i\otimes \bar{y}_j \otimes \bar{z}_k, 0, 0)^T,
\end{equation}
where (i) $\delta_b \bar{y}_j = \delta_c \bar{z}_k =0$ and (ii) $x_i \notin \operatorname{Im}\left(\delta_a\right)$ are unit vectors. If a given path ends with $L$, its deformation (using Eq.~\eqref{eq:column_update_3d}) yields the following operator $L'$:
\begin{equation}
    L' = \sum_{j,k} c_{jk}(x_\alpha \otimes \bar{y}_j \otimes \bar{z}_k, 0, 0)^T,
\end{equation}
where $\alpha\in \mathcal{C}(L_a)$ and  $c_{jk}$ is defined as 
\begin{equation}
    c_{jk} := \sum_{i\in \mathcal{C}(L_a)} \alpha_{ikj}.
\end{equation}
Note that $L'$ is trivial if and only if $c_{jk}=0$ for all $j,k$. Therefore, we aim to prove that there exists a choice of $L_a$ such that $c_{jk}=1$ for at least one choice of $j,k$.

Let us prove the contrapositive. Suppose $c_{jk}=0$ for all $k$, for any choice of $L_a$. Consider the following vector:
\begin{equation}
    u_{jk} := \sum_{i} \alpha_{ijk} x_i.
\end{equation}
Note that $c_{jk} = u_{jk}^T L_a$. By our assumption, $c_{jk}=0$ for any choice of $L_a$, so the inner product of $u_{jk}$ with any codeword of $\delta_a^T$ must be zero. On the other hand, $u_{jk}$, if it is nonzero, must lie outside of $\text{Im}(\delta_a)$ by the definition of the $x_i$. Thus, $u_{jk}$ is not an element of the row space of $\delta_a^T$. However, this is a contradiction for the following reasons. For a linear code, let $H$ and $G$ be the parity check matrix and the generator matrix. Then $v^TG=0$ if and only if $v$ is a vector in the row space of $H$. In our setup, if $c_{jk} = 0$ for any $L_a$, then $u_{jk}^T G=0$.  This implies that $u_{jk}$ must be in the row space of $\delta_a^T$, which contradicts the fact that it lies outside of $\text{Im}(\delta_a)$. To conclude, there must be at least one choice of $k, j$ such that $c_{jk}=1$. Thus, there is always a choice of $L_a$ such that $L'$ is not an identity, thereby proving the claim.

Case 2 and Case 3 can be analyzed similarly to Case 1 by considering subset $V_1 \otimes C_2 \otimes V_3$ and $V_1 \otimes V_2 \otimes C_3$. For Case 4, one can treat it just as Case 1, Case 2, or Case 3. For example, when treating it as Case 1, the logical operator $L$ has a nontrivial part in the subset $C_1 \otimes V_2 \otimes V_3$. One can prove there exists a codeword $L_a$ of $\delta_a^T$, such that after the deformation, the resulting $L'$ is a nontrivial elementary canonical logical operator. 

\subsection{Discussion on Conjecture~\ref{mainconjt}}

Let $L_a$ and $L_b$ denote the logical operators of $\delta_a$ and $\delta_b$, with distances $d_a$, $d_b$ and energy barriers $E_a$, $E_b$, respectively. Our goal is to determine the energy barrier of the operator $L_c = L_a \otimes L_b$, which serves as a logical operator for $\delta_c$.

To compute the energy barrier of $L_c$, we consider a path $r = \{P_0, P_1, \cdots, P_F\}$ with $P_0 = I$ and $P_F=L_c$. Then
\begin{equation}
    \Delta(L_c) = \min \{ \epsilon_{\max}(r): r \in w(0, L_c)\}.
\end{equation}
Note that $r$ is a path that walks on the bit of code $\delta_c$. Given a $P_\ell \in r$, the energy (the number of violated checks) of $P_\ell$ is
\begin{eqnarray}
    \epsilon(P_\ell) &=& \wt(\delta_cP_\ell) \nonumber \\
    &=& \wt \left(\left(\begin{array}{c}
         \delta_a \otimes I_{n_b} \\
         I_{n_a} \otimes \delta_b
    \end{array}\right) P_\ell \right) \nonumber \\
    &=& \wt((\delta_a \otimes I_{n_b}) P_\ell) + \wt((I_{n_a}\otimes\delta_b)P_\ell).
    \label{eq:energy_pell}
\end{eqnarray}
To proceed, one can express the identity matrices $I_{n_a}$ and $I_{n_b}$ as
\begin{equation}
    I_{n_a} = \sum_{j=1}^{n_a} w_jw_j^T, \quad I_{n_b} = \sum_{k=1}^{n_b} u_ku_k^T.
\end{equation}
Eq.~(\ref{eq:energy_pell}) can then be further processed as
\begin{equation}
\begin{aligned}
    \epsilon(P_\ell) &= \mathrm{wt}\left(\delta_a \otimes I_{n_b} P_\ell\right) + \mathrm{wt}\left( I_{n_a} \otimes \delta_b P_\ell\right) \\
    &= \mathrm{wt}\left(\delta_a \otimes \sum_{k=1}^{n_b} u_ku_k^T P_\ell\right) + \mathrm{wt}\left( \sum_{j=1}^{n_a} w_jw_j^T \otimes \delta_b P_\ell\right) \\
    &= \mathrm{wt}\left(\sum_{k=1}^{n_a} (\delta_a \otimes  u_k)( I_{n_a} \otimes u_k^T) P_\ell\right) \\
    & \ \ + \mathrm{wt}\left( \sum_{j=1}^{n_a} (w_j\otimes \delta_b)(w_j^T \otimes I_{n_b}) P_\ell\right) \\
    &= \mathrm{wt}\left(\sum_{k=1}^{n_b} (\delta_a \otimes  u_k)P_\ell^k\right) + \mathrm{wt}\left( \sum_{j=1}^{n_a} (w_j\otimes \delta_b)P_\ell^j\right) \\
    &= \sum_{k=1}^{n_b} \mathrm{wt}\left(\delta_aP_\ell^k\right) + \sum_{j=1}^{n_a}\mathrm{wt}\left(  \delta_bP_\ell^j\right),
\end{aligned}
\end{equation}
where
\begin{equation}
    P_\ell^k = ( I_{n_a} \otimes u_k^T) P_\ell, \quad P_\ell^j = (w_j^T \otimes I_{n_b}) P_\ell.
\end{equation}

To calculate the energy barrier of $L_c$, we minimize $\epsilon_{\max }\left(P_{\ell}\right)$ over all possible $r$. Here, $\epsilon\left(P_{\ell}\right)$ is the sum of two components: $\sum_{k=1}^{n_b} \mathrm{wt}\left(\delta_a P_{\ell}^k\right)$ and $\sum_{j=1}^{n_a} \mathrm{wt}\left(\delta_b P_{\ell}^j\right)$. Although these two components depend on each other, we first consider minimizing them independently.

We can first minimize $\sum_{k=1}^{n_b} \mathrm{wt}\left(\delta_a P_{\ell}^k\right)$, followed by minimizing $\sum_{j=1}^{n_a} \mathrm{wt}\left(\delta_b P_{\ell}^j\right)$. To minimize $\sum_{k=1}^{n_b} \mathrm{wt}\left(\delta_a P_{\ell}^k\right)$, we apply $L_a$ along the path $r_a$, which corresponds to the energy barrier of $L_a$ for each $k$ sequentially. The maximum energy contribution for this part is $E_a$. Subsequently, we minimize $\sum_{j=1}^{n_a} \mathrm{wt}\left(\delta_b P_{\ell}^j\right)$ by setting the order of $k$ according to the path $r_b$ that gives the energy barrier of $L_b$. The highest energy contribution for this part is $d_a E_b$. In particular, when the second part reaches $d_a E_b$, the first part has an energy of 0. Therefore, this path results in an energy barrier of $\epsilon_{\max }(r)=d_a E_b$.
Alternatively, we can reverse the process by first minimizing $\sum_{j=1}^{n_a} \mathrm{wt}\left(\delta_b P_{\ell}^j\right)$ and then $\sum_{k=1}^{n_b} \mathrm{wt}\left(\delta_a P_{\ell}^k\right)$. Following a similar analysis, this approach leads to an energy barrier of $\epsilon_{\max }\left(r^{\prime}\right)=d_b E_a$.

To complete the argument, one needs to prove that there is no path with an energy barrier lower than $\min \left[d_a E_b, d_b E_a\right]$. We developed a program to verify the conjecture~\footnote{Github repository: https://github.com/Guangqi-Phys/Energy-Barrier-of-tensor-product-codes.git}. This program generates random codes with their logical operators, computes their energy barriers and distances, constructs the tensor product code, and calculates its energy barrier.

Computing the energy barrier of a general code is hard. Our implementation uses a best-first search algorithm that may not always identify the absolute minimum. In our tests across many random codes, most results support the conjecture. In the rare cases where the energy barrier of the tensor product code differs from our product-based estimate, the discrepancy is minor and can be attributed to computational limitations.

While our numerical simulations strongly support the conjecture, one needs to develop alternative combinatorial methods for a rigorous proof. We believe that solving this problem will not only strengthen our current findings but also have broader applications for tensor product codes, particularly given their close relationship to locally testable codes.

\bibliographystyle{apsrev4-1-etal-title}
\bibliography{ref}

\end{document}